\documentclass[journal,twoside,twocolumn]{IEEEtran}

% Compile LATEX to PDF

% IEEE Cite Package
\usepackage{cite}

% IEEE Amsmath Requirements
\usepackage[cmex10]{amsmath}
\interdisplaylinepenalty=2500

% IEEE Array Package
\usepackage{array}

% IEEE Graphics Package
\usepackage[pdftex]{graphicx}

% Additional Packages
\usepackage{amsfonts,amssymb,amsthm,bbm,verbatim}
\allowdisplaybreaks % to allow display breaks in align environment
\usepackage{mathrsfs,dsfont} % for \mathscr{} and \mathds{} fonts
\usepackage{hyperref}
\hypersetup{colorlinks = true, linkcolor = blue, citecolor = blue}
\usepackage[all]{hypcap} % for going to the top of an image when a figure reference is clicked
\usepackage[usenames,dvipsnames]{color}
\usepackage[]{inputenc}
\usepackage{physics} % for Re part notation
\usepackage{eqparbox} % for correct spacing in overset equations
\frenchspacing % ensures the periods do not have extra space after them

% Definitions
\def\X{{\mathcal{X}}}
\def\Y{{\mathcal{Y}}}
\def\M{{\mathcal{M}}} 
\def\Cperm{{C_{\mathsf{perm}}}} 
 
\def\Simplex{{\mathcal{P}}}
\def\E{{\mathbb{E}}}
\def\VAR{{\mathbb{V}\mathbb{A}\mathbb{R}}}
\def\R{{\mathbb{R}}}
\def\P{{\mathbb{P}}}
\def\N{{\mathbb{N}}}
\def\Z{{\mathbb{Z}}}
\def\1{{\textbf{1}}}
\def\0{{\textbf{0}}}
\def\I{{\mathds{1}}}
\def\Ber{{\mathsf{Ber}}}
\def\bin{{\mathsf{bin}}}
\def\BSC{{\mathsf{BSC}}}
\def\BEC{{\mathsf{BEC}}}
\def\qEC{{q\text{-}\mathsf{EC}}}
\def\qSC{{q\text{-}\mathsf{SC}}}
\DeclareMathOperator*{\argmin}{arg\,min}
\DeclareMathOperator*{\argmax}{arg\,max}
\DeclareMathOperator*{\ext}{ext}
\newcommand{\T}{\mathrm{T}}

% Theorem Styles
\newtheorem{theorem}{Theorem}
\newtheorem{lemma}{Lemma}
\newtheorem{proposition}{Proposition}
%\newtheorem{corollary}{Corollary}

% Definition Styles
\theoremstyle{definition}
\newtheorem{definition}{Definition}
\newtheorem{conjecture}{Conjecture}

\begin{document}

\bstctlcite{IEEEexample:BSTcontrol} % to ensure that adjacent author names that are the same are not dashed in the bibliography

\title{Coding Theorems for Noisy Permutation Channels}

\author{Anuran~Makur%
\thanks{This work was presented in part at the 2020 IEEE International Symposium on Information Theory \cite{Makur2020Conf}, and a very preliminary version of this work was presented in part at the 2018 56th Annual Allerton Conference on Communication, Control, and Computing \cite{Makur2018}.}%
\thanks{A. Makur is with the Department of Electrical Engineering and Computer Science, Massachusetts Institute of Technology, Cambridge, MA 02139, USA (e-mail: a\_makur@mit.edu).}%
\thanks{Copyright (c) 2020 IEEE. Personal use of this material is permitted. However, permission to use this material for any other purposes must be obtained from the IEEE by sending a request to pubs-permissions@ieee.org.}}%

% make the title area
\maketitle

\begin{abstract}
In this paper, we formally define and analyze the class of noisy permutation channels. The noisy permutation channel model constitutes a standard discrete memoryless channel (DMC) followed by an independent random permutation that reorders the output codeword of the DMC. While coding theoretic aspects of this model have been studied extensively, particularly in the context of reliable communication in network settings where packets undergo transpositions, and closely related models of DNA based storage systems have also been analyzed recently, we initiate an information theoretic study of this model by defining an appropriate notion of \textit{noisy permutation channel capacity}. Specifically, on the achievability front, we prove a lower bound on the noisy permutation channel capacity of any DMC in terms of the rank of the stochastic matrix of the DMC. On the converse front, we establish two upper bounds on the noisy permutation channel capacity of any DMC whose stochastic matrix is strictly positive (entry-wise). Together, these bounds yield coding theorems that characterize the noisy permutation channel capacities of every strictly positive and ``full rank'' DMC, and our achievability proof yields a conceptually simple, computationally efficient, and capacity achieving coding scheme for such DMCs. Furthermore, we also demonstrate the relation between the well-known output degradation preorder over channels and noisy permutation channel capacity. In fact, the proof of one of our converse bounds exploits a degradation result that constructs a symmetric channel for any DMC such that the DMC is a degraded version of the symmetric channel. Finally, we illustrate some examples such as the special cases of binary symmetric channels and (general) erasure channels. Somewhat surprisingly, our results suggest that noisy permutation channel capacities are generally quite agnostic to the parameters that define the DMCs.
\end{abstract}

\begin{IEEEkeywords}
Permutation channel, channel capacity, degradation, second moment method, Doeblin minorization. 
\end{IEEEkeywords}

\tableofcontents
\hypersetup{linkcolor = red}

\section{Introduction}
\label{Introduction}

In this paper, we initiate an information theoretic study of the problem of reliable communication through noisy permutation channels by defining and analyzing a pertinent notion of information capacity for such channels. Noisy permutation channels refer to discrete memoryless channels (DMCs) followed by independent random permutation transformations that are applied to the entire blocklength of the output codeword. Such channels can be perceived as models of communication links in networks where packets are not delivered in sequence, and hence, the ordering of the packets does not carry any information. Moreover, they also bear a close resemblance to recently introduced models of deoxyribonucleic acid (DNA) based storage systems. The main contributions of this work are the following:
\begin{enumerate}
\item We formalize the notion of ``noisy permutation channel capacity'' of a DMC in Definition \ref{Def: Permutation Channel Capacity}, which captures, up to first order, the maximum number of messages than can be transmitted through a noisy permutation channel model with vanishing probability of error as the blocklength tends to infinity. (Although our formalism is quite natural, it has not appeared in the literature to our knowledge.)
\item We establish an achievability bound on the noisy permutation channel capacity of any DMC in terms of the rank of the DMC in Theorem \ref{Thm: Achievability Bound} by analyzing a conceptually simple and computationally tractable randomized coding scheme. Moreover, we also demonstrate an alternative proof of our achievability bound for DMCs that have rank $2$ in Proposition \ref{Prop: Achievability Bound for DMCs with Rank 2} by using the so called second moment method (in Lemmata \ref{Lemma: Second Moment Method} and \ref{Lemma: Testing between Converging Hypotheses}).
\item We prove two converse bounds on the noisy permutation channel capacity of any DMC that is strictly positive (entry-wise). The first bound, in Theorem \ref{Thm: Converse Bound I}, is in terms of the output alphabet size of the DMC, and the second bound, in Theorem \ref{Thm: Converse Bound II}, is in terms of the ``effective input alphabet'' size of the DMC. (Neither bound is uniformly better than the other.)
\item Using the aforementioned achievability and converse bounds, we exactly characterize the noisy permutation channel capacity of all strictly positive and ``full rank'' DMCs in Theorem \ref{Thm: Strictly Positive and Full Rank Channels}. Furthermore, we propound a candidate solution for the noisy permutation channel capacity of general strictly positive DMCs (regardless of their rank) in Conjecture \ref{Conj: Strictly Positive DMC Conjecture}.
\item To complement these results and assist in understanding them, we derive an intuitive monotonicity relation between the degradation preorder over channels and noisy permutation channel capacity in Theorem \ref{Thm: Comparison Bound via Degradation} (also see Theorem \ref{Thm: Comparison Bounds via Degradation}). Furthermore, we also construct symmetric channels that dominate given DMCs in the degradation sense in Proposition \ref{Prop: Degradation by Symmetric Channels}. This construction is utilized in the proof of Theorem \ref{Thm: Converse Bound II}.
\item Finally, we present exact characterizations of the noisy permutation channel capacities of several specific families of channels, e.g., binary symmetric channels in Proposition \ref{Prop: Permutation Channel Capacity of BSC} (cf. \cite[Theorem 3]{Makur2018}), channels with unit rank transition kernels in Proposition \ref{Prop: Unit Rank Stochastic Matrices}, and channels with permutation matrices as transition kernels in Proposition \ref{Prop: Permutation Stochastic Matrices}. Furthermore, we present bounds on the noisy permutation channel capacities of (general) erasure channels in Proposition \ref{Prop: Permutation Channel Capacity of q-EC}, and also propose related conjectures (see, e.g., Conjecture \ref{Conj: BEC Conjecture}). In particular, although Theorem \ref{Thm: Achievability Bound} yields our achievability bound for erasure channels, we show an alternative achievability proof in Proposition \ref{Prop: Permutation Channel Capacity of q-EC} by exploiting the classical notion of Doeblin minorization.
\end{enumerate}

The ensuing subsections provide some background literature to motivate our study, a formal description of the noisy permutation channel model, some additional notation that will be used throughout the paper, and an outline of the remainder of the paper.

\subsection{Related Literature and Motivation}

The setting of channel coding with transpositions, where the output codeword undergoes some reordering of its symbols, has been widely studied in the coding theory, communication networks, and molecular and biological communications communities. We briefly discuss some relevant literature from these three disciplines, each of which provides a compelling incentive to study noisy permutation channels. 

Firstly, in the coding theory literature, one earlier line of work concerned the construction of error-correcting codes that achieve capacity of the \textit{random deletion channel}, cf. \cite{DiggaviGrossglauser2001,Mitzenmacher2006,Metzner2009}. The random deletion channel operated on the codeword space by deleting each input codeword symbol independently with some probability $p \in (0,1)$, and copying it otherwise. As expounded in \cite[Section I]{Mitzenmacher2006}, with sufficiently large alphabet size $2^b$, where each symbol of the alphabet was construed as a packet with $b$ bits and $b = \Omega(\log(n))$ depended on the blocklength $n$, ``embedding sequence numbers into the transmitted symbols [turned] the deletion channel [into a memoryless] erasure channel.'' Since coding for erasure channels was well-understood, the intriguing question became to construct (nearly) capacity achieving codes for the random deletion channel using sufficiently large packet length $b$ (depending on $n$), but without embedding sequence numbers (see, e.g., \cite{Mitzenmacher2006,DiggaviGrossglauser2001}, and the references therein).\footnote{We also refer readers to the recent work \cite{Kudekaretal2017}, which proves that \emph{Reed-Muller codes} achieve capacity for erasure channels, and the references therein.} In particular, the author of \cite{Mitzenmacher2006} demonstrated that \textit{low density parity check (LDPC) codes} with verification-based decoding formed a computationally tractable coding scheme with these properties. Notably, this coding scheme also tolerated transpositions of packets that were not deleted in the process of transmission. Therefore, it was equivalently a coding scheme for a memoryless erasure channel followed by a random permutation block, albeit with an alphabet size that grew polynomially with the blocklength.

Broadly speaking, the results of \cite{Mitzenmacher2006,DiggaviGrossglauser2001} can be perceived as preliminary steps towards analyzing the fundamental limits of reliable communication through noisy permutation channels where the DMCs are erasure channels. Several other coding schemes for erasure permutation channels with sufficiently large alphabet size have also been developed in the literature. We refer readers to \cite{Metzner2009}, which builds upon the key conceptual ideas in \cite{Mitzenmacher2006}, and the references therein for other examples of such coding schemes.

Secondly, this discussion concerning the random deletion channel has a patent counterpart in the (closely related) communication networks literature. Indeed, in the context of the well-known \textit{store-and-forward} transmission scheme for packet networks, packet losses (or deletions) were typically corrected using \textit{Reed-Solomon codes} which assumed that each packet carried a header with a sequence number\textemdash see, e.g., \cite{XuZhang2002}, \cite[Section I]{GadouleauGoupil2010}, and the references therein. Akin to the random deletion channel setting, this simplified the error correction problem since packet losses could be treated as erasures. However, ``motivated by networks whose topologies change over time, or where several routes with unequal delays are available to transmit the data,'' the authors of \cite{GadouleauGoupil2010} illustrated that packet errors and losses could also be corrected using binary codes under a channel model where the impaired or lost packets were randomly permuted, and the packets were not indexed with sequence numbers. Such work can also be construed as developing codes for specific kinds of noisy permutation channels. 

\begin{figure}[t]
\centering
\includegraphics[trim = 20mm 50mm 10mm 30mm, clip, width=\linewidth]{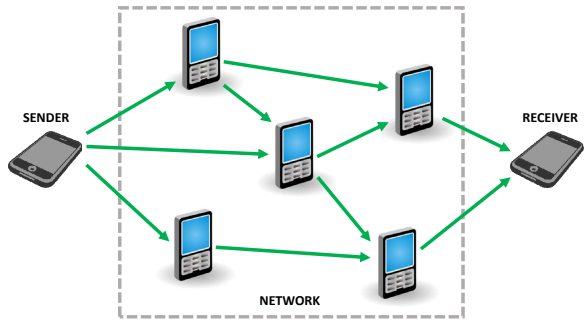} 
\caption{Illustration of point-to-point communication between a sender and a receiver through a mobile ad hoc network (MANET).}
\label{Figure: Real Network}
\end{figure}

In general, the noisy permutation channel model in subsection \ref{Permutation Channel Model} is a simple and useful abstraction for point-to-point communication between a source and a receiver in various network settings. For instance, when information is transmitted using a lower level \textit{multipath routed network} (see, e.g., Figure \ref{Figure: Real Network}), the set of all possible packets make up the channel input alphabet with each packet representing a different symbol (as mentioned earlier), and any context specific packet impairments are represented by the DMC in the model. Furthermore, since the packets (or symbols) can take different paths to the receiver in such a network, they may arrive at the destination out-of-order due to different delay profiles in the different paths. This out-of-order delivery of packets is captured by the random permutation transformation in the model. Several other aspects of noisy permutation channels have also been investigated in the communication networks literature. For example, the authors of \cite{WalshWeberMaina2009} established rate-delay tradeoffs for multipath routed networks, although they neglected to account for packet impairments, such as deletions, in their analysis for simplicity.

More recently, inspired by packet networks such as mobile ad hoc networks (where the network topology changes over time)\textemdash see Figure \ref{Figure: Real Network}, and heavily loaded datagram-based networks (where packets are often re-routed for load balancing purposes), the authors of \cite{KovacevicVukobratovic2013,KovacevicVukobratovic2015,KovacevicTan2018a} have considered the general problem of coding in channels where the codeword undergoes a random permutation and is subjected to impairments such as insertions, deletions, substitutions, and erasures. As stated in \cite[Section I]{KovacevicVukobratovic2013}, the basic strategy to reliably communicate across a channel that applies a transformation to its codewords is to ``encode the information in an object that is invariant under the given transformation.'' In the case of noisy permutation channels, the appropriate codes are the so called \textit{multiset codes}, where the codewords are characterized by their empirical distribution over the underlying alphabet. The existence of certain perfect multiset codes is established in \cite{KovacevicVukobratovic2015}, and several other multiset code constructions based on lattices and \textit{Sidon sets} are analyzed in \cite{KovacevicTan2018a}. 

Thirdly, an alternative motivation for analyzing noisy permutation channels stems from research at the intersection of computational biology and information theory on \textit{DNA based storage systems}, cf. \cite{Yazdietal2015,KiahPuleoMilenkovic2016,Heckeletal2017,KovacevicTan2018b,ShomoronyHeckel2019}. For example, the authors of \cite{Heckeletal2017} examined the storage capacity of systems where the source is encoded using DNA molecules. In their model, source data was encoded into codeword strings (or DNA molecules) consisting of letters from an alphabet of four nucleobases, and short fragments of these codewords were then cached in an unordered fashion akin to the effect of the random permutation in our noisy permutation channel model. The receiver read the encoded data by shotgun sequencing, or equivalently, by randomly sampling the stored and unordered fragments. While the unordered caching aspect of this model resembles our model, as stated in \cite[Section I-B]{KovacevicTan2018a}, this storage model also differs from our model since the receiver samples the stored codewords with replacement and without errors.

A very closely related DNA based storage model to \cite{Heckeletal2017}, known as the \emph{noisy shuffling channel}, is investigated in \cite{ShomoronyHeckel2019}. Specifically, in order to represent the corruption of DNA molecules during ``synthesis, sequencing, and\dots storage,'' the authors of \cite{ShomoronyHeckel2019} studied the storage capacity of a model where DNA codewords first experienced the deleterious effects of a DMC (e.g., a binary symmetric channel), and were then fragmented, and subsequently, the fragments were randomly permuted. (Unlike \cite{Heckeletal2017}, the receiver had access to all the permuted fragments in this model for simplicity.) The DNA based storage model in \cite{ShomoronyHeckel2019} is much closer to our noisy permutation channel model than the model in \cite{Heckeletal2017}. However, in contrast to our model, both \cite{Heckeletal2017} and \cite{ShomoronyHeckel2019} assume that the lengths of the codeword fragments (which are permuted) grow logarithmically with the number of fragments. We refer readers to \cite{Yazdietal2015} for a broader overview of DNA based storage systems, and to \cite{KiahPuleoMilenkovic2016,KovacevicTan2018b}, and the references therein for other examples of codes for such systems. Moreover, we also refer readers to the comprehensive bibliography in \cite{KovacevicTan2018a} for further related literature on noisy permutation channels.  

Finally, it is worth re-emphasizing that the noisy permutation channel model in subsection \ref{Permutation Channel Model} may be regarded as a variant or generalization of the models described above. More precisely, the analysis in \cite{Mitzenmacher2006,DiggaviGrossglauser2001} pertains to erasure permutation channels where the alphabet size grows polynomially with the blocklength, the work in \cite{GadouleauGoupil2010,KovacevicVukobratovic2013,KovacevicVukobratovic2015,KovacevicTan2018a} is concerned with various codes for specific noisy permutation channels, and the focus of \cite{ShomoronyHeckel2019} is on noisy shuffling channels that randomly permute fragments of codewords whose lengths scale logarithmically with the number of fragments. In comparison, our results on noisy permutation channels in this paper consider much broader classes of DMCs, and assume that alphabet sizes are constant with respect to the blocklength, or alternatively, that fragment lengths are constant with respect to the number of fragments. (Note that this latter assumption ensures that we cannot add sequence numbers to packets in order to transform our problem into one of classical coding.) 

\begin{figure*}[ht]
\centering
\includegraphics[trim = 0mm 100mm 0mm 80mm, clip, width=\linewidth]{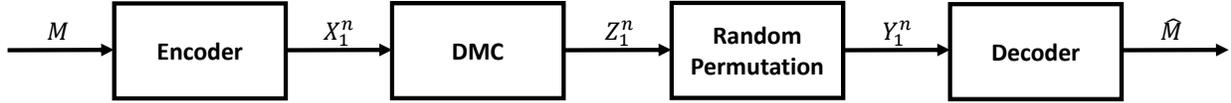} 
\caption{Illustration of a communication system with a DMC followed by a random permutation.}
\label{Figure: Permutation Channel}
\end{figure*}

Furthermore, as the discussion heretofore reveals, the majority of the literature on noisy permutation channels analyzes its coding theoretic aspects. In contrast, we approach these channels from a purely information theoretic perspective. To our knowledge, such a systematic analysis has not been undertaken until now, and thus, there are no known results on the information capacity of the noisy permutation channel model described in the next subsection. (Indeed, while the aforementioned references \cite{DiggaviGrossglauser2001}, \cite{Heckeletal2017}, and \cite{ShomoronyHeckel2019} have a more information theoretic focus, they analyze different models to ours.) In this paper, we will take some first steps towards a complete understanding of the information capacity of noisy permutation channels. Rather interestingly, our main achievability proof will automatically yield computationally tractable codes for reliable communication through certain noisy permutation channels, thereby rendering the need to develop conceptually sophisticated coding schemes for these channels futile when (theoretically) achieving noisy permutation channel capacity is the sole objective. 

\subsection{Noisy Permutation Channel Model}
\label{Permutation Channel Model}

We define the point-to-point noisy permutation channel model in analogy with standard information theoretic definitions, cf. \cite[Section 7.5]{CoverThomas2006}. Let $n \in \N \triangleq \{1,2,3,\dots\}$ denote a fixed blocklength, $M \in \M \triangleq \{1,\dots,|\M|\}$ be a \textit{message} random variable that is drawn uniformly from the message set $\M$, $f_n:\M \rightarrow \X^n$ be a (possibly randomized) \textit{encoder}, where $\X$ is the finite input alphabet of the channel with $|\X| \geq 2$, and $g_n:\Y^n \rightarrow \M \cup\! \{\mathsf{e}\}$ be a (possibly randomized) \textit{decoder}, where $\Y$ is the finite output alphabet of the channel with $|\Y| \geq 2$ and $\mathsf{e}$ denotes an additional ``error symbol.'' The message $M$ is first encoded into a codeword $X_1^n = f_n(M)$, where each $X_i \in \X$, and we use the notation $X_i^j \triangleq (X_i,\dots,X_j)$ for $i < j$. This codeword is transmitted through a (stationary) \textit{discrete memoryless channel} defined by the conditional probability distributions $\{P_{Z|X}(\cdot|x) \in \Simplex_{\Y} : x \in \X\}$ to produce $Z_1^n \in \Y^n$, where each $Z_i \in \Y$, and $\Simplex_{\Y}$ denotes the probability simplex in $\R^{|\Y|}$ of all probability distributions on $\Y$. Note that in later sections, we will often treat a DMC $P_{Z|X}$ as a row stochastic transition probability matrix $P_{Z|X} \in \R^{|\X| \times |\Y|}$ whose rows are given by $\{P_{Z|X}(\cdot|x) \in \Simplex_{\Y} : x \in \X\}$, and vice versa, since the two perspectives are equivalent. (In particular, for every $x \in \X$ and $y \in \Y$, the conditional probability $P_{Z|X}(y|x)$ is also the $(x,y)$th element of the matrix $P_{Z|X}$, i.e., $P_{Z|X}(y|x) = \big[P_{Z|X}\big]_{x,y}$ using the notation in subsection \ref{Additional Notation}. Likewise, for every $x \in \X$, the conditional distribution $P_{Z|X}(\cdot|x) \in \Simplex_{\Y}$ forms the $x$th row of the matrix $P_{Z|X}$.) The memorylessness property of the DMC implies that
\begin{equation}
\label{Eq: Memorylessness}
P_{Z_1^n | X_1^n}(z_1^n | x_1^n) = \prod_{i = 1}^{n}{P_{Z|X}(z_i|x_i)}
\end{equation}
for every $x_1^n \in \X^n$ and every $z_1^n \in \Y^n$. The noisy codeword $Z_1^n$ is then passed through an independent \textit{random permutation} transformation to generate $Y_1^n \in \Y^n$. Specifically, the random permutation channel $\Pi \triangleq \{\Pi(\cdot|z_1^n) = P_{Y_1^n|Z_1^n}(\cdot|z_1^n) : z_1^n \in \Y^n\}$ is defined as
\begin{equation}
\label{Eq: The Random Perm Channel}
\begin{aligned}
\Pi(y_1^n|z_1^n) & = P_{Y_1^n|Z_1^n}(y_1^n|z_1^n) \\
& = \frac{1}{n!} \sum_{\lambda \in \mathcal{S}_n}{\I\!\left\{\forall i \in \{1,\dots,n\}, \, y_{\lambda(i)} = z_i \right\}}
\end{aligned}
\end{equation}
for every $y_1^n,z_1^n \in \Y^n$, where the sum is over all permutations $\lambda:\{1,\dots,n\} \rightarrow \{1,\dots,n\}$ in the symmetric group $\mathcal{S}_n$ over the set $\{1,\dots,n\}$, and $\I\{\cdot\}$ is the indicator function defined in subsection \ref{Additional Notation}. Alternatively, we can describe the action of $\Pi$ in \eqref{Eq: The Random Perm Channel} as follows:
\begin{enumerate}
\item First, randomly draw a bijection (or permutation) $\lambda:\{1,\dots,n\} \rightarrow \{1,\dots,n\}$ uniformly, and independently of everything else, from the symmetric group $\mathcal{S}_n$ over $\{1,\dots,n\}$, 
\item Then, generate $Y_1^n$ from $Z_1^n$ using the permutation $\lambda$ so that $Y_{\lambda(i)} = Z_i$ for all $i \in \{1,\dots,n\}$.
\end{enumerate}
Throughout this paper, we will refer to random permutation channels on different alphabets, such as the one defined above, as ``random permutations'' without any further clarification. Finally, the received codeword $Y_1^n$ is decoded to produce an estimate $\hat{M} = g_n(Y_1^n)$ of $M$. Figure \ref{Figure: Permutation Channel} illustrates this communication system.

Let the \textit{average probability of error} in this model be
\begin{equation}
P_{\mathsf{error}}^{(n)} \triangleq \P\big(M \neq \hat{M}\big) \, ,
\end{equation}
where we assume that any decoder $g_n$ always makes an error when it outputs the error symbol $\mathsf{e}$.\footnote{Under an average probability of error criterion, the sequence of encoders $f_n$ that minimize $P_{\mathsf{error}}^{(n)}$ are deterministic, and the corresponding sequence of decoders $g_n$ that minimize $P_{\mathsf{error}}^{(n)}$ are the \emph{maximum a posteriori} decoders (or maximum likelihood decoders, since $M$ is uniformly distributed), which are also deterministic without loss of generality \cite[Section 16.2.1]{PolyanskiyWu2017Notes}. In contrast, under a maximal probability of error criterion, randomized encoders and decoders can be useful \cite[Section 16.2.1]{PolyanskiyWu2017Notes}.} The ``\textit{rate}'' of the encoder-decoder pair $(f_n,g_n)$ is defined as
\begin{equation}
\label{Eq: Rate}
R \triangleq \frac{\log(|\M|)}{\log(n)} \, ,
\end{equation}
where $\log(\cdot)$ is the binary logarithm (with base $2$) throughout this paper, and all \textit{Shannon entropy} $H(\cdot)$, \textit{mutual information} $I(\cdot;\cdot)$, and \textit{Kullback-Leibler (KL) divergence} (or relative entropy) $D(\cdot||\cdot)$ terms are measured in bits.\footnote{The notion of rate defined in \eqref{Eq: Rate} is analogous to the so called \emph{third-order coding rate} in the finite blocklength analysis literature; see, e.g., \cite{KosutSankar2014}.} So, we can also write $|\M| = n^R$. (Strictly speaking, $n^R$ should be an integer, but we will often neglect this detail since it will not affect our results.) We will say that a rate $R \geq 0$ is \textit{achievable} if there exists a sequence of encoder-decoder pairs $\{(f_n,g_n)\}_{n \in \N}$ such that $\lim_{n \rightarrow \infty}{P_{\mathsf{error}}^{(n)}} = 0$. Lastly, we operationally define the noisy permutation channel capacity as follows.

\begin{definition}[Noisy Permutation Channel Capacity]
\label{Def: Permutation Channel Capacity}
For any DMC $P_{Z|X}$, its \emph{noisy permutation channel capacity} is given by
$$ \Cperm(P_{Z|X}) \triangleq \sup\!\left\{R \geq 0 : R \text{ is achievable}\right\} . $$
\end{definition}

It is straightforward to verify that the scaling in \eqref{Eq: Rate} is indeed $\log(n)$ rather than the standard $n$. As mentioned earlier, due to the independent random permutation in the model, all information embedded in the ordering within codewords is lost. (In fact, canonical fixed composition codes cannot carry more than one message in this setting.) So, the maximum number of decodable messages is (intuitively) upper bounded by the number of possible empirical distributions of $Y_1^n$, i.e.,
\begin{equation}
\label{Eq: Number of Emp Dists}
n^R = |\M| \leq \binom{n+|\Y|-1}{|\Y|-1} \leq (n+1)^{|\Y|-1} \, ,  
\end{equation}
where taking $\log$'s and letting $n \rightarrow \infty$ yields $\Cperm(P_{Z|X}) \leq |\Y|-1$ (at least non-rigorously). This justifies that $\log(n)$ is the correct scaling in \eqref{Eq: Rate}, i.e., the maximum number of messages that can be reliably communicated is polynomial in the blocklength (rather than exponential).

\subsection{Additional Notation}
\label{Additional Notation}

In this subsection, we define some additional notation that will be utilized throughout the paper. We begin with some probabilistic notation. The standard expressions $\P(\cdot)$, $\E[\cdot]$, and $\VAR(\cdot)$ represent the probability, expectation, and variance operators, where the underlying probability measures will be clear from context. Moreover, we will write $X \sim P_X$ when the random variable $X$ has probability law $P_X$. We let $\I\{\cdot\}$ denote the indicator function which equals $1$ if its input proposition is true and $0$ otherwise. Given any sequence $x_1^n \in \X^n$ with $n \in \N$, we define the \textit{empirical distribution} (histogram or \textit{type}) of $x_1^n$ as
\begin{equation}
\hat{P}_{x_1^n} = \left(\hat{P}_{x_1^n}(x^{\prime}) : x^{\prime} \in \X\right) \in \Simplex_{\X} \, ,
\end{equation}
where $\hat{P}_{x_1^n}$ is a probability distribution on $\X$, and for every $x^{\prime} \in \X$,
\begin{equation}
\hat{P}_{x_1^n}(x^{\prime}) \triangleq \frac{1}{n} \sum_{i = 1}^{n}{\I\!\left\{x_i = x^{\prime}\right\}} \, . 
\end{equation} 
For convenience, we will use the notation
\begin{equation}
\binom{n}{n \hat{P}_{x_1^n}} \triangleq \frac{n!}{\displaystyle{\prod_{x^{\prime} \in \X}{\big(n\hat{P}_{x_1^n}(x^{\prime})\big)!}}} 
\end{equation}
for the multinomial coefficient. Furthermore, for any $k \in \N$ and $p \in [0,1]$, we let $\Ber(p)$ denote a Bernoulli distribution with success probability $p$, and $\bin(k,p)$ denote a binomial distribution with $k$ trials and success probability $p$.

Next, we introduce some linear algebraic notation. Fix any $m,n \in \N$. Given any matrix $A \in \R^{m \times n}$, we let $\left[A\right]_{i,j}$ denote the $(i,j)$th element of $A$, $\|A\|_{\mathsf{op}}$ denote the operator or spectral norm of $A$ (which is the largest singular value of $A$), $\sigma_{\mathsf{min}}(A)$ denote the smallest of the $\min\{m,n\}$ singular values of $A$, $\rank(A)$ denote the rank of $A$, $A^{\T} \in \R^{n \times m}$ denote the transpose or adjoint of $A$, $A^{\dagger} \in \R^{n \times m}$ denote the \textit{Moore-Penrose pseudoinverse} of $A$, and $A^{-1} \in \R^{n \times n}$ denote the inverse of $A$ when $m = n$ and $A$ is non-singular. Furthermore, when the rows of $A$ are linearly independent, then $A^{\dagger} = A^{\T} \big(A A^{\T}\big)^{-1}$ is a \textit{right inverse} of $A$ such that $A A^{\dagger} = I$, where $I$ is the identity matrix of appropriate dimension. For any row stochastic matrix $A \in \R^{m \times n}$, we let $\ext(A)$ denote the number of extreme points of the convex hull of the rows of $A$, and it is straightforward to verify that
\begin{equation}
\rank(A) \leq \ext(A) \leq m \, .
\end{equation}
In the sequel, we refer to a row stochastic matrix $A$ as \textit{full rank} if $\rank(A) = \min\{\ext(A),n\}$,\footnote{This is in contrast to standard usage where $A$ is said to be ``full rank'' if $\rank(A) = \min\{m,n\}$. Our alternative usage of the phrase ``full rank'' is motivated by information theoretic contexts, such as in the proof of Theorem \ref{Thm: Converse Bound II} in subsection \ref{Converse Bounds for Strictly Positive DMCs}, where the effective number of rows (or input alphabet) of a row stochastic matrix (or channel) $A$ can often be reduced to $\ext(A)$ due to the convexity of KL divergence. The resulting sub-matrix, which has $\ext(A)$ rows, is full rank in the standard sense when $\rank(A) = \min\{\ext(A),n\}$.} and \textit{strictly positive} if the elements of $A$ are all strictly positive. 

Finally, we present some miscellaneous analysis notation. We let the customary $\left\| \cdot \right\|_{p}$ notation denote the $\mathcal{L}^p$-norm for $p \in [1,\infty]$. We let $\exp(\cdot)$ denote the natural exponential function (with base $e$), and $\lfloor \cdot \rfloor$ denote the floor function. Throughout this paper, we will use the standard \textit{Bachmann-Landau asymptotic notation}, e.g., $O(\cdot)$, $\Theta(\cdot)$, $o(\cdot)$, and $\omega(\cdot)$, with the understanding that the parameter $n \rightarrow \infty$ and all other parameters are held constant with respect to $n$.

\subsection{Outline}
\label{Outline}

In closing section \ref{Introduction}, we briefly delineate the organization of the rest of this paper. In section \ref{Main Results}, we present all of our main results, which were described at the outset of section \ref{Introduction}. In section \ref{Achievability and Converse Bounds}, we prove our main achievability and converse bounds using several auxiliary lemmata. Then, we illustrate several examples of noisy permutation channel capacities for different families of channels in section \ref{Permutation Channel Capacity}. Furthermore, we also establish the connection between the degradation preorder over channels and noisy permutation channel capacity in section \ref{Permutation Channel Capacity}. Finally, we conclude our discussion and propose future research directions in section \ref{Conclusion}. On a separate note, it is worth mentioning that throughout this paper, theorems, propositions, and lemmata are stated according to the following convention: If the result is known in the literature, we provide references in the header, and if the result is new, we (obviously) do not provide any references.

\section{Main Results}
\label{Main Results}

In this section, we present our main results under the setup of subsection \ref{Permutation Channel Model}, very briefly mention the important ideas in the corresponding proofs, and discuss any related literature where appropriate.

\subsection{Achievability Bound}
\label{Achievability Bound}

Our first main result is a lower bound on the noisy permutation channel capacity of any DMC in terms of the rank of the DMC. 

\begin{theorem}[Achievability Bound]
\label{Thm: Achievability Bound}
The noisy permutation channel capacity of a DMC $P_{Z|X}$ is lower bounded by 
$$ \Cperm(P_{Z|X}) \geq \frac{\rank(P_{Z|X}) - 1}{2} \, . $$
\end{theorem}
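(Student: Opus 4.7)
The plan is to exploit the key observation that, after the random permutation, the receiver can only observe the empirical type $\hat{P}_{Y_1^n}$, and that this type concentrates around its conditional mean at the scale $n^{-1/2}$. The strategy is therefore to choose a codebook of input types whose induced mean output distributions form a well-separated set in $\Simplex_{\Y}$, at a spacing $\omega(n^{-1/2})$; the rank condition is precisely what guarantees that such a set can have cardinality $n^{(\rank(P_{Z|X})-1)/2 - o(1)}$.

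Concretely, let $r \triangleq \rank(P_{Z|X})$ and pick $r$ symbols $x_1,\dots,x_r \in \X$ whose corresponding rows of $P_{Z|X}$ are linearly independent; collect these rows in $Q \in \R^{r \times |\Y|}$, which has full row rank and hence $\sigma_{\mathsf{min}}(Q) > 0$. Fix any $R < (r-1)/2$, choose $\delta > 0$ with $R < (r-1)(1/2 - \delta)$, and set $\epsilon_n = n^{-1/2 + \delta}$. A standard volumetric / lattice argument produces a family $\{\tilde{P}^{(m)}\}_{m \in \M}$ of $|\M| = n^R$ legitimate $n$-types supported on $\{x_1,\dots,x_r\}$, lying in a compact subset of the relative interior of $\Simplex_r$, whose pairwise $\ell_2$-distances are at least $\epsilon_n$ (this is possible because $\epsilon_n \gg 1/n$). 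The encoder $f_n$ sends any sequence of type $\tilde{P}^{(m)}$; the decoder $g_n$ outputs
\begin{equation}
\hat{M} \triangleq \argmin_{m \in \M}\, \bigl\lVert \hat{P}_{Y_1^n} - (\tilde{P}^{(m)})^{\T} Q \bigr\rVert_2 .
\end{equation}

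For the error analysis, condition on $M = m$. Since the DMC is memoryless and the permutation preserves empirical types, for each $y \in \Y$ the coordinate $\hat{P}_{Y_1^n}(y)$ is an average of $n$ independent $\{0,1\}$-valued random variables with mean $\bigl((\tilde{P}^{(m)})^{\T} Q\bigr)(y)$. By the triangle inequality, correct decoding is guaranteed whenever $\lVert \hat{P}_{Y_1^n} - (\tilde{P}^{(m)})^{\T} Q \rVert_2 < \tfrac{1}{2} \sigma_{\mathsf{min}}(Q)\, \epsilon_n$, since for any $m' \neq m$,
$$ \bigl\lVert (\tilde{P}^{(m')} - \tilde{P}^{(m)})^{\T} Q \bigr\rVert_2 \geq \sigma_{\mathsf{min}}(Q)\, \lVert \tilde{P}^{(m')} - \tilde{P}^{(m)} \rVert_2 \geq \sigma_{\mathsf{min}}(Q)\, \epsilon_n . $$
A Hoeffding bound on each of the $|\Y|$ coordinates of $\hat{P}_{Y_1^n}$, together with the conversion $\lVert \cdot \rVert_2 \leq \sqrt{|\Y|}\, \lVert \cdot \rVert_{\infty}$ and a union bound over $y \in \Y$, then bounds the probability of the complementary event by $2|\Y| \exp\!\bigl(-\Omega(n\epsilon_n^2)\bigr) = 2|\Y|\exp\!\bigl(-\Omega(n^{2\delta})\bigr)$, which tends to zero. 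Hence $P_{\mathsf{error}}^{(n)} \to 0$ and $R$ is achievable.

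The main conceptual step is the matching of the $n^{-1/2}$ fluctuation scale of the empirical type against the packing scale $\epsilon_n$ in the $(r-1)$-dimensional simplex $\Simplex_r$; this is where the factors $1/2$ and $r-1$ in the claimed bound originate, and both the $\sigma_{\mathsf{min}}(Q) > 0$ inverse-Lipschitz bound and the choice of $\delta > 0$ are tuned to make it work. The most delicate piece of bookkeeping is ensuring that an $\epsilon_n$-separated configuration in the continuous simplex can be realized by legitimate $n$-types without losing its cardinality; this is harmless once $\epsilon_n \gg 1/n$, which holds here since $\delta < 1/2$. No part of the argument appears to present a serious obstacle.
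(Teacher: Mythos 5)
Your argument is correct and follows essentially the same route as the paper's proof: messages are carried by input compositions at resolution $\Theta\big(n^{\frac{1}{2}-\delta}\big)$, the choice of $r = \rank(P_{Z|X})$ linearly independent rows makes the map from input to output distributions injective with a quantitative inverse bound, and Hoeffding-type concentration of the output empirical distribution at scale $n^{-1/2}$ finishes the error analysis. The differences are only implementational: the paper uses a randomized i.i.d.\ encoder (so no rounding-to-$n$-types packing step is needed) and decodes by applying the right inverse $\tilde{P}_{Z|X}^{\dagger}$ coordinate-wise and thresholding, whereas you use deterministic fixed-composition codewords with a minimum-$\ell_2$-distance decoder and the $\sigma_{\mathsf{min}}$ separation bound, both yielding the same rate $\frac{r-1}{2}$.
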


Theorem \ref{Thm: Achievability Bound} is proved in subsection \ref{Achievability Bounds for DMCs} using a simple (randomized) code which enables a basic concentration of measure inequality based argument. We also present an alternative proof of Theorem \ref{Thm: Achievability Bound} for the special case of row stochastic matrices with rank $2$ in subsection \ref{Achievability Bounds for DMCs}, which employs the so called \textit{second moment method} for total variation distance.

\subsection{Converse Bounds}
\label{Converse Bounds}

Our second main result is an upper bound on the noisy permutation channel capacity of any strictly positive DMC in terms of the output alphabet size of the DMC.

\begin{theorem}[Converse Bound I]
\label{Thm: Converse Bound I}
The noisy permutation channel capacity of a strictly positive DMC $P_{Z|X}$, which means that $P_{Z|X}(y|x) > 0$ for all $x \in \X$ and $y \in \Y$, is upper bounded by
$$ \Cperm(P_{Z|X}) \leq \frac{|\Y| - 1}{2} \, . $$
\end{theorem}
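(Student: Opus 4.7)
My plan is to combine Fano's inequality with a sufficient-statistic reduction and a local central limit theorem (LCLT) for empirical measures, so that the $\tfrac{|\Y|-1}{2}\log(n)$ arises from $\tfrac{1}{2}$ times the dimension of the output simplex multiplied by the $\sqrt{n}$-scale of the type fluctuations. Under the average-error criterion I may take $f_n$ to be deterministic, and Fano's inequality applied to $\hat{M} = g_n(Y_1^n)$ gives $(1 - P_{\mathsf{error}}^{(n)})\log|\M| \leq I(M; Y_1^n) + 1$, so it suffices to prove that $I(M; Y_1^n) \leq \tfrac{|\Y|-1}{2}\log(n) + O(1)$ uniformly over the encoder--decoder pair.

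Next I would reduce to the empirical distribution of the output. Because $Y_1^n$ is merely a uniformly random relabeling of $Z_1^n$, given $\hat{P}_{Y_1^n} = q$ the sequence $Y_1^n$ is uniform over type-$q$ sequences in $\Y^n$, independent of $M$. Hence $\hat{P}_{Y_1^n} = \hat{P}_{Z_1^n}$ is a sufficient statistic and
\begin{equation*}
I(M; Y_1^n) = I(M; \hat{P}_{Z_1^n}) = H(\hat{P}_{Z_1^n}) - H(\hat{P}_{Z_1^n} \mid M).
\end{equation*}
The unconditional entropy is bounded by counting types, $H(\hat{P}_{Z_1^n}) \leq \log\!\binom{n+|\Y|-1}{|\Y|-1} \leq (|\Y|-1)\log(n+1)$, which is exactly the estimate underlying \eqref{Eq: Number of Emp Dists}.

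The heart of the argument is the matching lower bound $H(\hat{P}_{Z_1^n}\mid M = m) \geq \tfrac{|\Y|-1}{2}\log(n) - O(1)$, uniform in $m$. With $M = m$ the codeword $x_1^n = f_n(m)$ is fixed and $n\hat{P}_{Z_1^n}$ is a sum of $n$ independent $\Y$-valued one-hot indicator vectors whose parameters $P_{Z|X}(\cdot\mid x_i)$ all lie coordinate-wise in $[\gamma, 1-\gamma]$ for some $\gamma > 0$ furnished by strict positivity of $P_{Z|X}$. Under this uniform non-degeneracy an LCLT for sums of independent non-identical multinomial trials, equivalently a bound away from unity on the characteristic function of each centered summand outside a fixed neighborhood of the origin, yields the pointwise estimate
\begin{equation*}
\max_{q\in\Simplex_{\Y}} \P\big(\hat{P}_{Z_1^n} = q \,\big|\, M = m\big) = O\!\big(n^{-(|\Y|-1)/2}\big),
\end{equation*}
with constant depending only on $\gamma$ and $|\Y|$. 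The elementary bound $H(V) \geq -\log \max_v \P(V = v)$ then delivers the entropy lower bound.

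Combining the two estimates gives $I(M; Y_1^n) \leq \tfrac{|\Y|-1}{2}\log(n) + O(1)$; plugging back into Fano's inequality, dividing by $\log(n)$, and letting $n\to\infty$ along a sequence with $P_{\mathsf{error}}^{(n)}\to 0$ yields $R \leq \tfrac{|\Y|-1}{2}$. The main obstacle is the uniform LCLT: the decay rate of the characteristic function of each summand must be controlled independently of the codeword composition, and this uniformity is precisely what strict positivity of $P_{Z|X}$ provides. Without it, some coordinate of $\hat{P}_{Z_1^n}$ could concentrate on a lower-dimensional affine subspace, whence the max-probability estimate, and so the factor-of-two improvement over \eqref{Eq: Number of Emp Dists}, would fail.
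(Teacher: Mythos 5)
Your proposal is correct, and while it shares the paper's outer skeleton (Fano's inequality, reduction to the sufficient statistic $\hat{P}_{Y_1^n}=\hat{P}_{Z_1^n}$, and the type-counting bound $H(\hat{P}_{Z_1^n})\leq(|\Y|-1)\log(n+1)$), it establishes the crucial conditional-entropy lower bound by a genuinely different route. The paper conditions on the codeword, isolates the majority input symbol $x^*$ (which has empirical frequency at least $1/|\X|$), discards the other symbols via \cite[Problem 2.14]{CoverThomas2006}, and then reduces each of the $|\Y|-1$ terms in a chain-rule expansion to the entropy of a one-dimensional \emph{binomial} random variable, handled by the $\tfrac{1}{2}\log(2\pi e np(1-p))$ estimate of Lemma \ref{Lemma: Approximation of Binomial Entropy} together with a Chebyshev argument to control the conditioning event; strict positivity enters only to keep the binomial parameter $p$ in $(0,1)$. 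You instead prove the $(|\Y|-1)$-dimensional statement directly: for a fixed codeword, $n\hat{P}_{Z_1^n}$ is a sum of independent (non-identically distributed) one-hot vectors, and a uniform anti-concentration estimate $\max_q \P(\hat{P}_{Z_1^n}=q\mid M=m)=O(n^{-(|\Y|-1)/2})$ plus the min-entropy bound $H(V)\geq-\log\max_v\P(V=v)$ gives the required $\tfrac{|\Y|-1}{2}\log(n)-O(1)$ in one stroke. This is valid: you only need the maximum point probability, not a full local CLT, and the uniformity you flag as the main obstacle does follow from strict positivity, since with minimum entry $\nu>0$ one has, e.g., $1-|\phi(t)|^2\geq\nu^2\max_{y}(1-\cos t_y)\gtrsim\nu^2\|t\|_2^2/|\Y|$ for the characteristic function of each summand on the fundamental cube, so the Fourier integral bound yields the $O(n^{-(|\Y|-1)/2})$ rate with constants depending only on $\nu$ and $|\Y|$ and independent of the codeword composition. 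The trade-off: your argument is more direct and makes transparent exactly where strict positivity is used (uniform non-degeneracy of the per-letter covariance), but it requires you to actually prove the multivariate anti-concentration bound for non-i.i.d. categorical sums, which you currently assert; the paper's route avoids any multivariate estimate by paying with conditioning bookkeeping (the event $E$, the constants $\tau^*$ and $p^*$) and by invoking an off-the-shelf binomial-entropy approximation. Either way the same constant $\tfrac{|\Y|-1}{2}$ emerges, and your handling of randomized encoders (taking $f_n$ deterministic without loss of generality under average error) matches the paper's footnote.
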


Theorem \ref{Thm: Converse Bound I} is established in subsection \ref{Converse Bounds for Strictly Positive DMCs}. The proof of Theorem \ref{Thm: Converse Bound I} uses a Fano's inequality argument followed by a careful application of a \textit{central limit theorem} (CLT) based approximation of the entropy of a binomial random variable. Intuitively, we also expect to have a converse bound in terms of the input alphabet size, because when $|\X|$ is much smaller than $|\Y|$, there are at most $O\big(n^{|\X|-1}\big)$ distinguishable empirical distributions (rather than $O\big(n^{|\Y|-1}\big)$, as suggested by \eqref{Eq: Number of Emp Dists}). Our third main result addresses this intuition by providing an alternative upper bound on the noisy permutation channel capacity of any strictly positive DMC in terms of the number of extreme points of the convex hull of the conditional probability distributions defining the DMC.

\begin{theorem}[Converse Bound II]
\label{Thm: Converse Bound II}
The noisy permutation channel capacity of a strictly positive DMC $P_{Z|X}$ is upper bounded by 
$$ \Cperm(P_{Z|X}) \leq \frac{\ext(P_{Z|X}) - 1}{2} \, . $$
\end{theorem}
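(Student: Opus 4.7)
The strategy is to chain three ingredients already flagged in the excerpt: a reduction of the input alphabet of $P_{Z|X}$ to its extreme rows, the construction of a symmetric channel that degrades to the reduced channel (Proposition \ref{Prop: Degradation by Symmetric Channels}), and the monotonicity of $\Cperm$ under the degradation preorder (Theorem \ref{Thm: Comparison Bound via Degradation}). The capstone is Theorem \ref{Thm: Converse Bound I} applied to the dominating symmetric channel.

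The first step is to establish $\Cperm(P_{Z|X}) \leq \Cperm(V)$, where $V$ denotes the sub-channel of $P_{Z|X}$ whose rows are the $k \triangleq \ext(P_{Z|X})$ extreme points of the convex hull of the rows of $P_{Z|X}$. Each non-extreme row of $P_{Z|X}$ is by definition a convex combination of extreme rows, so any codeword $X_1^n \in \X^n$ can be simulated position-by-position by a randomized codeword over the extreme-row alphabet of size $k$, producing the identical joint law on $Z_1^n$, and hence (after the random permutation) on $Y_1^n$. Any encoder-decoder pair for $P_{Z|X}$ therefore yields a randomized pair for $V$ with the same average error probability, and strict positivity of $V$ is inherited from that of $P_{Z|X}$ since the rows of $V$ are literal rows of $P_{Z|X}$.

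Next, I would apply Proposition \ref{Prop: Degradation by Symmetric Channels} to $V$ to obtain a strictly positive symmetric channel $W$ on a $k$-letter output alphabet such that $V$ is a degraded version of $W$. Theorem \ref{Thm: Comparison Bound via Degradation} then gives $\Cperm(V) \leq \Cperm(W)$, and since $W$ is strictly positive with exactly $k$ output letters, Theorem \ref{Thm: Converse Bound I} yields $\Cperm(W) \leq (k-1)/2$. Concatenating,
\[
\Cperm(P_{Z|X}) \leq \Cperm(V) \leq \Cperm(W) \leq \frac{\ext(P_{Z|X}) - 1}{2}.
\]

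The crux will be ensuring that the symmetric dominator produced by Proposition \ref{Prop: Degradation by Symmetric Channels} really has an output alphabet of size exactly $k$ rather than $|\Y|$, and that it remains strictly positive; only then does Theorem \ref{Thm: Converse Bound I} deliver $(k-1)/2$ instead of the weaker $(|\Y|-1)/2$ already known from Converse Bound~I. The extreme-row reduction in the first step is precisely what makes this matching feasible, because it shrinks the effective input alphabet of $V$ to size $k$ and thereby permits a symmetric dominator on a balanced $k$-letter alphabet on both sides. Verifying that Proposition \ref{Prop: Degradation by Symmetric Channels} indeed delivers such a $W$, and checking that the position-by-position simulation argument in Step~1 correctly upper-bounds the permutation-channel capacity (rather than leaking an extra factor through the randomization), is where the substantive work lies.
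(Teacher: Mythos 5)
Your proposal is correct, and it reaches the bound by a genuinely more modular route than the paper's. The paper never constructs codes for the reduced or symmetric channels: it works inside a single Fano chain for the original code, reduces the input alphabet to the $\ext(P_{Z|X})$ extreme rows via the convexity of KL divergence, uses Lemma \ref{Lemma: Equivalent Model} to swap the DMC and the permutation, invokes Proposition \ref{Prop: Degradation by Symmetric Channels} together with the data processing inequality (and tensorization of degradation) to pass to the $q$-ary symmetric channel, swaps back, and then reruns the proof of Theorem \ref{Thm: Converse Bound I} mutatis mutandis on the resulting chain. You instead make each reduction operational: the position-wise randomized simulation of non-extreme rows turns any code for $P_{Z|X}$ into a code for the extreme-row sub-channel $V$ with the identical joint law of $(M,Y_1^n)$, hence identical error probability (legitimate, since the model permits randomized encoders and the channel and permutation act independently of the randomization); then Theorem \ref{Thm: Comparison Bound via Degradation} and Theorem \ref{Thm: Converse Bound I} are applied as black boxes to the symmetric dominator from Proposition \ref{Prop: Degradation by Symmetric Channels}, which, as you anticipated, does have exactly $k=\ext(P_{Z|X})$ input and output letters (the proposition is stated for rectangular kernels with the symmetric channel living on the input alphabet) and is strictly positive because $V$ inherits $\nu>0$ from $P_{Z|X}$, so one may take $\delta=\nu/(1-\nu+\tfrac{\nu}{k-1})\in\big(0,\tfrac{k-1}{k}\big]$. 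Two small points to record: (i) your first reduction is not an instance of the degradation preorder (the simulation acts at the input, not the output), so it genuinely needs the explicit simulation argument you give rather than Theorem \ref{Thm: Comparison Bound via Degradation}; (ii) there is no circularity in quoting Theorem \ref{Thm: Comparison Bound via Degradation}, since its proof (part 1 of Theorem \ref{Thm: Comparison Bounds via Degradation}) uses only Lemma \ref{Lemma: Equivalent Model}, and the degenerate case $\ext(P_{Z|X})=1$, where no symmetric channel exists, is covered separately by Proposition \ref{Prop: Unit Rank Stochastic Matrices}. What your route buys is brevity and reuse of stated results; what the paper's route buys is a self-contained mutual-information argument that does not depend on the degradation comparison theorem and that exhibits the intermediate quantity $I(\tilde{X}_1^n;\hat{P}_{\tilde{Y}_1^n})$ motivating Conjecture \ref{Conj: Strictly Positive DMC Conjecture}.
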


Theorem \ref{Thm: Converse Bound II} is also proved in subsection \ref{Converse Bounds for Strictly Positive DMCs}. Its proof layers a degradation argument, based on Proposition \ref{Prop: Degradation by Symmetric Channels} (which will be presented in due course), over the derivation of Theorem \ref{Thm: Converse Bound I}. We remark that the quantity $\ext(P_{Z|X})$ can be perceived as an ``effective input alphabet'' size. Indeed, as elucidated in the proof of Theorem \ref{Thm: Converse Bound II}, the input alphabet of $P_{Z|X}$ can be reduced to a subset of $\X$ corresponding to the extreme points of the convex hull of the rows of $P_{Z|X}$ without loss of generality (due, essentially, to the convexity of KL divergence).  

Together, the bounds in Theorems \ref{Thm: Converse Bound I} and \ref{Thm: Converse Bound II} yield the following corollary that for any strictly positive DMC $P_{Z|X}$,
\begin{equation}
\label{Eq: Combined Converse Bound}
\Cperm(P_{Z|X}) \leq \frac{\min\{\ext(P_{Z|X}),|\Y|\} - 1}{2} \, .  
\end{equation}
On the other hand, for a general DMC $P_{Z|X}$, which may have zero entries, we can show that
\begin{equation}
\label{Eq: General Combined Converse Bound}
\Cperm(P_{Z|X}) \leq \min\{\ext(P_{Z|X}),|\Y|\} - 1 \, .  
\end{equation}
To see this, note that the bound $\Cperm(P_{Z|X}) \leq |\Y| - 1$ is already intuitively justified by \eqref{Eq: Number of Emp Dists}, and a rigorous argument follows along the same lines as the converse proof in subsection \ref{Permutation Transition Matrices}. Moreover, the bound $\Cperm(P_{Z|X}) \leq \ext(P_{Z|X}) - 1$ can be established by following the proof of Theorem \ref{Thm: Converse Bound II} in subsection \ref{Converse Bounds for Strictly Positive DMCs}. (Indeed, the derivation of \eqref{Eq: Fano step 3} in this proof also holds for DMCs $P_{Z|X}$ with zero entries, in which case, $P_{\tilde{Z}|\tilde{X}}$ is the identity channel. The converse proof in subsection \ref{Permutation Transition Matrices} can then be applied to yield the desired bound.) We omit these proofs for the sake of brevity.

\subsection{Strictly Positive and Full Rank Channels}

Theorem \ref{Thm: Achievability Bound} and \eqref{Eq: Combined Converse Bound} portray that for any strictly positive DMC $P_{Z|X}$, the noisy permutation channel capacity satisfies the bounds 
\begin{equation}
\label{Eq: Combined Bounds}
\begin{aligned}
\frac{\rank(P_{Z|X}) - 1}{2} & \leq \Cperm(P_{Z|X}) \\
& \leq \frac{\min\{\ext(P_{Z|X}),|\Y|\} - 1}{2} \, .
\end{aligned}
\end{equation}
Based on the inequalities in \eqref{Eq: Combined Bounds}, we now state (perhaps) the most important result of this paper, which characterizes the noisy permutation channel capacity of the family of strictly positive and full rank channels. 

\begin{theorem}[$\Cperm$ of Strictly Positive and Full Rank Channels]
\label{Thm: Strictly Positive and Full Rank Channels}
The noisy permutation channel capacity of a strictly positive and full rank DMC $P_{Z|X}$ with rank $r \triangleq \rank(P_{Z|X}) = \min\{\ext(P_{Z|X}),|\Y|\}$ is given by
$$ \Cperm(P_{Z|X}) = \frac{r - 1}{2} \, . $$
\end{theorem}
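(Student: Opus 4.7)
The proof is essentially immediate from the bounds already established: Theorem \ref{Thm: Achievability Bound} gives the lower bound and the combined converse \eqref{Eq: Combined Converse Bound} gives the matching upper bound, so the plan is just to observe that the full rank hypothesis forces these two bounds to coincide.

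More concretely, I would first invoke Theorem \ref{Thm: Achievability Bound} to conclude
\[
\Cperm(P_{Z|X}) \;\geq\; \frac{\rank(P_{Z|X}) - 1}{2} \;=\; \frac{r-1}{2},
\]
which uses only that $P_{Z|X}$ has rank $r$. Then, since $P_{Z|X}$ is strictly positive by hypothesis, both Theorem \ref{Thm: Converse Bound I} and Theorem \ref{Thm: Converse Bound II} apply, and their combination \eqref{Eq: Combined Converse Bound} yields
\[
\Cperm(P_{Z|X}) \;\leq\; \frac{\min\{\ext(P_{Z|X}),|\Y|\} - 1}{2}.
\]
The definition of ``full rank'' from subsection \ref{Additional Notation} is precisely $\rank(P_{Z|X}) = \min\{\ext(P_{Z|X}),|\Y|\}$, i.e., $r = \min\{\ext(P_{Z|X}),|\Y|\}$, so the upper bound also equals $(r-1)/2$, which matches the lower bound.

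Since every nontrivial work has already been done in Theorems \ref{Thm: Achievability Bound}, \ref{Thm: Converse Bound I}, and \ref{Thm: Converse Bound II}, there is no real obstacle here; the only thing to verify carefully is that the definition of ``full rank'' is exactly what is needed to close the gap in \eqref{Eq: Combined Bounds}, and that both converse bounds apply simultaneously under the strict positivity assumption. The proof is therefore essentially a one-line corollary of the preceding results.
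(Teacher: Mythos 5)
Your proposal is correct and matches the paper's own proof, which likewise treats the theorem as an immediate corollary of Theorem \ref{Thm: Achievability Bound} together with the combined converse bound \eqref{Eq: Combined Bounds} from Theorems \ref{Thm: Converse Bound I} and \ref{Thm: Converse Bound II}, with the nonstandard definition of ``full rank'' from subsection \ref{Additional Notation} closing the gap. No further work is needed.
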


\begin{proof}
Recalling the definition of ``full rank'' from subsection \ref{Additional Notation}, this is an immediate corollary of \eqref{Eq: Combined Bounds} (i.e., of Theorems \ref{Thm: Achievability Bound}, \ref{Thm: Converse Bound I}, and \ref{Thm: Converse Bound II}).
\end{proof}

\subsection{Degradation and Noisy Permutation Channel Capacity}
\label{Degradation and Permutation Channel Capacity}

To complement the aforementioned results, we next present another main result that relates the notion of noisy permutation channel capacity with the so called (output) \textit{degradation} preorder over channels, which was defined in information theory to study broadcast channels in \cite{Cover1972, Bergmans1973}. (It is worth mentioning that in this paper, we are concerned with the notion of \textit{stochastic} degradation as opposed to \textit{physical} degradation, cf. \cite[Section 5.4]{ElGamalKim2011}.)

\begin{definition}[Degradation Preorder]
\label{Def: Degradation Preorder}
For any two DMCs (or row stochastic matrices) $P_{Z_1|X} \in \R^{|\X| \times |\mathcal{Z}_1|}$ and $P_{Z_2|X} \in \R^{|\X| \times |\mathcal{Z}_2|}$ with common input alphabet $\X$ and output alphabets $\mathcal{Z}_1$ and $\mathcal{Z}_2$, respectively, we say that $P_{Z_2|X}$ is a \emph{degraded} version of $P_{Z_1|X}$ if $P_{Z_2|X} = P_{Z_1|X} P_{Z_2|Z_1}$ for some channel $P_{Z_2|Z_1} \in \R^{|\mathcal{Z}_1| \times |\mathcal{Z}_2|}$. 
\end{definition}

The degradation preorder has a long and intriguing history that is worth elaborating on. Its study actually originated in the statistics literature \cite{Blackwell1951,Sherman1951,Stein1951}, where it is also known as the \textit{Blackwell order}. Indeed, the channels $P_{Z_1|X}$ and $P_{Z_2|X}$ can be construed as statistical experiments (or observation models) of the parameter space $\X$. In this statistical decision theoretic context, the celebrated \textit{Blackwell-Sherman-Stein theorem} states that $P_{Z_2|X}$ is a degraded version of $P_{Z_1|X}$ if and only if for every prior distribution $P_X \in \Simplex_{\X}$, and every real-valued loss function with domain $\X \times \X$, the minimum Bayes risk corresponding to $P_{Z_1|X}$ is less than or equal to the minimum Bayes risk corresponding to $P_{Z_2|X}$ \cite{Blackwell1951,Sherman1951,Stein1951} (also see \cite{LeshnoSpector1992} for a simple proof of this result using the separating hyperplane theorem). Furthermore, degradation has beautiful ties with non-Bayesian binary hypothesis testing as well. When $|\X| = 2$, the channels $P_{Z_1|X}$ and $P_{Z_2|X}$ can be construed as \textit{dichotomies} of likelihoods, and it can be shown that $P_{Z_2|X}$ is a degraded version of $P_{Z_1|X}$ if and only if the \textit{Neyman-Pearson function}, or receiver operating characteristic curve, of $P_{Z_1|X}$ dominates the Neyman-Pearson function of $P_{Z_2|X}$ pointwise (cf. \cite[Theorem 5.3]{Torgersen1991} and \cite[Section 9.3]{Torgersen1991book}, where equivalent characterizations using $f$-divergences and majorization are also given). Moreover, for the special case where $P_{Z_1|X}$ and $P_{Z_2|X}$ are binary input \textit{symmetric} channels, other majorization and stochastic domination based characterizations of degradation can be found in \cite[Sections 4.1.14--4.1.16]{RichardsonUrbanke2008}. Finally, we note that degradation is also equivalent to the notion of \textit{matrix majorization} in \cite[Chapter 15, Definition C.8]{MarshallOlkinArnold2011} (also see \cite{Dahl1999a} and \cite{Dahl1999b}). We refer readers to the author's doctoral thesis \cite[Section 3.1.1]{Makur2019} and \cite[Section I-B]{MakurPolyanskiy2018} for further discussion and references.

The next theorem conveys an intuitive comparison result that if one DMC dominates another DMC in the degradation sense, then the noisy permutation channel capacity of the dominating DMC is larger than the noisy permutation channel capacity of the degraded DMC.   

\begin{theorem}[Comparison Bound via Degradation]
\label{Thm: Comparison Bound via Degradation}
Consider any two DMCs $P_{Z_1|X} \in \R^{|\X| \times |\mathcal{Z}_1|}$ and $P_{Z_2|X} \in \R^{|\X| \times |\mathcal{Z}_2|}$, with common input alphabet $\X$ and output alphabets $\mathcal{Z}_1$ and $\mathcal{Z}_2$, respectively. If $P_{Z_2|X}$ is a degraded version of $P_{Z_1|X}$, then we have
$$ \Cperm(P_{Z_2|X}) \leq \Cperm(P_{Z_1|X}) \, . $$
\end{theorem}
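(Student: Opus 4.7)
The plan is to give an operational reduction: I would show that any encoder-decoder pair for the noisy permutation channel induced by $P_{Z_2|X}$ can be simulated on the noisy permutation channel induced by $P_{Z_1|X}$ with identical average probability of error. Consequently every rate achievable through the degraded channel is also achievable through the dominating one, and taking the supremum over achievable rates yields the claim.

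The central ingredient is a short distributional commutativity lemma: a uniformly random permutation $\lambda \in \mathcal{S}_n$ commutes with any coordinate-wise memoryless channel $P_{Z_2|Z_1}$. Specifically, for any $z_1^n$, applying $P_{Z_2|Z_1}$ coordinate-wise and then permuting by $\lambda$ induces the same joint law as first permuting $z_1^n$ by $\lambda$ and then applying $P_{Z_2|Z_1}$ coordinate-wise; in both orders the conditional probability mass function evaluates to $\tfrac{1}{n!}\sum_{\lambda \in \mathcal{S}_n}\prod_{i=1}^n P_{Z_2|Z_1}(y_{\lambda(i)} \mid z_{1,i})$ by the product form in \eqref{Eq: Memorylessness} and a relabeling of the summation variable. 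This step is routine but deserves a brief verification.

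Fix any $R$ achievable through the noisy permutation channel induced by $P_{Z_2|X}$, with a witnessing sequence $\{(f_n,g_n)\}_{n \in \N}$ satisfying $P_{\mathsf{error}}^{(n)} \to 0$. I would construct a sequence $\{(f_n,\tilde{g}_n)\}_{n \in \N}$ for the noisy permutation channel induced by $P_{Z_1|X}$ as follows. Use the same encoder $f_n$. On observing the channel output $Y_1^n \in \mathcal{Z}_1^n$, the new decoder $\tilde{g}_n$ first draws $\tilde{Y}_1^n \in \mathcal{Z}_2^n$ by passing $Y_1^n$ coordinate-wise through the degrading channel $P_{Z_2|Z_1}$ (internal randomization) and then outputs $g_n(\tilde{Y}_1^n)$. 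By the hypothesis $P_{Z_2|X} = P_{Z_1|X} P_{Z_2|Z_1}$ together with the commutativity lemma, conditioned on $f_n(M)$ the vector $\tilde{Y}_1^n$ has exactly the same law as the output of the noisy permutation channel induced by $P_{Z_2|X}$ on input $f_n(M)$. Hence $(f_n,\tilde{g}_n)$ attains the same error probability on $P_{Z_1|X}$'s permutation channel as $(f_n,g_n)$ attains on $P_{Z_2|X}$'s, which vanishes. Therefore $R$ is achievable through $P_{Z_1|X}$'s noisy permutation channel, and supremizing over achievable $R$ gives $\Cperm(P_{Z_2|X}) \leq \Cperm(P_{Z_1|X})$.

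I do not anticipate a significant obstacle: the only non-trivial step is the commutativity identity, and although it is almost tautological given the memoryless structure, some care is required in setting up the joint distribution so that the reduction works with randomized decoders (which is permitted in the model of subsection \ref{Permutation Channel Model}).
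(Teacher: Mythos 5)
Your proposal is correct and matches the paper's proof in all essentials: same encoder, same simulated decoder that internally applies the degrading kernel $P_{Z_2|Z_1}$ and then calls the original decoder, and the same key ingredient (the distributional commutativity of the uniform random permutation with a coordinate-wise memoryless channel, which is precisely the paper's Lemma~\ref{Lemma: Equivalent Model}). The only cosmetic difference is that you invoke this commutativity once, directly between the permutation and $P_{Z_2|Z_1}$, whereas the paper routes through an intermediate chain $M \to X_1^n \to V_1^n \to W_1^n \to U_1^n$ and applies Lemma~\ref{Lemma: Equivalent Model} twice; both arrive at the same identity of error probabilities.
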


Theorem \ref{Thm: Comparison Bound via Degradation} is derived in subsection \ref{Erasure Channels}. As with the setting of traditional channel capacity, the proof of Theorem \ref{Thm: Comparison Bound via Degradation} proceeds by verifying that a noisy permutation channel capacity achieving coding scheme for $P_{Z_2|X}$ can be used to achieve the same rate and vanishing probability of error when communicating through $P_{Z_1|X}$. Furthermore, a specialization of Theorem \ref{Thm: Comparison Bound via Degradation} for erasure channels turns out to correspond to the concept of Doeblin minorization, and we use this connection in subsection \ref{Erasure Channels} to provide an alternative achievability bound on the noisy permutation channel capacity of erasure channels.

Lastly, while we are on the topic of degradation, we present another seemingly disparate result which constructs symmetric channels that dominate given DMCs in the degradation sense. To state this result, we first recall the definition of symmetric channels, cf. \cite[Equation (10)]{MakurPolyanskiy2018}. 

\begin{definition}[$q$-ary Symmetric Channel]
\label{Def: Symmetric Channel}
Under the formalism presented in subsection \ref{Permutation Channel Model}, we define a \emph{$q$-ary symmetric channel} with total crossover probability $\delta \in [0,1]$, and input and output alphabet $\X = \Y$ with $|\X| = q \in \N \backslash \!\{1\}$, denoted $\qSC(\delta)$, using the doubly stochastic matrix
\begin{equation}
\label{Eq: q-SC Matrix}
S_{\delta} \triangleq \left[
\begin{array}{ccccc}
1-\delta & \frac{\delta}{q-1} & \cdots & \frac{\delta}{q-1} & \frac{\delta}{q-1} \\
\frac{\delta}{q-1} & 1-\delta & \cdots & \frac{\delta}{q-1} & \frac{\delta}{q-1} \\
\vdots & \vdots & \ddots & \vdots & \vdots \\
\frac{\delta}{q-1} & \frac{\delta}{q-1} & \cdots & 1-\delta & \frac{\delta}{q-1} \\
\frac{\delta}{q-1} & \frac{\delta}{q-1} & \cdots & \frac{\delta}{q-1} & 1 - \delta
\end{array} \right] \in \R^{q \times q}
\end{equation} 
which has $1-\delta$ along its principal diagonal, and $\frac{\delta}{q-1}$ in all other entries. (The rows and columns of $S_{\delta}$ are both indexed consistently by $\X$.)
\end{definition}

We note that in the special case where $q = 2$, $\X = \Y = \{0,1\}$, and $\delta$ is the probability that the input bit flips, we refer to the $2\text{-}\mathsf{SC}(\delta)$ as a \textit{binary symmetric channel} (BSC), denoted $\BSC(\delta)$.

The ensuing proposition portrays a sufficient condition for degradation by $q$-ary symmetric channels.  

\begin{proposition}[Degradation by Symmetric Channels]
\label{Prop: Degradation by Symmetric Channels}
Suppose we are given a DMC (or row stochastic matrix) $P_{Z|X} \in \R^{|\X| \times |\Y|}$ with minimum entry
$$ \nu = \min_{x \in \X, \, y \in \Y}{P_{Z|X}(y|x)} \, , $$ 
and a $q$-ary symmetric channel, $\qSC(\delta)$, which has a common input alphabet $\X$ such that $|\X| = q$. If the total crossover probability parameter satisfies
$$ 0 \leq \delta \leq \frac{\nu}{1 - \nu + \frac{\nu}{q-1}} \, , $$
then $P_{Z|X}$ is a degraded version of $\qSC(\delta)$. 
\end{proposition}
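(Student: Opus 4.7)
The plan is to construct, by direct linear algebra, a row-stochastic matrix $W \in \R^{q \times |\Y|}$ satisfying $S_\delta W = P_{Z|X}$; by Definition \ref{Def: Degradation Preorder}, this exhibits $P_{Z|X}$ as a degraded version of $\qSC(\delta)$. My starting point is the decomposition $S_\delta = \alpha I + \beta J$, where $\alpha \triangleq 1 - \tfrac{\delta q}{q-1}$, $\beta \triangleq \tfrac{\delta}{q-1}$, and $J$ is the $q \times q$ all-ones matrix. Since $\alpha + q\beta = 1$, a short calculation yields $S_\delta^{-1} = \tfrac{1}{\alpha}(I - \beta J)$ whenever $\alpha > 0$, so the unique candidate is $W \triangleq S_\delta^{-1} P_{Z|X}$, which entrywise reads
\begin{equation*}
W(y|u) \;=\; \frac{1}{\alpha}\!\left[P_{Z|X}(y|u) - \beta \sum_{x \in \X} P_{Z|X}(y|x)\right] .
\end{equation*}

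Verifying that $W$ is row-stochastic will reduce to two checks. First, the rows of $W$ will sum to $1$ automatically, since $S_\delta \1 = \1$ gives $S_\delta^{-1} \1 = \1$ and $P_{Z|X} \1 = \1$. The substantive step is entrywise non-negativity,
\begin{equation*}
P_{Z|X}(y|u) \;\geq\; \beta \sum_{x \in \X} P_{Z|X}(y|x) \quad \text{for all } u \in \X,\, y \in \Y.
\end{equation*}
The critical observation for matching the stated threshold is that, although a naive bound gives $P_{Z|X}(y|x) \leq 1$, the row-stochasticity of $P_{Z|X}$ together with $|\Y| \geq 2$ and the minimum-entry hypothesis force the sharper bound $P_{Z|X}(y|x) \leq 1 - \nu$ (each row has at least one other entry of value $\geq \nu$). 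Bounding the column sum by $P_{Z|X}(y|u) + (q-1)(1-\nu)$ and taking the worst case $P_{Z|X}(y|u) = \nu$, the non-negativity requirement collapses to $\nu(q-1) \geq \delta\bigl[\nu + (q-1)(1-\nu)\bigr]$, which rearranges to precisely $\delta \leq \nu/(1 - \nu + \nu/(q-1))$. The same condition also implies $\alpha > 0$ (with the boundary $\alpha = 0$ absorbed by continuity), so the inversion used above is justified.

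The main obstacle, I expect, is resisting the temptation to use the loose bound $P_{Z|X}(y|x) \leq 1$: doing so yields only the strictly weaker threshold $\delta \leq \nu(q-1)/(q-1+\nu)$ and misses the correct constant. Everything else amounts to the inversion of the simple rank-two perturbation $\alpha I + \beta J$ and a one-variable optimization over row-stochastic matrices with prescribed minimum entry $\nu$.
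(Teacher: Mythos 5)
Your proposal is correct and follows essentially the same route as the paper: you form the candidate kernel $S_\delta^{-1}P_{Z|X}$ (your $\tfrac{1}{\alpha}(I-\beta J)$ is exactly the paper's $S_\tau$ with $\tau = -\delta/(1-\delta-\tfrac{\delta}{q-1})$), the row sums come for free, and entrywise non-negativity is verified via the same key observation that every entry of $P_{Z|X}$ lies in $[\nu,1-\nu]$, which yields precisely the stated threshold. The only difference is the degenerate boundary $\alpha=0$ (possible only when $\nu=\tfrac{1}{2}$, i.e., $|\Y|=2$ and all rows equal $(\tfrac{1}{2},\tfrac{1}{2})$): the paper disposes of it explicitly by noting $P_{Z|X}=S_\delta P_{Z|X}$ there, while your continuity remark handles it equally well once made precise via compactness of the set of row stochastic matrices.
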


\begin{figure*}[ht]
\centering
\includegraphics[trim = 0mm 100mm 0mm 80mm, clip, width=\linewidth]{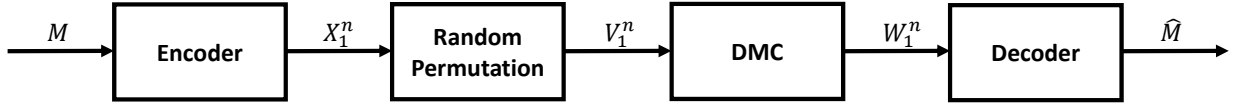} 
\caption{Illustration of a communication system with a random permutation followed by a DMC.}
\label{Figure: Permutation Channel 2}
\end{figure*}

Proposition \ref{Prop: Degradation by Symmetric Channels} is proved in appendix \ref{Proof of Proposition Degradation by Symmetric Channels}. Although it appears to be unrelated to our thrust towards understanding noisy permutation channel capacity, it turns out to be indispensable in the proof of Theorem \ref{Thm: Converse Bound II}. We state Proposition \ref{Prop: Degradation by Symmetric Channels} here as a main result because we believe it can have many applications in information theory and statistics beyond the context of noisy permutation channels. We refer readers to \cite{MakurPolyanskiy2017,MakurPolyanskiy2018} for further insight regarding the value of studying channel domination by symmetric channels. It is also worth making a few remarks about related results in the literature. Indeed, Proposition \ref{Prop: Degradation by Symmetric Channels} establishes a result analogous to \cite[Theorem 2]{MakurPolyanskiy2018} (also see \cite[Theorem 2]{MakurPolyanskiy2017}) that holds for general rectangular row stochastic matrices $P_{Z|X}$ (rather than square row stochastic matrices as in \cite[Theorem 2]{MakurPolyanskiy2018}). However, Proposition \ref{Prop: Degradation by Symmetric Channels} is weaker than \cite[Theorem 2]{MakurPolyanskiy2018} for square row stochastic matrices (i.e., the upper bound on $\delta$ in \cite[Theorem 2]{MakurPolyanskiy2018} is larger than that in Proposition \ref{Prop: Degradation by Symmetric Channels} when $q > 2$), because the proof of \cite[Theorem 2]{MakurPolyanskiy2018} exploits more sophisticated majorization arguments. We also remark that other sufficient conditions for degradation of square row stochastic matrices (or Markov kernels) by $q$-ary symmetric channels, which either use more information than the minimum entries of the matrices (see \cite[Proposition 8.1, Equations (8.1) and (8.2)]{MosselOleszkiewiczSen2013}), or assume further structure on the matrices such as additive noise over Abelian groups (see \cite[Theorem 3, Proposition 10]{MakurPolyanskiy2018} or \cite[Theorem 3]{MakurPolyanskiy2017}), have been derived in the literature.

\section{Achievability and Converse Bounds}
\label{Achievability and Converse Bounds}

We prove the achievability result in Theorem \ref{Thm: Achievability Bound} and the converse results in Theorems \ref{Thm: Converse Bound I} and \ref{Thm: Converse Bound II} in this section. We commence by presenting some useful lemmata in subsection \ref{Auxiliary Lemmata}, and then proceed to establishing the aforementioned theorems in subsections \ref{Achievability Bounds for DMCs} and \ref{Converse Bounds for Strictly Positive DMCs}, respectively. 

\subsection{Auxiliary Lemmata}
\label{Auxiliary Lemmata}

First, to establish our converse bounds in Theorems \ref{Thm: Converse Bound I} and \ref{Thm: Converse Bound II}, we will present two lemmata. The first lemma we will exploit is the following useful estimate of the entropy of a binomial distribution from the literature.

\begin{lemma}[Approximation of Binomial Entropy {\cite[Equation (7)]{AdellLekuonaYu2010}}]
\label{Lemma: Approximation of Binomial Entropy}
Given a binomial random variable $X \sim \mathsf{bin}(n,p)$ with $n \in \N$ and $p \in (0,1)$, we have
$$ \left|H(X) - \frac{1}{2} \log(2\pi e n p (1-p))\right| \leq \frac{c(p)}{n} $$
for some constant $c(p) \geq 0$ (that depends on $p$).
\end{lemma}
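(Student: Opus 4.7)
The plan is to approach the bound via the local central limit theorem combined with a careful application of Stirling's formula to the binomial probability mass function. Writing the entropy as
\[
H(X) = -\sum_{k=0}^{n} \P(X=k)\log \P(X=k),
\]
the idea is to compare each term to what one would obtain for a Gaussian density with the same mean $np$ and variance $\sigma^2 = np(1-p)$, whose differential entropy is exactly $\tfrac{1}{2}\log(2\pi e \sigma^2)$. This explains why $\tfrac{1}{2}\log(2\pi e np(1-p))$ is the natural leading-order quantity, and reduces the problem to quantifying the $O(1/n)$ gap between the discrete entropy and its Gaussian surrogate.

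First I would invoke a sharp form of Stirling's approximation, with an explicit $1/n$ remainder, to expand $\log\binom{n}{k}$. Combined with $\log \P(X=k) = \log\binom{n}{k} + k\log p + (n-k)\log(1-p)$, this yields an expression
\[
-\log \P(X=k) = \tfrac{1}{2}\log(2\pi np(1-p)) + \frac{(k-np)^2}{2np(1-p)}\log e + R_n(k),
\]
valid in a ``typical'' range such as $|k-np| \leq n^{2/3}$. The remainder $R_n(k)$ is controllable by $C(p)(1 + (k-np)^4/n^3)/n$ for a $p$-dependent constant $C(p)$, obtained from the Taylor expansion of $\log(k/n)$ and $\log((n-k)/n)$ about $p$ and $1-p$ to third order, together with the standard $1/n$ Stirling correction.

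Next I would split the sum defining $H(X)$ into the typical region and its complement. In the typical region, taking the expectation under $X \sim \bin(n,p)$ and using $\E[(X-np)^2] = np(1-p)$ together with $\E[(X-np)^4] = O(n^2)$ reproduces $\tfrac{1}{2}\log(2\pi enp(1-p))$ up to an error of order $c(p)/n$. For the tail region $|k-np| > n^{2/3}$, a Chernoff bound shows $\P(|X-np| > n^{2/3})$ decays faster than any inverse polynomial in $n$; combined with the crude estimate $-\P(X=k)\log \P(X=k) \leq \P(X=k)\log(n+1)$, this makes the tail contribution to $H(X)$ negligible against $1/n$.

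The main obstacle will be tracking the error terms uniformly in $k$ inside the typical range so that, after taking expectations, they yield a genuine $O(1/n)$ decay rather than the weaker $O(n^{-1/2})$ bound one naively obtains from the leading-order Stirling alone. This forces one to retain at least the cubic correction in the Taylor expansion and then exploit the fact that the third central moment of $\bin(n,p)$ equals $np(1-p)(1-2p) = O(n)$, which is $O(n^{-1/2})$ relative to $\sigma^3$, so that the odd-order Stirling corrections cancel against this moment and only $O(1/n)$ terms survive. The constant $c(p)$ obtained this way will be finite on compact subsets of $(0,1)$ but diverges as $p \to 0$ or $p \to 1$, consistent with the lemma's dependence on $p$.
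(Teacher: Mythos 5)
The paper does not prove this lemma itself: it is imported directly from \cite[Equation (7)]{AdellLekuonaYu2010}, consistent with the stated convention in the outline that results known in the literature are cited in the header rather than reproved. So there is no in-paper argument for you to be compared against, and I can only assess your proposal on its own terms.

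Your strategy---sharp Stirling expansion of $\log\binom{n}{k}$, identification of the Gaussian surrogate, moment matching under $\bin(n,p)$, and a Chernoff truncation of the tail---is the classical local-CLT route to such two-sided entropy estimates and does, with care, deliver an $O(1/n)$ bound. You correctly identify the crux: the na\"{i}ve bound stalls at $O(n^{-1/2})$ unless one keeps the odd-order Taylor corrections and exploits the scaling of the third central moment $np(1-p)(1-2p)$, so that the $n^{-1/2}$ contributions cancel in expectation. That diagnosis is exactly right.

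Two of your intermediate claims, however, are not correct as stated and would need repair. First, the remainder bound $R_n(k) \le C(p)\bigl(1 + (k-np)^4/n^3\bigr)/n$ cannot hold pointwise in the typical window: after subtracting $\tfrac12\log(2\pi n p(1-p))$ and the quadratic term, what is left still contains the linear piece $\tfrac12\log\!\bigl(\hat q(1-\hat q)/(p(1-p))\bigr)$ (of size $\Theta((k-np)/n)$) and the cubic piece of $nD(\hat q\,\|\,p)$ (of size $\Theta(n(\hat q - p)^3)$), both of which are $\Theta(n^{-1/2})$ at $|k-np|\sim\sqrt{n}$ and thus exceed the claimed envelope. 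These terms must be kept explicit and killed in expectation via $\E[X-np]=0$ and $\E[(X-np)^3]=O(n)$, as you yourself indicate in the final paragraph---but then the stated form of $R_n(k)$ is inconsistent with that plan. Second, the tail estimate $-\P(X=k)\log\P(X=k) \le \P(X=k)\log(n+1)$ is false precisely on the tail, where $\P(X=k) < 1/(n+1)$. The fix is easy: since $\P(X=k)\ge \min(p,1-p)^n$, one has $-\log\P(X=k)\le n\log\bigl(1/\min(p,1-p)\bigr)$ uniformly, so the tail contribution is at most $C(p)\,n\,\P\bigl(|X-np|>n^{2/3}\bigr)$, which is still super-polynomially small. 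With those two points repaired, your argument goes through and constitutes a valid, self-contained alternative to the cited reference.
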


The second lemma we will utilize illustrates that swapping the order of the DMC and the random permutation block in the communication system in Figure \ref{Figure: Permutation Channel} produces the statistically equivalent communication system in Figure \ref{Figure: Permutation Channel 2}. 

\begin{lemma}[Equivalent Model]
\label{Lemma: Equivalent Model}
Consider the channel $P_{W_1^n|X_1^n}$ shown in Figure \ref{Figure: Permutation Channel 2}, where the input codeword $X_1^n \in \X^n$ passes through an independent random permutation to produce $V_1^n \in \X^n$, and $V_1^n$ then passes through a DMC $P_{W|V}$ to produce the output codeword $W_1^n \in \Y^n$ so that (much like \eqref{Eq: Memorylessness})
$$ \forall v_1^n \in \X^n, \, w_1^n \in \Y^n, \enspace P_{W_1^n | V_1^n}(w_1^n | v_1^n) = \prod_{i = 1}^{n}{P_{W|V}(w_i|v_i)} \, . $$
If the DMC $P_{W|V}$ is equal to the DMC $P_{Z|X}$ entry-wise, i.e.,
\begin{equation}
\label{Eq: Equivalence Condition}
\forall x \in \X, \, z \in \Y, \enspace P_{W|V}(z|x) = P_{Z|X}(z|x) \, , 
\end{equation}
then the channel $P_{W_1^n|X_1^n}$ is equivalent to the channel $P_{Y_1^n|X_1^n}$ (described in subsection \ref{Permutation Channel Model} and Figure \ref{Figure: Permutation Channel}), i.e.,
$$ \forall x_1^n \in \X^n, \, y_1^n \in \Y^n, \enspace P_{W_1^n | X_1^n}(y_1^n | x_1^n) = P_{Y_1^n | X_1^n}(y_1^n | x_1^n) \, . $$
\end{lemma}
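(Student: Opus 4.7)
The plan is to write both conditional distributions as explicit sums over the symmetric group $\mathcal{S}_n$ and then identify them via a change of summation variable. Starting with the original system in Figure \ref{Figure: Permutation Channel}, I would marginalize over the intermediate variable $Z_1^n$:
\begin{equation*}
P_{Y_1^n|X_1^n}(y_1^n|x_1^n) = \sum_{z_1^n \in \Y^n} P_{Z_1^n|X_1^n}(z_1^n|x_1^n) \, \Pi(y_1^n|z_1^n).
\end{equation*}
Plugging in the memoryless factorization \eqref{Eq: Memorylessness} and the definition \eqref{Eq: The Random Perm Channel} of $\Pi$, the indicator inside the sum over $\lambda \in \mathcal{S}_n$ forces $z_i = y_{\lambda(i)}$, collapsing the sum over $z_1^n$ and leaving
\begin{equation*}
P_{Y_1^n|X_1^n}(y_1^n|x_1^n) = \frac{1}{n!} \sum_{\lambda \in \mathcal{S}_n} \prod_{i=1}^{n} P_{Z|X}(y_{\lambda(i)} | x_i).
\end{equation*}

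Next I would perform the analogous computation for the swapped system in Figure \ref{Figure: Permutation Channel 2}, marginalizing over $V_1^n$:
\begin{equation*}
P_{W_1^n|X_1^n}(w_1^n|x_1^n) = \sum_{v_1^n \in \X^n} \Pi(v_1^n|x_1^n) \prod_{i=1}^{n} P_{W|V}(w_i|v_i).
\end{equation*}
The permutation now acts on the input side, forcing $v_{\lambda(i)} = x_i$, i.e.\ $v_j = x_{\lambda^{-1}(j)}$. The product over $i$ therefore becomes $\prod_{i=1}^n P_{W|V}(w_i | x_{\lambda^{-1}(i)})$, and reindexing by $j = \lambda^{-1}(i)$ rewrites this as $\prod_{j=1}^n P_{W|V}(w_{\lambda(j)} | x_j)$. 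Under the hypothesis \eqref{Eq: Equivalence Condition} that $P_{W|V}$ and $P_{Z|X}$ agree entry-wise, this yields
\begin{equation*}
P_{W_1^n|X_1^n}(w_1^n|x_1^n) = \frac{1}{n!} \sum_{\lambda \in \mathcal{S}_n} \prod_{j=1}^{n} P_{Z|X}(w_{\lambda(j)} | x_j).
\end{equation*}

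Finally, I would note that the group-theoretic substitution $\lambda \mapsto \lambda^{-1}$ is a bijection of $\mathcal{S}_n$ with itself, so summing over $\lambda^{-1}$ is the same as summing over $\lambda$; after this substitution the two displayed expressions coincide term-by-term (with $w_1^n$ playing the role of $y_1^n$), giving the claimed identity. I do not anticipate any real obstacle here: the only subtlety is bookkeeping the direction of the permutation on the two sides, which is handled cleanly by invoking the fact that inversion is an involution on $\mathcal{S}_n$. The key conceptual point that the lemma records is precisely this symmetry: the uniform measure on $\mathcal{S}_n$ together with memorylessness makes the DMC and the permutation block commute in distribution.
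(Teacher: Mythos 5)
Your proof is correct, but it takes a different route from the paper's. You expand both conditional laws as explicit averages over the symmetric group: marginalizing over $Z_1^n$ collapses the indicator in \eqref{Eq: The Random Perm Channel} to give $P_{Y_1^n|X_1^n}(y_1^n|x_1^n) = \frac{1}{n!}\sum_{\lambda}\prod_i P_{Z|X}(y_{\lambda(i)}|x_i)$, and marginalizing over $V_1^n$ in the swapped system gives $\frac{1}{n!}\sum_{\lambda}\prod_i P_{W|V}(w_i|x_{\lambda^{-1}(i)})$, which your reindexing $j=\lambda^{-1}(i)$ turns into the same form under \eqref{Eq: Equivalence Condition} (in fact, after that reindexing the two displays already agree term-by-term, so the final $\lambda\mapsto\lambda^{-1}$ relabeling is harmless but not even needed). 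The paper instead never sums over $\mathcal{S}_n$: it marginalizes over the intermediate sequence restricted to a type class, using that the random permutation puts uniform mass $\binom{n}{n\hat{P}}^{-1}$ on each sequence of the given type, and then symmetrizes both $P_{W_1^n|X_1^n}$ and $P_{Y_1^n|X_1^n}$ over the type classes of $x_1^n$ and $y_1^n$ to arrive at a common double-sum expression. Your argument is more elementary and arguably cleaner as a proof of the identity itself; the paper's type-based computation has the side benefit of exhibiting explicitly that these conditional laws depend on $x_1^n$ and $y_1^n$ only through their empirical distributions, an invariance it reuses elsewhere (e.g., in the sufficiency arguments for the converse proofs). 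Both establish the lemma.
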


\begin{proof}
This follows from direct calculation. Fix any $x_1^n \in \X^n$ and $y_1^n \in \Y^n$. Observe that
\begin{align}
P_{W_1^n | X_1^n}(y_1^n | x_1^n) & = \sum_{\substack{v_1^n \in \X^n : \\ \hat{P}_{v_1^n} = \hat{P}_{x_1^n}}}{P_{W_1^n | V_1^n}(y_1^n | v_1^n) P_{V_1^n | X_1^n}(v_1^n | x_1^n)} \nonumber \\
& = \binom{n}{n \hat{P}_{x_1^n}}^{\! -1} \sum_{\substack{v_1^n \in \X^n : \\ \hat{P}_{v_1^n} = \hat{P}_{x_1^n}}}{\prod_{i = 1}^{n}{P_{W|V}(y_i|v_i)}} \nonumber \\
& = \binom{n}{n \hat{P}_{x_1^n}}^{\! -1} \sum_{\substack{v_1^n \in \X^n : \\ \hat{P}_{v_1^n} = \hat{P}_{x_1^n}}}{\prod_{i = 1}^{n}{P_{Z|X}(y_i|v_i)}} \nonumber \\
& = \binom{n}{n \hat{P}_{y_1^n}}^{\! -1} \binom{n}{n \hat{P}_{x_1^n}}^{\! -1} \cdot \nonumber \\
& \enspace \sum_{\substack{\tilde{y}_1^n \in \Y^n : \\ \hat{P}_{\tilde{y}_1^n} = \hat{P}_{y_1^n}}}{\sum_{\substack{v_1^n \in \X^n : \\ \hat{P}_{v_1^n} = \hat{P}_{x_1^n}}}{\prod_{i = 1}^{n}{P_{Z|X}(\tilde{y}_i|v_i)}}} \, ,
\label{Eq: Product Channel Form 1}
\end{align}
where the first equality uses the Markov property $X_1^n \rightarrow V_1^n \rightarrow W_1^n$, the third equality follows from \eqref{Eq: Equivalence Condition}, and the fourth equality holds because
$$ P_{W_1^n|X_1^n}(y_1^n|x_1^n) = P_{W_1^n|X_1^n}(\tilde{y}_1^n|x_1^n) $$
for every $\tilde{y}_1^n \in \Y^n$ that is a permutation of $y_1^n$, which follows from the expression in the third equality. Likewise, we have
\begin{align}
P_{Y_1^n | X_1^n}(y_1^n | x_1^n) & = \sum_{\substack{z_1^n \in \Y^n : \\ \hat{P}_{z_1^n} = \hat{P}_{y_1^n}}}{P_{Y_1^n | Z_1^n}(y_1^n | z_1^n) P_{Z_1^n | X_1^n}(z_1^n | x_1^n)} \nonumber \\
& = \binom{n}{n \hat{P}_{y_1^n}}^{\! -1} \sum_{\substack{z_1^n \in \Y^n : \\ \hat{P}_{z_1^n} = \hat{P}_{y_1^n}}}{\prod_{i = 1}^{n}{P_{Z|X}(z_i|x_i)}} \nonumber \\
& = \binom{n}{n \hat{P}_{x_1^n}}^{\! -1} \binom{n}{n \hat{P}_{y_1^n}}^{\! -1} \cdot \nonumber \\
& \enspace \sum_{\substack{\tilde{x}_1^n \in \X^n: \\ \hat{P}_{\tilde{x}_1^n} = \hat{P}_{x_1^n}}}{\sum_{\substack{z_1^n \in \Y^n : \\ \hat{P}_{z_1^n} = \hat{P}_{y_1^n}}}{\prod_{i = 1}^{n}{P_{Z|X}(z_i|\tilde{x}_i)}}}\,,
\label{Eq: Product Channel Form 2}
\end{align}
where the first equality uses the Markov property $X_1^n \rightarrow Z_1^n \rightarrow Y_1^n$, and the third equality holds because
$$ P_{Y_1^n|X_1^n}(y_1^n|x_1^n) = P_{Y_1^n|X_1^n}(y_1^n|\tilde{x}_1^n) $$
for every $\tilde{x}_1^n \in \X^n$ that is a permutation of $x_1^n$, which follows from the expression in the second equality. Therefore, using \eqref{Eq: Product Channel Form 1} and \eqref{Eq: Product Channel Form 2}, we have
$$ P_{W_1^n|X_1^n}(y_1^n|x_1^n) = P_{Y_1^n|X_1^n}(y_1^n|x_1^n) \, , $$
which completes the proof. 
\end{proof} 

Next, to derive our achievability bound in Theorem \ref{Thm: Achievability Bound}, we will require the following well-known concentration of measure inequality, which is a specialization of \textit{Hoeffding's inequality}.

\begin{lemma}[Hoeffding's Inequality {\cite[Theorems 1 and 2]{Hoeffding1963}}]
\label{Lemma: Hoeffding's Inequality}
Suppose $X_1,\dots,X_n$ are independent and identically distributed (i.i.d.) random variables such that $|X_1| \leq \sigma$ almost surely for some $\sigma > 0$. Then, for every $\gamma \geq 0$,
$$ \P\!\left(\frac{1}{n}\sum_{i = 1}^{n}{X_i} - \E\!\left[X_1\right] \geq \gamma \right) \leq \exp\!\left(-\frac{n \gamma^2}{2 \sigma^2}\right) $$
and
$$ \P\!\left(\frac{1}{n}\sum_{i = 1}^{n}{X_i} - \E\!\left[X_1\right] \leq -\gamma \right) \leq \exp\!\left(-\frac{n \gamma^2}{2 \sigma^2}\right) . $$
\end{lemma}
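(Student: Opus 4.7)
The plan is to prove this via the Chernoff bound (exponential Markov's inequality) combined with a moment generating function (MGF) estimate for bounded centered random variables. First, introduce the centered variables $Y_i \triangleq X_i - \E[X_1]$, which are i.i.d., zero-mean, and almost surely contained in an interval of length $2\sigma$, namely $[-\sigma - \E[X_1],\sigma - \E[X_1]]$. Let $S_n \triangleq \sum_{i = 1}^{n} Y_i$. For any $s \geq 0$, applying Markov's inequality to the nonnegative random variable $\exp(sS_n)$ yields
\[ \P(S_n \geq n\gamma) \leq \exp(-sn\gamma) \, \E[\exp(sS_n)] = \exp(-sn\gamma) \bigl(\E[\exp(sY_1)]\bigr)^{n}, \]
where the equality uses independence of the $Y_i$'s.

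The crux is then to bound the MGF $\E[\exp(sY_1)]$. I would establish Hoeffding's lemma: for any zero-mean random variable $Y$ supported in $[a,b]$, $\E[\exp(sY)] \leq \exp(s^{2}(b-a)^{2}/8)$ for all $s \in \R$. The standard derivation uses convexity of $y \mapsto \exp(sy)$ to obtain the pointwise bound $\exp(sy) \leq \frac{b-y}{b-a}\exp(sa) + \frac{y-a}{b-a}\exp(sb)$ on $[a,b]$; taking expectations and using $\E[Y] = 0$ gives $\E[\exp(sY)] \leq \frac{b\exp(sa) - a\exp(sb)}{b-a}$. Writing the logarithm of this upper bound as a function $\phi(s)$, a direct computation shows $\phi(0) = \phi'(0) = 0$ and $\phi''(s) \leq (b-a)^{2}/4$ uniformly in $s$, whence a second-order Taylor expansion yields $\phi(s) \leq s^{2}(b-a)^{2}/8$. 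Applying this lemma with $b - a = 2\sigma$ gives $\E[\exp(sY_1)] \leq \exp(s^{2}\sigma^{2}/2)$, and hence
\[ \P(S_n \geq n\gamma) \leq \exp\!\left(-sn\gamma + \tfrac{1}{2} n s^{2} \sigma^{2}\right). \]

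Finally, I would minimize the right-hand side over $s \geq 0$; the optimal choice $s = \gamma/\sigma^{2}$ produces the desired bound $\exp(-n\gamma^{2}/(2\sigma^{2}))$, which establishes the upper-tail inequality. The lower-tail inequality then follows by symmetry: replacing $X_i$ with $-X_i$ (which still satisfies $|-X_i| \leq \sigma$ almost surely) and applying the upper-tail bound already established to $\tfrac{1}{n}\sum_{i = 1}^{n}(-X_i) - \E[-X_1]$. The main obstacle is the MGF bound in Hoeffding's lemma; the Chernoff step and the optimization in $s$ are routine, whereas the sub-Gaussian estimate on the log-MGF requires the careful second-derivative analysis of $\phi(s)$ outlined above.
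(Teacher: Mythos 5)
Your proposal is correct and complete: the Chernoff bound combined with Hoeffding's lemma (the sub-Gaussian MGF bound $\E[\exp(sY_1)] \leq \exp(s^2\sigma^2/2)$ for a zero-mean variable supported in an interval of length $2\sigma$), followed by optimizing $s = \gamma/\sigma^2$ and handling the lower tail via the substitution $X_i \mapsto -X_i$, yields exactly the stated bounds. The paper does not prove this lemma but simply cites \cite[Theorems 1 and 2]{Hoeffding1963}, and your argument is essentially the classical proof given in that reference, so there is nothing to correct.
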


While Lemma \ref{Lemma: Hoeffding's Inequality} is used to provide exponentially decaying tail bounds on certain conditional probability of error terms in the proof of Theorem \ref{Thm: Achievability Bound} (see \eqref{Eq: Hoeffding Bound 1} and \eqref{Eq: Hoeffding Bound 2} in subsection \ref{Achievability Bounds for DMCs}), we will also show that much weaker tail bounds suffice for proving Theorem \ref{Thm: Achievability Bound} for DMCs with rank $2$. Indeed, our alternative achievability proof of Proposition \ref{Prop: Achievability Bound for DMCs with Rank 2} in subsection \ref{Achievability Bounds for DMCs} uses the two ensuing lemmata pertaining to the following \textit{binary hypothesis testing} problem. 

Fix any $n \in \N$, and two distinct probability distributions $P_X,Q_X \in \Simplex_{\X}$ (which can depend on $n$). Consider the hypothesis random variable $H \sim \Ber\big(\frac{1}{2}\big)$ (i.e., uniform prior), and likelihoods $P_{X|H}(\cdot|0) = P_X(\cdot)$ and $P_{X|H}(\cdot|1) = Q_X(\cdot)$, such that we observe $n$ samples $X_1^n$ that are drawn conditionally i.i.d. given $H$ from the likelihoods, viz.,
\begin{equation}
\label{Eq: BHT Problem}
\begin{aligned}
\text{Given } H = 0 & : X_1^n \stackrel{\text{i.i.d.}}{\sim} P_{X} \, , \\
\text{Given } H = 1 & : X_1^n \stackrel{\text{i.i.d.}}{\sim} Q_{X} \, .
\end{aligned}
\end{equation}
The (classical) objective of binary hypothesis testing is to decode the hypothesis $H$ with minimum probability of error from the observed samples $X_1^n$. It is well-known that the \textit{maximum likelihood (ML) decision rule} for $H$ based on $X_1^n$, $\hat{H}_{\mathsf{ML}}^n : \X^n \rightarrow \{0,1\}$, which is defined by
\begin{equation}
\forall x_1^n \in \X^n, \enspace \hat{H}_{\mathsf{ML}}^n(x_1^n) = \argmax_{h \in \{0,1\}}{\prod_{i=1}^{n}{P_{X|H}(x_i|h)}} \, ,
\end{equation}
or equivalently,
\begin{equation}
\label{Eq: ML decoder}
D(\hat{P}_{x_1^n}||Q_X) \quad \substack{\hat{H}_{\mathsf{ML}}^n(x_1^n) \, = \, 0 \\  \gtreqless \\ \hat{H}_{\mathsf{ML}}^n(x_1^n) \, = \, 1} \quad D(\hat{P}_{x_1^n}||P_X) \, ,
\end{equation}
achieves the minimum probability of error
\begin{equation}
P_{\mathsf{ML}}^{(n)} \triangleq \P\!\left(\hat{H}_{\mathsf{ML}}^n(X_1^n) \neq H\right) ,
\end{equation}
where the tie-breaking rule in \eqref{Eq: ML decoder} (when the likelihoods of $0$ and $1$ are equal) does not affect $P_{\mathsf{ML}}^{(n)}$ (see, e.g., \cite[Chapter 2]{Wornell2017}). Furthermore, \textit{Le Cam's relation} states that the ML decoding probability of error is completely characterized by the \textit{total variation (TV) distance} between the two likelihoods, cf. \cite[proof of Theorem 2.2(i)]{Tsybakov2009}. Recall that the TV distance between two probability measures $P_0$ and $P_1$ on a common measurable space $(\mathcal{U},\mathscr{F})$ is defined as
\begin{align}
\left\|P_0 - P_1\right\|_{\mathsf{TV}} & \triangleq \sup_{\mathcal{A} \in \mathscr{F}}{|P_0(\mathcal{A}) - P_1(\mathcal{A})|} \\
& = \frac{\left\|P_0 - P_1\right\|_{1}}{2} \, ,
\label{Eq: ell^1 norm char}
\end{align}
where \eqref{Eq: ell^1 norm char} is well-known (see, e.g., \cite[Chapter 4]{LevinPeresWilmer2009} for a proof in the discrete case). Then, we have
\begin{equation}
\label{Eq: Le Cam relation}
P_{\mathsf{ML}}^{(n)} = \frac{1}{2}\left(1 - \left\|P_X^{\otimes n}  - Q_X^{\otimes n} \right\|_{\mathsf{TV}}\right) ,
\end{equation}
where $P_X^{\otimes n}$ and $Q_X^{\otimes n}$ denote the $n$-fold product distributions of $X_1^n$ given $H = 0$ and $H = 1$, respectively. The next lemma presents a vector generalization of the so called ``second moment method for TV distance,'' cf. \cite[Lemma 4.2(iii)]{Evansetal2000}, and lower bounds $\big\|P_X^{\otimes n}  - Q_X^{\otimes n} \big\|_{\mathsf{TV}}$.

\begin{lemma}[Second Moment Method]
\label{Lemma: Second Moment Method}
For the binary hypothesis testing problem in \eqref{Eq: BHT Problem}, we have
$$ \left\|P_X^{\otimes n}  - Q_X^{\otimes n} \right\|_{\mathsf{TV}} \geq \frac{\left\|P_X - Q_X\right\|_{2}^2}{\displaystyle{4 \sum_{x \in \X}{\VAR\big(\hat{P}_{X_1^n}(x)\big)}}} \, . $$
\end{lemma}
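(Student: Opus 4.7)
The plan is to apply the classical second-moment (Cauchy--Schwarz) lower bound on total variation distance to a linear test statistic built from the empirical distribution vector, and then exploit the multinomial covariance structure to control the second moment by a trace inequality.

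The first step is to establish the generic bound
$$
\left\|P_X^{\otimes n} - Q_X^{\otimes n}\right\|_{\mathsf{TV}} \geq \frac{\bigl(\E_P[g] - \E_Q[g]\bigr)^2}{2\bigl(\E_P[g^2] + \E_Q[g^2]\bigr)}
$$
for every function $g:\X^n \to \R$. This follows by writing $\E_P[g] - \E_Q[g]$ as an integral against the signed measure $P_X^{\otimes n} - Q_X^{\otimes n}$, applying Cauchy--Schwarz with respect to $|dP_X^{\otimes n} - dQ_X^{\otimes n}|$, and using the pointwise bound $|dP - dQ| \leq dP + dQ$ together with $\int |dP_X^{\otimes n} - dQ_X^{\otimes n}| = 2\|P_X^{\otimes n} - Q_X^{\otimes n}\|_{\mathsf{TV}}$.

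The second step is to specialize this bound to the linear statistic
$$
g(x_1^n) = \sum_{x \in \X} \bigl(P_X(x) - Q_X(x)\bigr)\, \hat{P}_{x_1^n}(x),
$$
freely recentered by the constant $\tfrac{1}{2}(\E_P[g] + \E_Q[g])$ to minimize the denominator (which does not affect $\E_P[g] - \E_Q[g]$). A direct calculation gives $\E_P[g] - \E_Q[g] = \|P_X - Q_X\|_2^2$, producing the $\|P_X - Q_X\|_2^2$ factor in the numerator of the lemma.

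The third and key step is to bound the variance of $g$ under either hypothesis. Since $g = \phi^{\T} \hat{P}_{X_1^n}$ with $\phi = P_X - Q_X \in \R^{|\X|}$, its variance is the quadratic form $\phi^{\T} K \phi$, where $K$ is the covariance matrix of the empirical distribution vector $\hat{P}_{X_1^n}$ (which has the standard multinomial covariance structure). Positive semidefiniteness of $K$ yields $\phi^{\T} K \phi \leq \|\phi\|_2^2 \, \|K\|_{\mathsf{op}} \leq \|\phi\|_2^2 \, \mathrm{tr}(K)$, so that
$$
\VAR(g) \leq \|P_X - Q_X\|_2^2 \sum_{x \in \X} \VAR\bigl(\hat{P}_{X_1^n}(x)\bigr).
$$
This is the \emph{vector} improvement over the scalar second-moment method of Evans et al.: instead of relating $\VAR(g)$ to a single second moment, we use the trace of the multinomial covariance matrix, yielding precisely the sum $\sum_{x \in \X} \VAR(\hat{P}_{X_1^n}(x))$ that appears in the denominator of the lemma.

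Plugging the mean-gap and variance estimates back into the second-moment bound and simplifying then yields the claimed inequality. I expect the main obstacle to be tracking the constant $4$ cleanly: after optimal centering of $g$, a residual term proportional to $(\E_P[g] - \E_Q[g])^2 = \|P_X - Q_X\|_2^4$ remains in the denominator, and absorbing it requires restricting to the informative regime where the stated right-hand side is at most a fixed small constant (outside that regime, the stated inequality is automatic since the left-hand side is bounded below by a constant).
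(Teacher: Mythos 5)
Your core machinery is sound, and it is essentially the paper's own argument in different clothing: the paper centers the vector statistic $T_n = \hat{P}_{X_1^n} - \tfrac{1}{2}(P_X + Q_X)$ and applies a coordinate-wise Cauchy--Schwarz step (yielding the Vincze--Le Cam distance, which is then bounded by the TV distance), whereas you project onto the single direction $\phi = P_X - Q_X$ and use $\phi^{\T} K \phi \leq \|\phi\|_2^2 \, \mathrm{tr}(K)$; the two routes produce the same bound. However, your closing paragraph reveals a misreading that, as written, leaves a gap. The quantity $\VAR\big(\hat{P}_{X_1^n}(x)\big)$ in the denominator of the lemma is the \emph{unconditional} variance under the model \eqref{Eq: BHT Problem} with $H \sim \Ber\big(\tfrac{1}{2}\big)$, i.e., under the mixture $\tfrac{1}{2}P_X^{\otimes n} + \tfrac{1}{2}Q_X^{\otimes n}$, not the multinomial variance under a single hypothesis; it already contains the between-hypothesis term $\tfrac{1}{4}\left(P_X(x)-Q_X(x)\right)^2$ (this is computed explicitly in the paper's proof of Lemma \ref{Lemma: Testing between Converging Hypotheses}). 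Consequently the ``residual'' $\|P_X - Q_X\|_2^4$ term you worry about is not an obstacle: after centering $g$ at $\tfrac{1}{2}\left(\E_P[g]+\E_Q[g]\right)$, one has $\E_P[\tilde{g}^2] + \E_Q[\tilde{g}^2] = 2\,\phi^{\T} K \phi$ with $K$ the mixture covariance of $\hat{P}_{X_1^n}$, so your step-one bound gives $\left\|P_X^{\otimes n} - Q_X^{\otimes n}\right\|_{\mathsf{TV}} \geq \|\phi\|_2^4 / \left(4\,\phi^{\T}K\phi\right) \geq \|\phi\|_2^2/\left(4\,\mathrm{tr}(K)\right)$, which is exactly the lemma with the constant $4$ and no regime split.

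Two further cautions. First, the fallback you propose for the ``uninformative'' regime---that the left-hand side is bounded below by a constant whenever the stated right-hand side exceeds a small constant---is unjustified as stated: a constant lower bound on the TV distance in that regime is precisely what would need to be proved, so the regime split would not rescue a conditional-variance reading of the denominator. Second, under the mixture $\hat{P}_{X_1^n}$ is not multinomial; $K$ is the average of the two multinomial covariances plus the rank-one term $\tfrac{1}{4}\phi\phi^{\T}$. Your trace bound uses only positive semidefiniteness of $K$, so this is harmless, but the phrase ``standard multinomial covariance structure'' should be dropped once the variance is interpreted correctly.
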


Lemma \ref{Lemma: Second Moment Method} is proved in appendix \ref{Proof of Lemma Second Moment Method}. Moreover, as mentioned in the remark in appendix \ref{Proof of Lemma Second Moment Method}, Lemma \ref{Lemma: Second Moment Method} can also be construed as a variant of the \textit{Hammersley-Chapman-Robbins (HCR) bound} in statistics \cite{Hammersley1950,ChapmanRobbins1951}.

Our final lemma, Lemma \ref{Lemma: Testing between Converging Hypotheses}, establishes an upper bound on $P_{\mathsf{ML}}^{(n)}$ using Lemma \ref{Lemma: Second Moment Method}. It will be used to derive Proposition \ref{Prop: Achievability Bound for DMCs with Rank 2} in subsection \ref{Achievability Bounds for DMCs}\textemdash a specialization of Theorem \ref{Thm: Achievability Bound} for DMCs with rank $2$.

\begin{lemma}[Testing between Converging Hypotheses]
\label{Lemma: Testing between Converging Hypotheses} 
For the binary hypothesis problem in \eqref{Eq: BHT Problem}, suppose the $\ell^2$-distance between $P_X$ and $Q_X$ is lower bounded by
\begin{equation}
\label{Eq: 2-norm condition}
\left\|P_X - Q_X\right\|_2 \geq \frac{1}{n^{\frac{1}{2} - \epsilon_n}} 
\end{equation}
for some constant $\epsilon_n \in \big(0,\frac{1}{2}\big)$ (which may depend on $n$). Then, we have
$$ P_{\mathsf{ML}}^{(n)} \leq \frac{|\X|}{2|\X| + 2 n^{2\epsilon_n}} \, , $$
which implies that $\displaystyle{\lim_{n \rightarrow \infty}{P_{\mathsf{ML}}^{(n)}} = 0}$ when $\displaystyle{\lim_{n \rightarrow 
\infty}{n^{\epsilon_n}} = +\infty}$.
\end{lemma}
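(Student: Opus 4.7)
The plan is to combine Le Cam's relation \eqref{Eq: Le Cam relation}, the vector Second Moment Method from Lemma \ref{Lemma: Second Moment Method}, and the trivial ceiling $\|\cdot\|_{\mathsf{TV}} \leq 1$. First, Le Cam's relation converts the task into lower bounding the TV distance between the product distributions, since $P_{\mathsf{ML}}^{(n)} = \frac{1}{2}\bigl(1 - \|P_X^{\otimes n} - Q_X^{\otimes n}\|_{\mathsf{TV}}\bigr)$. Next, Lemma \ref{Lemma: Second Moment Method} immediately yields $\|P_X^{\otimes n} - Q_X^{\otimes n}\|_{\mathsf{TV}} \geq \alpha$ with $\alpha \triangleq \|P_X - Q_X\|_2^2 / (4S)$, where $S \triangleq \sum_{x \in \X} \VAR(\hat{P}_{X_1^n}(x))$.

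The denominator $S$ is easy to control: each coordinate $\hat{P}_{X_1^n}(x)$ is a sample mean of $n$ i.i.d.\ Bernoulli indicators (under either hypothesis), so $\VAR(\hat{P}_{X_1^n}(x)) \leq 1/(4n)$ via the elementary bound $p(1-p) \leq 1/4$, and therefore $S \leq |\X|/(4n)$. If one stopped here and substituted directly, the bound $\alpha \geq n^{2\epsilon_n}/|\X|$ (using the hypothesis $\|P_X - Q_X\|_2^2 \geq n^{2\epsilon_n - 1}$) would plug into Le Cam to give $P_{\mathsf{ML}}^{(n)} \leq \frac{1}{2}(1 - n^{2\epsilon_n}/|\X|)$, but this is vacuous (indeed, negative) precisely in the interesting regime $n^{2\epsilon_n} > |\X|$. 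To repair this, I would use the ceiling $\|\cdot\|_{\mathsf{TV}} \leq 1$ together with the elementary inequality $\min(\alpha,1) \geq \alpha/(1+\alpha)$, yielding
$$\|P_X^{\otimes n} - Q_X^{\otimes n}\|_{\mathsf{TV}} \;\geq\; \frac{\|P_X - Q_X\|_2^2}{4S + \|P_X - Q_X\|_2^2}\,.$$
Since this right-hand side is increasing in $\|P_X-Q_X\|_2^2$ and decreasing in $S$, substituting $\|P_X - Q_X\|_2^2 \geq n^{2\epsilon_n - 1}$ and $S \leq |\X|/(4n)$ produces $\|\cdot\|_{\mathsf{TV}} \geq n^{2\epsilon_n}/(|\X| + n^{2\epsilon_n})$. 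Plugging back into Le Cam's relation gives the claimed bound $P_{\mathsf{ML}}^{(n)} \leq |\X|/(2|\X| + 2 n^{2\epsilon_n})$, and the limit statement is immediate since $|\X|$ is fixed with respect to $n$.

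The main obstacle, as flagged above, is navigating the regime where $\alpha > 1$: a naive chain of Lemma \ref{Lemma: Second Moment Method} followed by Le Cam breaks down there. The algebraic pivot $\min(\alpha, 1) \geq \alpha/(1 + \alpha)$ is what converts the raw Second Moment Method output into a form whose denominator exhibits the correct $|\X| + n^{2\epsilon_n}$ structure appearing in the conclusion; once that pivot is in place, the remaining substitutions are routine.
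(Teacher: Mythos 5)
There is a genuine gap: you have misidentified the probability measure under which the variance in Lemma \ref{Lemma: Second Moment Method} is taken. In that lemma, $\VAR\big(\hat{P}_{X_1^n}(x)\big)$ is the variance under the \emph{unconditional} law of $X_1^n$ in the model \eqref{Eq: BHT Problem}, i.e., under the equal mixture of $P_X^{\otimes n}$ and $Q_X^{\otimes n}$ (this is forced by the choice $C_n = \frac{1}{2}P_X + \frac{1}{2}Q_X$ in \eqref{Eq: Constant Shift}, since the second moment $\E[\|T_n\|_2^2]$ there is taken with respect to $P_{T_n} = \frac{1}{2}P_{T_n}^- + \frac{1}{2}P_{T_n}^+$). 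Under this mixture the indicators $\I\{X_i = x\}$ are exchangeable but \emph{not} independent, and a direct computation (this is exactly what the paper's proof does) gives
$$ \VAR\big(\hat{P}_{X_1^n}(x)\big) = \frac{P_X(x)(1-P_X(x)) + Q_X(x)(1-Q_X(x))}{2n} + \frac{\left(P_X(x)-Q_X(x)\right)^2}{4} \leq \frac{1}{4n} + \frac{\left(P_X(x)-Q_X(x)\right)^2}{4} \, , $$
so your claim $S \leq |\X|/(4n)$ is false in general: the variance carries the non-vanishing term $\frac{1}{4}\left(P_X(x)-Q_X(x)\right)^2$. That term is precisely what makes the second moment bound automatically at most $1$ and what produces the $|\X| + n\left\|P_X-Q_X\right\|_2^2$ structure in the denominator; no ceiling argument is needed, and the paper simply substitutes this variance into Lemma \ref{Lemma: Second Moment Method} and \eqref{Eq: Le Cam relation}.

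Because of this, your ``repair'' does not restore soundness. With $S$ read as a single-hypothesis variance, the statement $\left\|P_X^{\otimes n} - Q_X^{\otimes n}\right\|_{\mathsf{TV}} \geq \alpha$ is not what Lemma \ref{Lemma: Second Moment Method} asserts, and it is literally false whenever $\alpha > 1$\textemdash the very regime you care about\textemdash so the chain $\left\|P_X^{\otimes n} - Q_X^{\otimes n}\right\|_{\mathsf{TV}} \geq \min(\alpha,1) \geq \alpha/(1+\alpha)$ cannot be deduced, since its premise fails. The ``vacuous (indeed, negative)'' probability bound you noticed was the symptom that the variance had been evaluated under the wrong measure, not that a $\mathsf{TV} \leq 1$ ceiling was missing. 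Your final displayed inequality does happen to be correct\textemdash it coincides with the paper's bound because the mixture variance dominates the single-hypothesis variance\textemdash but the argument as written does not establish it. The fix is to compute $\VAR\big(\hat{P}_{X_1^n}(x)\big)$ under the mixture as above; then $4\sum_{x}\VAR\big(\hat{P}_{X_1^n}(x)\big) \leq \frac{|\X|}{n} + \left\|P_X-Q_X\right\|_2^2$, Lemma \ref{Lemma: Second Moment Method} directly yields $\left\|P_X^{\otimes n} - Q_X^{\otimes n}\right\|_{\mathsf{TV}} \geq \left\|P_X-Q_X\right\|_2^2 \big/ \big(\frac{|\X|}{n} + \left\|P_X-Q_X\right\|_2^2\big)$, and the stated bound follows from \eqref{Eq: Le Cam relation} and \eqref{Eq: 2-norm condition}.
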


Lemma \ref{Lemma: Testing between Converging Hypotheses} is established in appendix \ref{Proof of Lemma Testing between Converging Hypotheses}. It illustrates that as long as the Euclidean distance between the likelihoods $P_{X|H}(\cdot|0)$ and $P_{X|H}(\cdot|1)$ vanishes slower than $\Theta(1/\!\sqrt{n})$, we can decode the hypothesis $H$ with vanishing probability of error as $n \rightarrow \infty$. Intuitively, when $\|P_X - Q_X\|_2 = \Theta\big(1/n^{\frac{1}{2}-\epsilon_n}\big)$ and we neglect $\epsilon_n$, Lemma \ref{Lemma: Testing between Converging Hypotheses} holds because the sum of the variances of the entries of the sufficient statistic $T_n = \hat{P}_{X_1^n} - \frac{1}{2}P_X - \frac{1}{2}Q_X$ (defined in \eqref{Eq: SS} and \eqref{Eq: Constant Shift} in the proof of Lemma \ref{Lemma: Second Moment Method} in appendix \ref{Proof of Lemma Second Moment Method}) is $O(1/n)$. So, as long as the Euclidean distance between the two likelihoods is $\omega(1/\!\sqrt{n})$, it is possible to distinguish between the two hypotheses. We also remark that tighter upper bounds on $P_{\mathsf{ML}}^{(n)}$ can be obtained using standard exponential concentration of measure inequalities. However, the simpler second moment method approach will suffice for our proof of Proposition \ref{Prop: Achievability Bound for DMCs with Rank 2} (while our proof of Theorem \ref{Thm: Achievability Bound} will in fact use stronger concentration bounds).

\subsection{Achievability Bounds for DMCs}
\label{Achievability Bounds for DMCs}

In this subsection, we first prove our main achievability result in Theorem \ref{Thm: Achievability Bound} and then provide an alternative proof for DMCs with rank $2$. Recall the formalism of subsection \ref{Permutation Channel Model}, which describes the noisy permutation channel model with a DMC $P_{Z|X}$.

\begin{proof}[Proof of Theorem \ref{Thm: Achievability Bound}] 
Since the lower bound in Theorem \ref{Thm: Achievability Bound} trivially holds for the case $\rank(P_{Z|X}) = 1$, we assume without loss of generality that $r \triangleq \rank(P_{Z|X}) \geq 2$. Let $\X^{\prime} \subseteq \X$ denote any (fixed) subset of $\X$ such that $|\X^{\prime}| = r$ and the set of conditional distributions $\{P_{Z|X}(\cdot|x) \in \Simplex_{\Y} : x \in \X^{\prime}\}$ are linearly independent (as vectors in $\R^{|\Y|}$),\footnote{This implies that the extreme points of the convex hull of $\{P_{Z|X}(\cdot|x) \in \Simplex_{\Y} : x \in \X^{\prime}\}$ are precisely $\{P_{Z|X}(\cdot|x) \in \Simplex_{\Y} : x \in \X^{\prime}\}$.} and let $\tilde{P}_{Z|X} \in \R^{r \times |\Y|}$ denote the row stochastic matrix whose rows are given by $\{P_{Z|X}(\cdot|x)\in \Simplex_{\Y} : x \in \X^{\prime}\}$. Furthermore, define
\begin{equation}
\Simplex_{r,k} \triangleq \left\{\left(\frac{p_1}{k},\dots,\frac{p_r}{k}\right) : p_1,\dots,p_r \in \N\cup\!\{0\}, \, \sum_{i = 1}^{r}{p_i} = k \right\} 
\end{equation} 
as the intersection of the scaled integer lattice $\frac{1}{k} \Z^r$ and the probability simplex in $\R^{r}$, where $k \in \N$ is some large constant. Under the setup of subsection \ref{Permutation Channel Model}, for any $\epsilon \in \big(0,\frac{1}{2}\big)$, consider the following message set and encoder-decoder pair:
\begin{enumerate}
\item The message set $\M = \Simplex_{r,k}$ with $k = {\big \lfloor} n^{\frac{1}{2}-\epsilon} {\big \rfloor}$ so that the cardinality of $\M$ is 
\begin{equation}
\label{Eq: Order for Permutation Channel} 
|\M| = \binom{k + r - 1}{r - 1} = \Theta\!\left(n^{\frac{r-1}{2}-\epsilon(r-1)}\right) ,
\end{equation}
where the elements of $\M$ have been re-indexed for convenience.
\item The randomized encoder $f_n : \Simplex_{r,k} \rightarrow \X^n$ is given by
\begin{equation}
\label{Eq: Randomized Encoder 1}
\forall p = \left(\frac{p_1}{k},\dots,\frac{p_r}{k}\right) \in \Simplex_{r,k}, \enspace f_n(p) = X_1^n \stackrel{\text{i.i.d.}}{\sim} P_X\, , 
\end{equation}
where $X_1^n$ are i.i.d. according to a probability distribution $P_X \in \Simplex_{\X}$ such that
\begin{equation}
\label{Eq: Input Message Representation}
P_X(x) = 
\begin{cases}
\frac{p_x}{k}, & \text{for } x \in \X^{\prime} \\ 
0, & \text{for } x \in \X\backslash\X^{\prime}
\end{cases}
\end{equation}
and $\X^{\prime} =  \{1,\dots,r\}$ without loss of generality.
\item Instead of the ML decoder which achieves minimum probability of error, consider the (sub-optimal) element-wise thresholding decoder $g_n : \Y^n \rightarrow \Simplex_{r,k} \cup \{\mathsf{e}\}$ defined by
\begin{equation}
\label{Eq: Decoder 1}
g_n(y_1^n) = 
\begin{cases}
\big(\frac{\hat{p}_1}{k},\dots,\frac{\hat{p}_r}{k}\big), & \text{if } \big(\frac{\hat{p}_1}{k},\dots,\frac{\hat{p}_r}{k}\big) \in \Simplex_{r,k} \\ 
\, \mathsf{e}, & \text{otherwise} 
\end{cases} 
\end{equation}
for every $y_1^n \in \Y^n$, where for each $x \in \X^{\prime}$,
\begin{equation}
\label{Eq: Element-wise Decoder}
\hat{p}_x = \argmin_{j \in \{0,\dots,k\}}{\left|\sum_{y \in \Y}{\hat{P}_{y_1^n}(y) \left[\tilde{P}_{Z|X}^{\dagger}\right]_{y,x}} - \frac{j}{k}\right|} \, ,
\end{equation}
where we choose a minimizer randomly when there are several. 
\end{enumerate}
This encoder-decoder pair completely specifies the communication system model in subsection \ref{Permutation Channel Model}. Intuitively, this decoder performs reasonably well because $\tilde{P}_{Z|X}^{\dagger}$ is a valid right inverse of $\tilde{P}_{Z|X}$ (since the rows of $\tilde{P}_{Z|X}$ are linearly independent). Indeed, conditioned on sending a particular message, $\hat{P}_{Y_1^n}$ is ``close'' to $P_Z$ (which is the true distribution of the $Y_i$'s as shown below) with high probability when $n$ is large. So, $\sum_{y \in \Y}{\hat{P}_{Y_1^n}(y) \big[\tilde{P}_{Z|X}^{\dagger}\big]_{y,x}}$ is ``close'' to the true $P_X(x)$ for all $x \in \X^{\prime}$ with high probability. We now analyze the average probability of error for this coding scheme.

Let us condition on the event $\{M = p\}$ for some $p = \big(\frac{p_1}{k},\dots,\frac{p_r}{k}\big) \in \Simplex_{r,k}$. Then, we have
$$ X_1^n \stackrel{\text{i.i.d.}}{\sim} P_X \, , \quad Z_1^n \stackrel{\text{i.i.d.}}{\sim} P_Z \, , \quad Y_1^n \stackrel{\text{i.i.d.}}{\sim} P_Z \, , $$
where $P_Z$ denotes the output distribution when $P_X$, defined via \eqref{Eq: Input Message Representation}, is ``pushed forward'' through the channel $P_{Z|X}$, $Z_1^n$ are i.i.d. because the channel $P_{Z|X}$ is memoryless, and $Y_1^n$ are i.i.d. because they are the output of passing $Z_1^n$ through an independent random permutation. Let $\P_{p}$ represent the underlying probability measure after conditioning on $\{M = p\}$. The conditional probability that our element-wise thresholding decoder makes an error is upper bounded by
\begin{align}
& \P_p\big(\hat{M} \neq M\big) \nonumber \\
& \overset{\eqmakebox[B][c]{}}{=} \P_p\big(g_n(Y_1^n) \neq p\big) \nonumber \\
& \overset{\eqmakebox[B][c]{}}{=} \P_p\big(\exists\, x \in \X^{\prime}, \, \hat{p}_x \neq p_x\big) \nonumber \\
& \overset{\eqmakebox[B][c]{\footnotesize (a)}}{\leq} \sum_{x \in \X^{\prime}}{\P_p\big(\hat{p}_x \in \{0,\dots,k\}\!\backslash\!\{p_x\}\big)} \nonumber \\
& \overset{\eqmakebox[B][c]{\footnotesize (b)}}{=} \sum_{x \in \X^{\prime}}\!{\left(\sum_{j > p_x}{\P_p(\hat{p}_x = j)} + \sum_{j < p_x}{\P_p(\hat{p}_x = j)}\right)} \nonumber \\
& \overset{\eqmakebox[B][c]{\footnotesize (c)}}{\leq} \sum_{x \in \X^{\prime}}{\sum_{j > p_x}{\P_p\!\left(\sum_{y \in \Y}{\hat{P}_{Y_1^n}(y) \left[\tilde{P}_{Z|X}^{\dagger}\right]_{y,x}} \geq \frac{p_x + j}{2 k} \right)}} \nonumber \\
& \quad \, + \sum_{x \in \X^{\prime}}{\sum_{j < p_x}{\P_p\!\left(\sum_{y \in \Y}{\hat{P}_{Y_1^n}(y) \left[\tilde{P}_{Z|X}^{\dagger}\right]_{y,x}} \leq \frac{p_x + j}{2 k} \right)}}  ,
\label{Eq: Conditional Probability of Error}
\end{align}
where (a) follows from the union bound, (b) splits a summation over $j \in \{0,\dots,k\}\!\backslash\!\{p_x\}$ into two summations (and one of these summations is $0$ if $p_x \in \{0,k\}$), and (c) holds because $\hat{p}_x = j$ implies that $k\sum_{y \in \Y}{\hat{P}_{Y_1^n}(y) \big[\tilde{P}_{Z|X}^{\dagger}\big]_{y,x}}$ is closer to $j$ than $p_x$ due to \eqref{Eq: Element-wise Decoder}, and we count the tie case, where $k \sum_{y \in \Y}{\hat{P}_{Y_1^n}(y) \big[\tilde{P}_{Z|X}^{\dagger}\big]_{y,x}}$ is equally close to $j$ and $p_x$, as an error since this gives us an upper bound on the desired conditional probability of error.  

To show that this upper bound in \eqref{Eq: Conditional Probability of Error} vanishes, observe that for any $x \in \X^{\prime}$ and any $j > p_x$ (assuming $p_x < k$),
\begin{equation}
\begin{aligned}
& \P_p\!\left(\sum_{y \in \Y}{\hat{P}_{Y_1^n}(y) \left[\tilde{P}_{Z|X}^{\dagger}\right]_{y,x}} \geq \frac{p_x + j}{2 k} \right) \\
& \qquad \qquad = \P_p\!\left(\frac{1}{n}\sum_{i = 1}^{n}{\left[\tilde{P}_{Z|X}^{\dagger}\right]_{Y_i,x}} - \frac{p_x}{k} \geq \frac{j - p_x}{2 k} \right) ,
\end{aligned}
\label{Eq: Hoeffding Bound 00}
\end{equation}
which holds because
$$ \sum_{y \in \Y}{\hat{P}_{Y_1^n}(y) \left[\tilde{P}_{Z|X}^{\dagger}\right]_{y,x}} = \frac{1}{n}\sum_{i = 1}^{n}{\left[\tilde{P}_{Z|X}^{\dagger}\right]_{Y_i,x}} $$
almost surely. To bound the right hand side of \eqref{Eq: Hoeffding Bound 00}, we notice three facts:
\begin{enumerate}
\item $\big\{\big[\tilde{P}_{Z|X}^{\dagger}\big]_{Y_i,x} : i \in \{1,\dots,n\}\big\}$ are i.i.d. random variables, because $\big[\tilde{P}_{Z|X}^{\dagger}\big]_{Y_i,x}$ is a deterministic function of $Y_i$ (since $\tilde{P}_{Z|X}^{\dagger}$ is a known deterministic matrix), and $Y_1^n$ are i.i.d. random variables given $M = p$.
\item For each $i \in \{1,\dots,n\}$, $\big[\tilde{P}_{Z|X}^{\dagger}\big]_{Y_i,x}$ is bounded almost surely by
\begin{align*}
\left|\left[\tilde{P}_{Z|X}^{\dagger}\right]_{Y_i,x}\right| & = \left|e_{Y_i}^{\T} \tilde{P}_{Z|X}^{\dagger} e_x\right| \\
& \leq \max_{\substack{u \in \R^{|\Y|}, \, v \in \R^r : \\ \left\|u\right\|_2 = \left\|v\right\|_2 = 1}}{\left|u^{\T} \tilde{P}_{Z|X}^{\dagger} v\right|} \\
& = \left\|\tilde{P}_{Z|X}^{\dagger}\right\|_{\mathsf{op}} \triangleq \sigma \, ,
\end{align*}
where $e_j$ denotes the $j$th standard basis vector with unity at the $j$th position and zero elsewhere, and the last equality follows from the \textit{Courant-Fischer-Weyl min-max theorem}, cf. \cite[Section 3.1, Problem 6, p.155]{HornJohnson1991}, \cite[Lemma 2]{RakocevicWimmer2003}. 
\item For each $i \in \{1,\dots,n\}$, $\big[\tilde{P}_{Z|X}^{\dagger}\big]_{Y_i,x}$ has expected value
\begin{align*}
\E_{p}\!\left[\left[\tilde{P}_{Z|X}^{\dagger}\right]_{Y_i,x}\right] & = \sum_{y \in \Y}{\E_{p}\!\left[\hat{P}_{Y_1^n}(y)\right] \left[\tilde{P}_{Z|X}^{\dagger}\right]_{y,x}} \\
& =  \sum_{y \in \Y}{P_Z(y) \left[\tilde{P}_{Z|X}^{\dagger}\right]_{y,x}} \\
& = P_X(x) \\
& = \frac{p_x}{k} \, ,
\end{align*}
where $\E_p[\cdot]$ represents expectation with respect to the conditional probability distribution of $Y_1^n$ given $M = p$, the third equality crucially uses the fact that $P_{Z|X}$ has a right inverse since its rows are linearly independent,\footnote{The existence of a right inverse of $P_{Z|X}$ ensures that the input distribution $P_X$ can be uniquely recovered from the output distribution $P_Z$ and $P_{Z|X}$. This elucidates why our achievability bound depends on the rank of $P_{Z|X}$.} and the final equality follows from \eqref{Eq: Input Message Representation}.
\end{enumerate}
Using these facts, we can apply Lemma \ref{Lemma: Hoeffding's Inequality} to the right hand side of \eqref{Eq: Hoeffding Bound 00} and obtain
\begin{equation}
\begin{aligned}
& \P_p\!\left(\sum_{y \in \Y}{\hat{P}_{Y_1^n}(y) \left[\tilde{P}_{Z|X}^{\dagger}\right]_{y,x}} \geq \frac{p_x + j}{2 k} \right) \\
& \qquad \qquad \qquad \qquad \qquad \leq \exp\!\left(-\frac{n (j - p_x)^2}{8 \sigma^2 k^2}\right) 
\end{aligned}
\label{Eq: Hoeffding Bound 1}
\end{equation}
for any $x \in \X^{\prime}$ and any $j > p_x$ (assuming $p_x < k$). Likewise, for any $x \in \X^{\prime}$ and any $j < p_x$ (assuming $p_x > 0$), Lemma \ref{Lemma: Hoeffding's Inequality} yields
\begin{equation}
\begin{aligned}
& \P_p\!\left(\sum_{y \in \Y}{\hat{P}_{Y_1^n}(y) \left[\tilde{P}_{Z|X}^{\dagger}\right]_{y,x}} \leq \frac{p_x + j}{2 k} \right) \\
& \qquad \qquad \qquad \qquad \qquad \leq \exp\!\left(-\frac{n (j - p_x)^2}{8 \sigma^2 k^2}\right) .
\end{aligned}
\label{Eq: Hoeffding Bound 2}
\end{equation}
So, bounding the terms in \eqref{Eq: Conditional Probability of Error} using \eqref{Eq: Hoeffding Bound 1} and \eqref{Eq: Hoeffding Bound 2} produces
\begin{align}
\P_p\big(\hat{M} \neq M\big) & \leq \sum_{x \in \X^{\prime}}{\sum_{j \in \{0,\dots,k\}\!\backslash\!\{p_x\}}{\exp\!\left(-\frac{n (j - p_x)^2}{8 \sigma^2 k^2}\right)}} \nonumber \\
& \leq \sum_{x \in \X^{\prime}}{\sum_{j \in \{0,\dots,k\}\!\backslash\!\{p_x\}}{\exp\!\left(-\frac{n}{8 \sigma^2 n^{1-2\epsilon}}\right)}} \nonumber \\
& \leq r \, n^{\frac{1}{2}-\epsilon} \exp\!\left(-\frac{n^{2\epsilon}}{8 \sigma^2}\right)
\label{Eq: Maximal probability of error}
\end{align}
where the second inequality holds because $k = \lfloor n^{\frac{1}{2}-\epsilon} \rfloor \leq n^{\frac{1}{2}-\epsilon}$ and $|j - p_x| \geq 1$ for all $x \in \X^{\prime}$ and all $j \neq p_x$, and the third inequality holds because $|\X^{\prime}| = r$ and $k \leq n^{\frac{1}{2}-\epsilon}$.

Lastly, taking expectations with respect to the law of $M$ in \eqref{Eq: Maximal probability of error} yields
\begin{equation}
\label{Eq: Overall probability of error}
P_{\mathsf{error}}^{(n)} \leq r \, n^{\frac{1}{2}-\epsilon} \exp\!\left(-\frac{n^{2\epsilon}}{8 \sigma^2}\right) ,
\end{equation}
which implies that $\lim_{n \rightarrow \infty}{P_{\mathsf{error}}^{(n)}} = 0$. Therefore, using \eqref{Eq: Order for Permutation Channel}, the rate
$$ R = \lim_{n \rightarrow \infty}{\frac{\displaystyle{\log\!\left(\binom{k + r - 1}{r - 1}\right)}}{\log(n)}} = \frac{r-1}{2} - \epsilon(r-1) $$
is achievable for every $\epsilon \in \big(0,\frac{1}{2}\big)$, and $\Cperm(P_{Z|X}) \geq \frac{r-1}{2}$. This completes the proof.
\end{proof}

We now make two pertinent remarks regarding Theorem \ref{Thm: Achievability Bound}. Firstly, the randomized encoder and element-wise thresholding decoder presented in the achievability proof constitute a computationally tractable coding scheme. Indeed, unlike the random coding argument in traditional channel coding, the element-wise thresholding decoder takes $O(n)$ (i.e., linear) time, because constructing $\hat{P}_{y_1^n}$ from $y_1^n$ requires $O(n)$ time and computing \eqref{Eq: Element-wise Decoder} requires $O\big(\sqrt{n}\big)$ time. Therefore, communication via noisy permutation channels appears to not require the development of conceptually sophisticated coding schemes to theoretically achieve capacity. (Of course, other code constructions could be of utility based on alternative practical considerations.) Furthermore, our achievability proof also implies the existence of a good deterministic code using a simple application of the \textit{probabilistic method} (see, e.g., \cite[Lemma 2.2]{BlochBarros2011}).

Secondly, although we have presented Theorem \ref{Thm: Achievability Bound} under an average probability of error criterion, our achievability proof establishes a lower bound on noisy permutation channel capacity under a maximal probability of error criterion as well; see, e.g., \eqref{Eq: Maximal probability of error}. More generally, the noisy permutation channel capacity of a DMC remains the same under a maximal probability of error criterion. This follows from a straightforward \textit{expurgation} argument similar to \cite[Theorem 18.3, Corollary 18.1]{PolyanskiyWu2017Notes} or \cite[Section 7.7, p.204]{CoverThomas2006}.\footnote{Indeed, $\Cperm$ under a maximal probability of error criterion is clearly upper bounded by $\Cperm$ under an average probability of error criterion, and expurgating the code used to achieve $\Cperm$ under an average probability of error criterion shows that this bound can be met with equality.}

For the special case where $r = \rank(P_{Z|X}) = 2$, we next present an alternative proof of Theorem \ref{Thm: Achievability Bound} which exploits the second moment method bound in Lemma \ref{Lemma: Testing between Converging Hypotheses}. For convenience, we also state the corresponding achievability result in the ensuing proposition.

\begin{proposition}[Achievability Bound for DMCs with Rank $2$]
\label{Prop: Achievability Bound for DMCs with Rank 2}
The noisy permutation channel capacity of a DMC $P_{Z|X}$ with $r \triangleq \rank(P_{Z|X}) = 2$ is lower bounded by 
$$ \Cperm(P_{Z|X}) \geq \frac{1}{2} \, . $$
\end{proposition}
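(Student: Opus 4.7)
The plan is to construct a one-dimensional codebook using only two input symbols of $P_{Z|X}$, and to bound the ML decoding error by a union over pairwise binary hypothesis tests controlled by Lemma \ref{Lemma: Testing between Converging Hypotheses}. Since $r = 2$, fix two input symbols $x_0, x_1 \in \X$ with $P_{Z|X}(\cdot|x_0) \neq P_{Z|X}(\cdot|x_1)$ and let $\alpha \triangleq \left\|P_{Z|X}(\cdot|x_0) - P_{Z|X}(\cdot|x_1)\right\|_2 > 0$. For any $\epsilon \in \big(0,\tfrac{1}{2}\big)$, set $k = \lfloor n^{\frac{1}{2} - \epsilon} \rfloor$ and take $\M = \{0, 1, \dots, k\}$, so that $|\M| = \Theta\big(n^{\frac{1}{2} - \epsilon}\big)$ and the target rate is $R = \frac{1}{2} - \epsilon$. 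The randomized encoder sends $X_1^n$ i.i.d.\ from $P_X^{(m)}$, where $P_X^{(m)}(x_0) = m/k$, $P_X^{(m)}(x_1) = 1 - m/k$, and $P_X^{(m)}$ is zero elsewhere.

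Because the encoder is i.i.d., the output $Z_1^n$ of the DMC is i.i.d., and the random permutation preserves this joint distribution (alternatively, via Lemma \ref{Lemma: Equivalent Model}, the permutation can be moved ahead of the DMC, where it acts trivially on i.i.d.\ inputs). Hence, conditioned on $M = m$, the samples $Y_1^n$ are i.i.d.\ with marginal $P_Z^{(m)} = \frac{m}{k} P_{Z|X}(\cdot|x_0) + \big(1 - \frac{m}{k}\big) P_{Z|X}(\cdot|x_1)$. For any two messages $m \neq m'$, linearity yields $\big\|P_Z^{(m)} - P_Z^{(m')}\big\|_2 = (|m - m'|/k)\,\alpha \geq \alpha\, |m - m'|\, n^{-(\frac{1}{2} - \epsilon)}$ for all large $n$, so Lemma \ref{Lemma: Testing between Converging Hypotheses} applies to the binary test with likelihoods $P_Z^{(m)}$ and $P_Z^{(m')}$ at a parameter $\epsilon_{n,m,m'}$ satisfying $n^{\epsilon_{n,m,m'}} = \Theta(|m - m'|\, n^\epsilon)$, producing pairwise ML error $P_{\mathsf{ML}}^{(n)}(m, m') = O\big(1/(|m - m'|^2\, n^{2\epsilon})\big)$.

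Employing the ML decoder over $\M$ and the elementary bound $\P\big(\hat{M} \neq m \mid M = m\big) \leq 2 \sum_{m' \neq m} P_{\mathsf{ML}}^{(n)}(m, m')$, the summability of $\sum_{\ell \geq 1} 1/\ell^2$ yields $P_{\mathsf{error}}^{(n)} = O(n^{-2\epsilon}) \to 0$. Thus the rate $\frac{1}{2} - \epsilon$ is achievable for every $\epsilon > 0$, and $\Cperm(P_{Z|X}) \geq \frac{1}{2}$.

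The main obstacle is to resist the temptation of invoking Lemma \ref{Lemma: Testing between Converging Hypotheses} with a single $\epsilon_n = \epsilon$ and a crude union bound: that would cost a $\Theta(\sqrt{n})$ factor and only deliver rate $\frac{1}{3}$. The essential insight is that the one-dimensional geometry of $\big\{P_Z^{(m)}\big\}_{m=0}^{k}$ makes the pairwise $\ell^2$-separation scale linearly with $|m - m'|$, so Lemma \ref{Lemma: Testing between Converging Hypotheses}'s quadratic-in-separation decay converts into a $1/|m - m'|^2$ bound whose convergent sum absorbs the cardinality of the codebook and preserves the full rate $\frac{1}{2}$.
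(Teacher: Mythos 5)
Your proposal is correct and follows essentially the same route as the paper's proof: an i.i.d.\ randomized encoder indexed by $k = \lfloor n^{\frac{1}{2}-\epsilon}\rfloor$ mixture weights between two input rows, ML decoding, a union bound over pairwise binary tests handled by Lemma \ref{Lemma: Testing between Converging Hypotheses} with a pair-dependent $\epsilon_n$, and the $1/|m-m'|^2$ decay summed via the Basel series. The only (cosmetic) difference is that you lower bound the output separation directly by $\frac{|m-m'|}{k}\left\|P_{Z|X}(\cdot|x_0)-P_{Z|X}(\cdot|x_1)\right\|_2$, whereas the paper parameterizes messages on $\Simplex_{2,k}$ and uses $\sigma_{\mathsf{min}}\big(\tilde{P}_{Z|X}\big)\left\|p-q\right\|_2$.
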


\begin{proof} 
We commence our proof without imposing the $r = 2$ constraint. As in the proof of Theorem \ref{Thm: Achievability Bound}, consider the reduced input alphabet $\X^{\prime} = \{1,\dots,r\} \subseteq \X$ such that the rows of $\tilde{P}_{Z|X}$ are linearly independent, the message set $\M = \Simplex_{r,k}$ where $k = {\big \lfloor} n^{\frac{1}{2}-\epsilon} {\big \rfloor}$ for any $\epsilon \in \big(0,\frac{1}{2}\big)$, and the randomized encoder $f_n : \Simplex_{r,k} \rightarrow \X^n$ given in \eqref{Eq: Randomized Encoder 1} and \eqref{Eq: Input Message Representation}. However, on the receiver end, consider the ML decoder $g_n : \Y^n \rightarrow \Simplex_{r,k}$ such that
$$ \forall y_1^n \in \Y^n, \enspace g_n(y_1^n) = \argmax_{p \in \Simplex_{r,k}}{P_{Y_1^n|M}(y_1^n|p)} \, , $$
where the tie-breaking rule (to choose one maximizer when there are several) does not affect $P_{\mathsf{error}}^{(n)}$. We now analyze the average probability of error for this simple encoding and decoding scheme.

Firstly, as before, we condition on the event $\{M = p\}$ for some $p = \big(\frac{p_1}{k},\dots,\frac{p_r}{k}\big) \in \Simplex_{r,k}$, and note that
$$ X_1^n \stackrel{\text{i.i.d.}}{\sim} P_X \, , \quad Z_1^n \stackrel{\text{i.i.d.}}{\sim} P_Z \, , \quad Y_1^n \stackrel{\text{i.i.d.}}{\sim} P_Z \, , $$
where $P_Z$ denotes the output distribution when $P_X$, defined via \eqref{Eq: Input Message Representation}, is ``pushed forward'' through $P_{Z|X}$. Moreover, as before, we let $\P_{p}$ represent the underlying probability measure after conditioning on $\{M = p\}$. The conditional probability that our ML decoder makes an error is upper bounded by
\begin{align}
& \P_p\big(\hat{M} \neq M \big) \nonumber \\
& \overset{\eqmakebox[C][c]{}}{=} \P_p\big(g_n(Y_1^n) \neq p\big) \nonumber \\
& \overset{\eqmakebox[C][c]{\footnotesize (a)}}{\leq} \P_p\big(\exists \, q \in \Simplex_{r,k}\!\backslash\!\{p\}, \, P_{Y_1^n|M}(Y_1^n|q) \geq P_{Y_1^n|M}(Y_1^n|p)\big) \nonumber \\
& \overset{\eqmakebox[C][c]{\footnotesize (b)}}{\leq} \sum_{q \in \Simplex_{r,k}\!\backslash\!\{p\}}{\P_p\big(P_{Y_1^n|M}(Y_1^n|q) \geq P_{Y_1^n|M}(Y_1^n|p)\big)} 
\label{Eq: Conditional Probability of Error 2}
\end{align}
where (a) is an upper bound because we regard the ML decoding equality case, $P_{Y_1^n|M}(Y_1^n|q) = P_{Y_1^n|M}(Y_1^n|p)$ for $q \neq p$, as an error even though the ML decoder may return the correct message in this scenario, and (b) follows from the union bound. 

Then, to prove that this upper bound in \eqref{Eq: Conditional Probability of Error 2} vanishes, for any message $q = \big(\frac{q_1}{k},\dots,\frac{q_r}{k}\big) \in \Simplex_{r,k}\!\backslash\!\{p\}$, consider a binary hypothesis test with likelihoods given by
\begin{align*}
\text{Given } H = 0 & : Y_1^n \stackrel{\text{i.i.d.}}{\sim} P_Z \, , \\
\text{Given } H = 1 & : Y_1^n \stackrel{\text{i.i.d.}}{\sim} Q_Z \, ,
\end{align*}
where the hypotheses $H = 0$ and $H = 1$ correspond to the messages $M = p$ and $M = q$, respectively, and $Q_Z \in \Simplex_{\Y}$ is the output distribution when the input distribution $Q_X \in \Simplex_{\X}$, defined analogously to \eqref{Eq: Input Message Representation} as
$$ Q_X(x) = 
\begin{cases}
\frac{q_x}{k}, & \text{for } x \in \X^{\prime} \\ 
0, & \text{for } x \in \X\backslash\X^{\prime}
\end{cases} , $$
is ``pushed forward'' through the channel $P_{Z|X}$. Notice that the $\ell^2$-distance between $P_Z$ and $Q_Z$ can be upper and lower bounded using $\|P_X - Q_X\|_{2}$;\footnote{The ensuing lower bound is where we crucially introduce a dependence between the noisy permutation channel capacity and rank of a DMC. Furthermore, the upper and lower bounds together imply that $\|P_Z - Q_Z\|_{2} = \Theta\big(\|P_X - Q_X\|_{2}\big)$.\label{Footnote: Scaling of Message Diff}} indeed,
\begin{equation}
\label{Eq: Scaling of Message Diff}
\begin{aligned}
\sigma_{\mathsf{min}}\big(\tilde{P}_{Z|X}\big) \left\|P_X - Q_X\right\|_{2} & \leq \left\|P_Z - Q_Z\right\|_{2} \\
& \leq \left\|P_{Z|X}\right\|_{\mathsf{op}} \left\|P_X - Q_X\right\|_{2} ,
\end{aligned}
\end{equation}
where $\sigma_{\mathsf{min}}\big(\tilde{P}_{Z|X}\big) > 0$ because the rows of $\tilde{P}_{Z|X}$ are linearly independent, the first inequality follows from the Courant-Fischer-Weyl min-max theorem, cf. \cite[Theorem 7.3.8]{HornJohnson2013}, because $\|P_X - Q_X\|_2 = \|p - q\|_2$, and $P_Z$ and $Q_Z$ can be obtained by pushing $p$ and $q$ forward through the channel $\tilde{P}_{Z|X}$, respectively, and the second inequality follows from the definition of operator norm. So, letting
\begin{equation}
\label{Eq: epsilon Definition}
\epsilon_n = \epsilon + \frac{\log\!\left(\sigma_{\mathsf{min}}\big(\tilde{P}_{Z|X}\big)\right) + \frac{1}{2}\log\!\left(\sum_{i = 1}^{r}{\left(p_i - q_i\right)^2}\right)}{\log(n)} 
\end{equation}
such that $\epsilon_n \in \big(0,\frac{1}{2}\big)$ for all sufficiently large $n$ (depending on $P_{Z|X}$), we have
\begin{align*}
\left\|P_Z - Q_Z\right\|_{2} & \geq \sigma_{\mathsf{min}}\big(\tilde{P}_{Z|X}\big) \left\|p - q\right\|_{2} \\
& = \frac{\sigma_{\mathsf{min}}\big(\tilde{P}_{Z|X}\big)}{{\big \lfloor} n^{\frac{1}{2}-\epsilon} {\big \rfloor}} \sqrt{\sum_{i = 1}^{r}{\left(p_i - q_i\right)^2}} \\
& \geq \frac{1}{n^{\frac{1}{2}-\epsilon_n}} \, ,
\end{align*}
where $p = \big(\frac{p_1}{k},\dots,\frac{p_r}{k}\big) \in \Simplex_{r,k}$ and $q = \big(\frac{q_1}{k},\dots,\frac{q_r}{k}\big) \in \Simplex_{r,k}$ with $k = {\big \lfloor} n^{\frac{1}{2}-\epsilon} {\big \rfloor}$. Using Lemma \ref{Lemma: Testing between Converging Hypotheses} (which is based on the second moment method in Lemma \ref{Lemma: Second Moment Method}), if $H \sim \Ber\big(\frac{1}{2}\big)$, i.e., the hypotheses are equiprobable, then the ML decoding probability of error for our binary hypothesis testing problem, $P_{\mathsf{ML}}^{(n)} = \P\big(\hat{H}_{\mathsf{ML}}^n(Y_1^n) \neq H\big)$, satisfies
\begin{align*}
P_{\mathsf{ML}}^{(n)} & = \frac{1}{2}\P_p\big(\hat{H}_{\mathsf{ML}}^n(Y_1^n) = 1\big) + \frac{1}{2}\P_q\big(\hat{H}_{\mathsf{ML}}^n(Y_1^n) = 0\big) \\
& \leq \frac{|\Y|}{2|\Y| + 2 n^{2\epsilon_n}} \, . 
\end{align*}
This implies that the false alarm probability satisfies
\begin{align}
\P_p\big(\hat{H}_{\mathsf{ML}}^n(Y_1^n) = 1\big) & = \P_p\big(P_{Y_1^n|M}(Y_1^n|q) \geq P_{Y_1^n|M}(Y_1^n|p)\big) \nonumber \\
& \leq \frac{|\Y|}{|\Y| + n^{2\epsilon_n}} \, ,
\label{Eq: Bound on conditional probability}
\end{align}
where the equality follows from breaking ties in ML decoding, i.e., in cases where we get $P_{Y_1^n|M}(Y_1^n|q) = P_{Y_1^n|M}(Y_1^n|p)$, by assigning $\hat{H}_{\mathsf{ML}}^n(Y_1^n) = 1$ (which does not affect the analysis of $P_{\mathsf{ML}}^{(n)}$ in Lemma \ref{Lemma: Testing between Converging Hypotheses}).

Next, combining \eqref{Eq: Conditional Probability of Error 2} and \eqref{Eq: Bound on conditional probability} yields
\begin{align}
\P_p\big(\hat{M} \neq M\big) & \leq \sum_{q \in \Simplex_{r,k}\!\backslash\!\{p\}}{\frac{|\Y|}{|\Y| + n^{2\epsilon_n}}} \nonumber \\
& \leq |\Y| \sum_{q \in \Simplex_{r,k}\!\backslash\!\{p\}}{\frac{1}{n^{2\epsilon_n}}} \nonumber \\
& = \frac{|\Y|}{\sigma_{\mathsf{min}}\big(\tilde{P}_{Z|X}\big)^2 n^{2\epsilon}} \sum_{q \in \Simplex_{r,k}\!\backslash\!\{p\}}{\frac{1}{\sum_{i = 1}^{r}{\left(p_i - q_i\right)^2}}} , 
\label{Eq: Intermediate Lattice Bound}
\end{align}
where the last equality follows from substituting \eqref{Eq: epsilon Definition}. At this point, we use the fact that $r = 2$ to simplify \eqref{Eq: Intermediate Lattice Bound} so that
\begin{align}
\P_p\big(\hat{M} \neq M\big) & \overset{\eqmakebox[D][c]{\footnotesize (a)}}{\leq} \frac{|\Y|}{2 \, \sigma_{\mathsf{min}}\big(\tilde{P}_{Z|X}\big)^2 n^{2\epsilon}} \sum_{q \in \Simplex_{r,k}\!\backslash\!\{p\}}{\frac{1}{\left(p_1 - q_1\right)^2}} \nonumber \\
& \overset{\eqmakebox[D][c]{\footnotesize (b)}}{\leq} \frac{|\Y|}{\sigma_{\mathsf{min}}\big(\tilde{P}_{Z|X}\big)^2 n^{2\epsilon}} \sum_{j = 1}^{\infty}{\frac{1}{j^2}} \nonumber \\
& \overset{\eqmakebox[D][c]{\footnotesize (c)}}{=} \frac{|\Y| \pi^2}{6 \, \sigma_{\mathsf{min}}\big(\tilde{P}_{Z|X}\big)^2 n^{2\epsilon}} \, ,
\label{Eq: Maximal probability of error 2}
\end{align}
where (a) holds because $p_1 + p_2 = q_1 + q_2 = k$, (b) holds because $j = p_1 - q_1$ ranges over a subset of all non-zero integers, and (c) utilizes the renowned solution to the Basel problem.

Finally, taking expectations with respect to the law of $M$ in \eqref{Eq: Maximal probability of error 2} produces
\begin{equation}
\label{Eq: Overall probability of error 2}
P_{\mathsf{error}}^{(n)} \leq \frac{|\Y| \pi^2}{6 \, \sigma_{\mathsf{min}}\big(\tilde{P}_{Z|X}\big)^2 n^{2\epsilon}} \, , 
\end{equation}
which implies that $\lim_{n \rightarrow \infty}{P_{\mathsf{error}}^{(n)}} = 0$. Therefore, as argued in the proof of Theorem \ref{Thm: Achievability Bound}, $\Cperm(P_{Z|X}) \geq \frac{r-1}{2} = \frac{1}{2}$, which completes the proof.
\end{proof}

In view of Proposition \ref{Prop: Achievability Bound for DMCs with Rank 2}, some further remarks are in order. Firstly, the high-level proof strategy to establish Proposition \ref{Prop: Achievability Bound for DMCs with Rank 2} parallels the \emph{pairwise error probability analysis} technique used in conventional channel coding problems (see, e.g., \cite{SasonShamai2006}). However, the details of our hypothesis testing formulation and the bounds we use to execute our analysis are different to such classical approaches. 

Secondly, if $r > 2$, then the obvious approach to bounding $P_{\mathsf{error}}^{(n)}$ starting from \eqref{Eq: Intermediate Lattice Bound} (and taking expectations with respect to $M$) yields
\begin{equation}
\label{Eq: General Perror Bound}
P_{\mathsf{error}}^{(n)} \leq \frac{|\Y|}{\sigma_{\mathsf{min}}\big(\tilde{P}_{Z|X}\big)^2 n^{2\epsilon}} \sum_{m \in \Z^r\backslash\! (0,\dots,0)}{\frac{1}{\left\|m\right\|_2^2}} \, , 
\end{equation}
because we can define $m = (m_1,\dots,m_r) \in \Z^r$ such that $m_i = p_i - q_i$ for every $i \in \{1,\dots,r\}$, and then take the summation over additional sequences $m$ whose sums are not necessarily equal to $0$ (i.e., we take the summation over additional $q_1,\dots,q_r$ whose sums are not necessarily equal to $k$, as is the case when $q \in \Simplex_{r,k}$). It is straightforward to verify that the bound in \eqref{Eq: General Perror Bound} diverges when $r \geq 2$. Indeed, notice that
\begin{align}
\sum_{m \in \Z^r\backslash\! (0,\dots,0)}{\frac{1}{\left\|m\right\|_2^2}} & \geq \sum_{m \in \Z^r\backslash\! (0,\dots,0)}{\frac{1}{\left\|m\right\|_1^2}} \nonumber \\
& \geq \frac{1}{(r-1)!} \sum_{d = 1}^{\infty}{\frac{1}{d^2} \prod_{j = 1}^{r-1}{(d+j)}} \nonumber \\
& \geq \frac{1}{(r-1)!} \sum_{d = 1}^{\infty}{\frac{1}{d^{3-r}}} = +\infty \, ,
\end{align}
where the first inequality uses the monotonicity of $\ell^p$-norms in $p \in [1,\infty]$, the second inequality follows from enumerating over all possible $\ell^1$-norms $d$ and noting that there are $\binom{d+r-1}{r-1}$ (entry-wise) non-negative points in the integer lattice $\Z^r$ that have an $\ell^1$-norm of $d$, and the expression in the final inequality is infinity due to the divergent nature of the harmonic series. In the $r = 2$ case, as in the proof of Proposition \ref{Prop: Achievability Bound for DMCs with Rank 2} above, it is possible to tighten \eqref{Eq: General Perror Bound} and obtain a summation over $\Z$ rather than $\Z^2$. However, such a tightening does not ameliorate the divergent situation for $r > 2$. So, the proof technique of Proposition \ref{Prop: Achievability Bound for DMCs with Rank 2} cannot be used for $r > 2$.

Thirdly, as in the earlier proof of Theorem \ref{Thm: Achievability Bound}, the randomized encoder and ML decoder presented in the proof of Proposition \ref{Prop: Achievability Bound for DMCs with Rank 2} also constitute a computationally tractable coding scheme. In particular, the ML decoder requires at most $O\big(\sqrt{n}\big)$ likelihood ratio tests, which means that the decoder operates in polynomial time in $n$. 

Fourthly, for any non-trivial DMCs, we intuitively expect the rate of decay of $P_{\mathsf{error}}^{(n)}$ to be dominated by the rate of decay of the probability of error in distinguishing between two ``consecutive'' messages. Although we do not derive precise error exponents or rates of decay in this paper, \eqref{Eq: Maximal probability of error}, \eqref{Eq: Overall probability of error}, Lemma \ref{Lemma: Testing between Converging Hypotheses}, and \eqref{Eq: Overall probability of error 2} indicate that this intuition is accurate. 

Lastly, when we specialize \eqref{Eq: Overall probability of error 2} for a $\BSC(\delta)$ with $\delta \in \big(0,\frac{1}{2}\big) \cup \big(\frac{1}{2},1\big)$, we get
\begin{equation}
P_{\mathsf{error}}^{(n)} \leq \frac{\pi^2}{3 (1-2\delta)^2 n^{2\epsilon}} \, ,
\end{equation}
because $|\Y| = |\{0,1\}| = 2$ and $\sigma_{\mathsf{min}}\big(\tilde{P}_{Z|X}\big) = |1 - 2\delta|$. This improves the constant in our corresponding bound in \cite[Theorem 3, Equation (19)]{Makur2018} by a factor of $3$.

\subsection{Converse Bounds for Strictly Positive DMCs}
\label{Converse Bounds for Strictly Positive DMCs}

In this subsection, we first prove Theorem \ref{Thm: Converse Bound I}. To this end, once again recall the formalism introduced in subsection \ref{Permutation Channel Model}.

\begin{proof}[Proof of Theorem \ref{Thm: Converse Bound I}]
Suppose we are given a sequence of encoder-decoder pairs $\{(f_n,g_n)\}_{n \in \N}$ on message sets of size $|\M| = n^R$ such that $\lim_{n \rightarrow \infty}{P_{\mathsf{error}}^{(n)}} = 0$. Consider the Markov chain $M \rightarrow X_1^n \rightarrow Z_1^n \rightarrow Y_1^n \rightarrow \hat{P}_{Y_1^n}$. Observe using \eqref{Eq: The Random Perm Channel} that for every $y_1^n \in \Y^n$ and $m \in \M$,
$$ P_{Y_1^n|M}(y_1^n|m) = \binom{n}{n \hat{P}_{y_1^n}}^{\! -1} \P\!\left(\hat{P}_{Z_1^n} = \hat{P}_{y_1^n}\middle|M = m\right) . $$
Since $P_{Y_1^n|M}(y_1^n|m)$ depends on $y_1^n$ through $\hat{P}_{y_1^n}$, the \textit{Fisher-Neyman factorization theorem} implies that $\hat{P}_{Y_1^n}$ is a sufficient statistic of $Y_1^n$ for $M$ \cite[Theorem 3.6]{Keener2010}. Then, following the standard argument from \cite[Section 7.9]{CoverThomas2006}, we have
\begin{align}
R \log(n) & \overset{\eqmakebox[E][c]{\footnotesize (a)}}{=} H(M) \nonumber \\
& \overset{\eqmakebox[E][c]{\footnotesize (b)}}{=} H(M|\hat{M}) + I(M;\hat{M}) \nonumber \\
& \overset{\eqmakebox[E][c]{\footnotesize (c)}}{\leq} 1 + P_{\mathsf{error}}^{(n)} R \log(n) + I(M;Y_1^n) \nonumber \\
& \overset{\eqmakebox[E][c]{\footnotesize (d)}}{=} 1 + P_{\mathsf{error}}^{(n)} R \log(n) + I(M;\hat{P}_{Y_1^n}) \nonumber \\
& \overset{\eqmakebox[E][c]{\footnotesize (e)}}{\leq} 1 + P_{\mathsf{error}}^{(n)} R \log(n) + I(X_1^n;\hat{P}_{Y_1^n}) \, ,
\label{Eq: Fano step}
\end{align}
where (a) holds because $M$ is uniformly distributed, (b) follows from the definition of mutual information, (c) follows from \textit{Fano's inequality} and the \textit{data processing inequality} \cite[Theorems 2.10.1 and 2.8.1]{CoverThomas2006}, (d) holds because $\hat{P}_{Y_1^n}$ is a sufficient statistic, cf. \cite[Section 2.9]{CoverThomas2006}, and (e) also follows from the data processing inequality. 

We now upper bound $I(X_1^n;\hat{P}_{Y_1^n})$. Notice that
\begin{align}
I(X_1^n;\hat{P}_{Y_1^n}) & = H(\hat{P}_{Y_1^n}) - H(\hat{P}_{Y_1^n}|X_1^n) \nonumber \\
& \leq (|\Y| - 1) \log(n+1) \nonumber \\
& \quad \, - \sum_{x_1^n \in \{0,1\}^n}{P_{X_1^n}(x_1^n) H(\hat{P}_{Y_1^n}|X_1^n = x_1^n)} \, ,
\label{Eq: Upper bound on MI}
\end{align}
where we use the upper bound on the number of possible empirical distributions given in \eqref{Eq: Number of Emp Dists}. Given $X_1^n = x_1^n$ for any fixed $x_1^n \in \X^n$, $\{Z_i \sim P_{Z|X}(\cdot|x_i) : i \in \{1,\dots,n\}\}$ are mutually independent and $\hat{P}_{Z_1^n} = \hat{P}_{Y_1^n}$ (almost surely). To lower bound $H(\hat{P}_{Y_1^n}|X_1^n = x_1^n)$, we need some additional definitions. Let $x^* = \argmax_{x \in \X}{\hat{P}_{x_1^n}(x)}$, choosing one maximizer arbitrarily if there are several. For every $y \in \Y$, define the random variable
$$ N_y \triangleq \sum_{i = 1}^{n}{\I\!\left\{x_i = x^*\right\} \I\!\left\{Z_i = y\right\}} \, .  $$
Furthermore, define the empirical conditional distribution
$$ \hat{P}_{Z|X = x}^{n}(y) \triangleq \frac{1}{n \hat{P}_{x_1^n}(x)} \sum_{i = 1}^{n}{\I\!\left\{x_i = x\right\} \I\!\left\{Z_i = y\right\}} $$
for every $y \in \Y$ and every $x \in \X$ such that $\hat{P}_{x_1^n}(x) > 0$, where we let $\hat{P}_{Z|X = x}^{n} = \big(\hat{P}_{Z|X = x}^{n}(y) : y \in \Y \big) \in \Simplex_{\Y}$ so that
\begin{equation}
\label{Eq: Empirical Total Probability Law}
\hat{P}_{Z_1^n} = \sum_{x \in \X}{\hat{P}_{x_1^n}(x) \hat{P}_{Z|X = x}^{n}} \, . 
\end{equation}
In the sequel, for any $x \in \X$, if $\hat{P}_{x_1^n}(x) = 0$, then we interpret $\hat{P}_{x_1^n}(x) \hat{P}_{Z|X = x}^{n}$ as the zero vector. Using these definitions, we have
\begin{align}
& H(\hat{P}_{Y_1^n}|X_1^n = x_1^n) \nonumber \\
& \overset{\eqmakebox[F][c]{\footnotesize (a)}}{=} H(\hat{P}_{Z_1^n}|X_1^n = x_1^n) \nonumber \\
& \overset{\eqmakebox[F][c]{\footnotesize (b)}}{=} H\!\left(\sum_{x \in \X}{\hat{P}_{x_1^n}(x) \hat{P}_{Z|X = x}^{n}} \middle| X_1^n = x_1^n\right) \nonumber \\
& \overset{\eqmakebox[F][c]{\footnotesize (c)}}{\geq} H\!\left(\hat{P}_{x_1^n}(x^*) \hat{P}_{Z|X = x^*}^{n} \middle| X_1^n = x_1^n\right) \nonumber \\
& \overset{\eqmakebox[F][c]{\footnotesize (d)}}{=} H(N_1,\dots,N_{|\Y|-1} | X_1^n = x_1^n) \nonumber \\
& \overset{\eqmakebox[F][c]{\footnotesize (e)}}{=} H(N_1|X_1^n = x_1^n) + \sum_{i = 2}^{|\Y|-1}{H(N_i|N_1,\dots,N_{i-1}, X_1^n = x_1^n)} \nonumber \\
& \overset{\eqmakebox[F][c]{\footnotesize (f)}}{\geq} \sum_{i = 1}^{|\Y|-1}{H(N_i|\{N_j : j \in \Y\backslash\!\{i,|\Y|\}\}, X_1^n = x_1^n)} \, , 
\label{Eq: Intermediate Conditional Entropy Bound}
\end{align}
where (a) holds because $\hat{P}_{Z_1^n} = \hat{P}_{Y_1^n}$ (almost surely), (b) uses \eqref{Eq: Empirical Total Probability Law}, (c) follows from \cite[Problem 2.14]{CoverThomas2006} because $\big\{\hat{P}_{Z|X = x}^{n} \in \Simplex_{\Y}: x \in \X\big\}$ are mutually (conditionally) independent random variables given $X_1^n = x_1^n$, (d) holds because $\hat{P}_{Z|X = x^*}^{n}$ sums to unity and we let $\Y = \{1,\dots,|\Y|\}$ without loss of generality, (e) follows from the chain rule for Shannon entropy (and the summation is $0$ when $|\Y| = 2$), and (f) holds because conditioning reduces Shannon entropy (and it equals $H(N_1|X_1^n = x_1^n)$ when $|\Y| = 2$). 

We next lower bound $H(N_1|N_2,\dots,N_{|\Y|-1}, X_1^n = x_1^n)$; the other terms in the sum in \eqref{Eq: Intermediate Conditional Entropy Bound} can be lower bounded similarly. Let $p \triangleq P_{Z|X}(1|x^*)/\big(P_{Z|X}(1|x^*) + P_{Z|X}(|\Y||x^*)\big)$. Then, the conditional distribution of $N_1$ given $N_2,\dots,N_{|\Y|-1}$ and $X_1^n = x_1^n$ is given by
\begin{align*}
& \P\!\left(N_1 = k_1 \, \middle| \, N_2 = k_2,\dots,N_{|\Y|-1}=k_{|\Y|-1},X_1^n = x_1^n\right) \\
& = \frac{\P\!\left(N_1 = k_1,\dots,N_{|\Y|-1}=k_{|\Y|-1} \, \middle| \, X_1^n = x_1^n\right)}{\P\!\left(N_2 = k_2,\dots,N_{|\Y|-1}=k_{|\Y|-1} \, \middle| \, X_1^n = x_1^n\right)} \\
& = \frac{\displaystyle{\prod_{j \in \Y}{\frac{P_{Z|X}(j|x^*)^{k_j}}{k_{j}!} } }}{\displaystyle{\frac{\left(P_{Z|X}(1|x^*) + P_{Z|X}(|\Y||x^*)\right)^{k_1 + k_{|\Y|}}}{(k_1 + k_{|\Y|})!} \!\prod_{j = 2}^{|\Y|-1}{\!\!\frac{P_{Z|X}(j|x^*)^{k_j}}{k_j !}} }} \\
& = \binom{k_1 + k_{|\Y|}}{k_1} p^{k_1} (1 - p)^{k_{|\Y|}} 
\end{align*}
for every $k_1,\dots,k_{|\Y|} \in \N \cup \! \{0\}$ such that $\sum_{j \in \Y}{k_j} = n \hat{P}_{x_1^n}(x^*)$. Therefore, $N_1 \sim \bin\big(n \hat{P}_{x_1^n}(x^*) - \sum_{j = 2}^{|\Y|-1}{k_j} , p\big)$ given $N_2 = k_2,\dots,N_{|\Y|-1} = k_{|\Y|-1}$ and $X_1^n = x_1^n$. (We remark that this calculation also holds for the $|\Y| = 2$ case because $\sum_{j = 2}^{|\Y|-1}{k_j} = 0$.) Now, for some fixed constant $\tau \in \big(1-P_{Z|X}(1|x^*)-P_{Z|X}(|\Y||x^*),1\big)$, let $E \in \{0,1\}$ be a binary random variable defined by
$$ E \triangleq \I\!\left\{\sum_{j = 2}^{|\Y|-1}{N_j} \leq n \hat{P}_{x_1^n}(x^*) \tau\right\} . $$
Observe using the Bienaym\'{e}-Chebyshev inequality that
\begin{align}
& \P(E = 0|X_1^n = x_1^n) \nonumber \\
& \overset{\eqmakebox[H][c]{}}{=} \P\!\left(\hat{P}_{Z|X = x^*}^{n}(1) + \hat{P}_{Z|X = x^*}^{n}(|\Y|) < 1 - \tau \middle| X_1^n = x_1^n\right) \nonumber \\
& \overset{\eqmakebox[H][c]{\footnotesize (a)}}{\leq} \frac{\VAR\!\left(\hat{P}_{Z|X = x^*}^{n}(1) + \hat{P}_{Z|X = x^*}^{n}(|\Y|) \middle| X_1^n = x_1^n \right)}{\left(\tau + P_{Z|X}(\{1,|\Y|\}|x^*) - 1\right)^2} \nonumber \\
& \overset{\eqmakebox[H][c]{\footnotesize (b)}}{=} \frac{P_{Z|X}(\{1,|\Y|\}|x^*) \!\left(1-P_{Z|X}(\{1,|\Y|\}|x^*)\right)}{n \hat{P}_{x_1^n}(x^*) \!\left(\tau + P_{Z|X}(\{1,|\Y|\}|x^*) - 1\right)^2} \nonumber \\
& \overset{\eqmakebox[H][c]{\footnotesize (c)}}{\leq} \frac{|\X| P_{Z|X}(\{1,|\Y|\}|x^*) \!\left(1-P_{Z|X}(\{1,|\Y|\}|x^*)\right)}{n \!\left(\tau + P_{Z|X}(\{1,|\Y|\}|x^*) - 1\right)^2} 
\label{Eq: Previous Chebyshev Bound} \\
& \overset{\eqmakebox[H][c]{}}{=} O\!\left(\frac{1}{n}\right) ,
\label{Eq: Chebyshev Bound}
\end{align}
where (a), (b), and (c) use the notation $P_{Z|X}(\{1,|\Y|\}|x^*) = P_{Z|X}(1|x^*)+P_{Z|X}(|\Y||x^*)$, and (c) also utilizes the fact that $\hat{P}_{x_1^n}(x^*) \geq \frac{1}{|\X|}$ by definition of $x^*$. Then, we can apply Lemma \ref{Lemma: Approximation of Binomial Entropy} and get
\begin{align}
& H(N_1|N_2,\dots,N_{|\Y|-1}, X_1^n = x_1^n) \nonumber \\
& \overset{\eqmakebox[I][c]{\footnotesize (a)}}{\geq} H(N_1|N_2,\dots,N_{|\Y|-1},E, X_1^n = x_1^n) \nonumber \\
& \overset{\eqmakebox[I][c]{\footnotesize (b)}}{\geq} \left(\!1 - O\!\left(\frac{1}{n}\right)\!\right) H(N_1|N_2,\dots,N_{|\Y|-1},E = 1,X_1^n = x_1^n) \nonumber \\ 
& \overset{\eqmakebox[I][c]{}}{=} \left(\!1 - O\!\left(\frac{1}{n}\right)\!\right) \cdot \nonumber \\
& \quad \enspace \E\!\left[H\!\left(\!\bin\!\left(\! n\hat{P}_{x_1^n}(x^*) - \sum_{j = 2}^{|\Y|-1}{N_j},p\!\right)\!\!\right)\middle|E = 1,X_1^n = x_1^n\right] \nonumber \\ 
& \overset{\eqmakebox[I][c]{\footnotesize (c)}}{\geq} \left(\!1 - O\!\left(\frac{1}{n}\right)\!\right) H\!\left(\bin\!\left({\bigg \lfloor} \frac{n (1-\tau)}{|\X|} {\bigg \rfloor},p\right)\right) \nonumber \\
& \overset{\eqmakebox[I][c]{\footnotesize (d)}}{\geq} \left(\!1 - O\!\left(\frac{1}{n}\right)\!\right) \frac{1}{2} \log(2 \pi e p (1-p) \!\left(\frac{n (1-\tau)}{|\X|} - 1\right)) \nonumber \\
& \quad \, \, - \left(\!1 - O\!\left(\frac{1}{n}\right)\!\right) \frac{c(p) |\X|}{n (1-\tau) - |\X|} \, ,
\label{Eq: Intermediate Conditional Entropy Bound 2}
\end{align}
where (a) holds because conditioning reduces Shannon entropy, (b) follows from \eqref{Eq: Chebyshev Bound} and the non-negativity of Shannon entropy, (c) follows from \cite[Problem 2.14]{CoverThomas2006} and the facts that $E = 1$ and $\hat{P}_{x_1^n}(x^*) \geq \frac{1}{|\X|}$, and (d) employs Lemma \ref{Lemma: Approximation of Binomial Entropy}. Here, to employ Lemma \ref{Lemma: Approximation of Binomial Entropy} and obtain \eqref{Eq: Intermediate Conditional Entropy Bound 2}, we implicitly utilize the assumption that all conditional distributions in $\{P_{Z|X}(\cdot|x) \in \Simplex_{\Y} : x \in \X\}$ are strictly positive, which ensures that $p \in (0,1)$.\footnote{Note that since we do not know a priori which value $x^*$ takes and we have to prove \eqref{Eq: Intermediate Conditional Entropy Bound 2} for every term in \eqref{Eq: Intermediate Conditional Entropy Bound}, we have to assume that $P_{Z|X}(y|x) > 0$ for all $x \in \X$ and $y \in \Y$.} We also note that when $|\Y| = 2$, the above argument mutatis mutandis yields
\begin{align*}
& H(N_1|X_1^n = x_1^n) \\
& \quad \quad \quad = H\!\left(\bin\!\left(n \hat{P}_{x_1^n}(x^*),p\right)\right) \\
& \quad \quad \quad \geq \frac{1}{2} \log(2 \pi e p (1-p) \!\left(\frac{n}{|\X|} - 1\right)) - \frac{c(p) |\X|}{n - |\X|} \, ,
\end{align*}
which is lower bounded by \eqref{Eq: Intermediate Conditional Entropy Bound 2}. So, the bound in \eqref{Eq: Intermediate Conditional Entropy Bound 2} is valid for all $|\Y| \geq 2$.

Next, to upper bound $I(X_1^n;\hat{P}_{Y_1^n})$, let $\tau^*$ be any fixed constant such that
$$ 1 - \min_{\substack{x \in \X, \, y,y^{\prime} \in \Y: \\ y \neq y^{\prime}}}{P_{Z|X}(\{y,y^{\prime}\}|x)} < \tau^* < 1 \, , $$
where $P_{Z|X}(\{y,y^{\prime}\}|x) = P_{Z|X}(y|x) + P_{Z|X}(y^{\prime}|x)$ for any $x \in \X$ and $y,y^{\prime} \in \Y$ such that $y \neq y^{\prime}$. Notice that when we analyze other conditional entropy terms $H(N_i|\{N_j : j \in \Y\backslash\!\{i,|\Y|\}\}, X_1^n = x_1^n)$ for $i \in \{1,\dots,|\Y|-1\}$ akin to our analysis of $H(N_1|N_2,\dots,N_{|\Y|-1}, X_1^n = x_1^n)$ above with $\tau = \tau^*$, the maximum bound of the form \eqref{Eq: Previous Chebyshev Bound} is
\begin{equation}
\label{Eq: Max O(1/n) bound}
\max_{\substack{x \in \X, \\ y,y^{\prime} \in \Y : \\ y \neq y^{\prime}}}{\frac{|\X| P_{Z|X}(\{y,y^{\prime}\}|x) \!\left(1-P_{Z|X}(\{y,y^{\prime}\}|x)\right)}{n \left(\tau^* + P_{Z|X}(\{y,y^{\prime}\}|x) - 1\right)^2}} = O\!\left(\frac{1}{n}\right)
\end{equation}
which remains $O\big(\frac{1}{n}\big)$. Furthermore, let $p^*$ be the optimal value of
$$ p = \frac{P_{Z|X}(y|x)}{P_{Z|X}(\{y,y^{\prime}\}|x)} > 0 $$
that minimizes \eqref{Eq: Intermediate Conditional Entropy Bound 2}, with $\tau = \tau^*$ and the $O\big(\frac{1}{n}\big)$ term given by \eqref{Eq: Max O(1/n) bound}, over all $x \in \X$ and $y,y^{\prime} \in \Y$ such that $y \neq y^{\prime}$. Then, following the derivation of \eqref{Eq: Intermediate Conditional Entropy Bound 2}, for all $i \in \{1,\dots,|\Y|-1\}$, we obtain the lower bound
\begin{align}
& H(N_i|\{N_j : j \in \Y\backslash\!\{i,|\Y|\}\}, X_1^n = x_1^n) \nonumber \\
& \geq \left(\!1 - O\!\left(\frac{1}{n}\right)\!\right) \frac{1}{2} \log(2 \pi e p^* (1-p^*) \!\left(\frac{n (1-\tau^*)}{|\X|} - 1\right)) \nonumber \\
& \quad \, \, - \left(\!1 - O\!\left(\frac{1}{n}\right)\!\right) \frac{c(p^*) |\X|}{n (1-\tau^*) - |\X|} \, ,
\label{Eq: Intermediate Conditional Entropy Bound 3}
\end{align}
where the $O\big(\frac{1}{n}\big)$ term is given by \eqref{Eq: Max O(1/n) bound}. Hence, we can combine \eqref{Eq: Upper bound on MI}, \eqref{Eq: Intermediate Conditional Entropy Bound}, and \eqref{Eq: Intermediate Conditional Entropy Bound 3} to produce
\begin{align}
& I(X_1^n;\hat{P}_{Y_1^n}) \nonumber \\
& \leq (|\Y| - 1) \Bigg(\log(n+1) - \left(\!1 - O\!\left(\frac{1}{n}\right)\!\right) \cdot \nonumber \\ 
& \quad \quad \quad \quad \quad \quad \bigg(\frac{1}{2} \log(2 \pi e p^* (1-p^*) \!\left(\frac{n (1-\tau^*)}{|\X|} - 1\right)) \nonumber \\
& \quad \quad \quad \quad \quad \quad \enspace - \frac{c(p^*) |\X|}{n (1-\tau^*) - |\X|}\bigg)\Bigg) \, .
\label{Eq: Upper bound on MI 2}
\end{align}

Finally, combining \eqref{Eq: Fano step} and \eqref{Eq: Upper bound on MI 2}, and dividing by $\log(n)$, yields
\begin{align*}
R & \leq \frac{1}{\log(n)} + P_{\mathsf{error}}^{(n)} R + (|\Y| - 1) \Bigg(\frac{\log(n+1)}{\log(n)} \, - \\
&  \quad \left(\!1 - O\!\left(\frac{1}{n}\right)\!\right) \! \bigg(\frac{\log(2 \pi e p^* (1-p^*) (1-\tau^*)/|\X|)}{2 \log(n)} \, + \\
& \qquad \qquad \qquad \quad \enspace \, \, \frac{\log(n - (|\X|/(1-\tau^*)))}{2 \log(n)} \, - \\
& \qquad \qquad \qquad \quad \enspace \, \, \frac{c(p^*) |\X|}{\left(n (1-\tau^*) - |\X|\right)\log(n)}\bigg)\Bigg) \, ,
\end{align*}
where letting $n \rightarrow \infty$ produces
$$ R \leq \frac{|\Y|-1}{2} \, . $$
This completes the proof. 
\end{proof}

The proofs of Lemmata \ref{Lemma: Second Moment Method} and \ref{Lemma: Testing between Converging Hypotheses} in appendices \ref{Proof of Lemma Second Moment Method} and \ref{Proof of Lemma Testing between Converging Hypotheses} (along with the discussion following Lemmata \ref{Lemma: Second Moment Method} and \ref{Lemma: Testing between Converging Hypotheses}) and the proof of Proposition \ref{Prop: Achievability Bound for DMCs with Rank 2} portray that the distinguishability between two ``consecutive'' (encoded) messages can be determined by a careful comparison of the difference between their means and a variance (at least in the rank $2$ case). This suggests that the CLT can be used to obtain the correct scaling of $|\M|$ with $n$ in general. We remark that the CLT is in fact implicitly used in the above converse proof when we apply Lemma \ref{Lemma: Approximation of Binomial Entropy}, because estimates for the entropy of a binomial distribution are typically obtained using the CLT. 

We conclude this section by using Theorem \ref{Thm: Converse Bound I}, Proposition \ref{Prop: Degradation by Symmetric Channels}, and Lemma \ref{Lemma: Equivalent Model} to establish the alternative converse bound on the noisy permutation channel capacity of strictly positive DMCs given in Theorem \ref{Thm: Converse Bound II}.

\begin{proof}[Proof of Theorem \ref{Thm: Converse Bound II}]
As in the proof of Theorem \ref{Thm: Converse Bound I}, consider any sequence of encoder-decoder pairs $\{(f_n,g_n)\}_{n \in \N}$ on message sets of size $|\M| = n^R$ such that $\lim_{n \rightarrow \infty}{P_{\mathsf{error}}^{(n)}} = 0$. This defines the Markov chain $M \rightarrow X_1^n \rightarrow Z_1^n \rightarrow Y_1^n$, and the standard argument from \cite[Section 7.9]{CoverThomas2006}, which yielded \eqref{Eq: Fano step} earlier, easily produces
\begin{equation}
\label{Eq: Fano step 2}
R \log(n) \leq 1 + P_{\mathsf{error}}^{(n)} R \log(n) + I(X_1^n;Y_1^n) \, . 
\end{equation}
We proceed to upper bounding $I(X_1^n;Y_1^n)$ using a degradation argument. 

First, we reduce the cardinality of the input alphabet $\X$ of the DMC $P_{Z|X}$. In particular, we let $\X^{\prime} \subseteq \X$ be any (fixed) subset of $\X$ such that $q \triangleq |\X^{\prime}| = \ext(P_{Z|X})$ and the set of conditional distributions $\{P_{Z|X}(\cdot|x) \in \Simplex_{\Y} : x \in \X^{\prime}\}$ (as vectors in $\R^{|\Y|}$) are the extreme points of the convex hull of $\{P_{Z|X}(\cdot|x) \in \Simplex_{\Y}: x \in \X\}$.\footnote{We note that when there are multiple copies of an extreme point of the convex hull of $\{P_{Z|X}(\cdot|x) \in \Simplex_{\Y}: x \in \X\}$ in $\{P_{Z|X}(\cdot|x)\in \Simplex_{\Y} : x \in \X\}$, we only add one of these conditional distributions to $\{P_{Z|X}(\cdot|x) \in \Simplex_{\Y} : x \in \X^{\prime}\}$.} Moreover, we let $P_{Z|\tilde{X}} \in \R^{q \times |\Y|}$ denote the row stochastic matrix whose rows are given by $\{P_{Z|X}(\cdot|x) \in \Simplex_{\Y} : x \in \X^{\prime}\}$, where the random variable $\tilde{X} \in \X^{\prime}$. Since the convex hulls of $\{P_{Z|X}(\cdot|x) \in \Simplex_{\Y}: x \in \X^{\prime}\}$ and $\{P_{Z|X}(\cdot|x)\in \Simplex_{\Y} : x \in \X\}$ are equivalent, for every $x \in \X$, we have
\begin{equation}
\label{Eq: Convex Combinations}
\forall z \in \Y, \enspace P_{Z|X}(z|x) = \sum_{\tilde{x} \in \X^{\prime}}{Q_{x}(\tilde{x}) P_{Z|X}(z|\tilde{x})} 
\end{equation}
for some convex weights $\{Q_{x}(\tilde{x}) \geq 0: \tilde{x} \in \X^{\prime}\}$ such that $\sum_{\tilde{x} \in \X^{\prime}}{Q_x(\tilde{x})} = 1$. Observe that for every probability distribution $P_{X_1^n} \in \Simplex_{\X^n}$, we can construct the probability distribution $P_{\tilde{X}_1^n} \in \Simplex_{(\X^{\prime})^n}$ given by
\begin{equation}
\label{Eq: Existence of Distribution}
\forall \tilde{x}_1^n \in (\X^{\prime})^n, \enspace P_{\tilde{X}_1^n}(\tilde{x}_1^n) \triangleq \sum_{x_1^n \in \X^n}{P_{X_1^n}(x_1^n) \prod_{i = 1}^{n}{Q_{x_i}(\tilde{x}_i)}} \, , 
\end{equation}
where the random variables $\tilde{X}_1,\dots,\tilde{X}_n \in \X^{\prime}$. This distribution has the property that it induces the marginal distribution $P_{Z_1^n}$ of $Z_1^n$, namely, for all $z_1^n \in \Y^n$,
\begin{align}
P_{Z_1^n}(z_1^n) & \triangleq \sum_{x_1^n \in \X^n}{P_{X_1^n}(x_1^n) \prod_{i = 1}^{n}{P_{Z|X}(z_i|x_i)}} \nonumber \\
& = \sum_{x_1^n \in \X^n}{P_{X_1^n}(x_1^n) \prod_{i = 1}^{n}{\sum_{\tilde{x} \in \X^{\prime}}{Q_{x_i}(\tilde{x}) P_{Z|X}(z_i|\tilde{x})}}} \nonumber \\
& = \sum_{x_1^n \in \X^n}{P_{X_1^n}(x_1^n) \sum_{\tilde{x}_1^n \in (\X^{\prime})^n}{\prod_{i = 1}^{n}{Q_{x_i}(\tilde{x}_i) P_{Z|X}(z_i|\tilde{x}_i)}}} \nonumber \\
& = \sum_{\tilde{x}_1^n \in (\X^{\prime})^n}{\sum_{x_1^n \in \X^n}{P_{X_1^n}(x_1^n) \prod_{i = 1}^{n}{Q_{x_i}(\tilde{x}_i) P_{Z|X}(z_i|\tilde{x}_i)}}} \nonumber  \\
& = \sum_{\tilde{x}_1^n \in (\X^{\prime})^n}{P_{\tilde{X}_1^n}(\tilde{x}_1^n) \prod_{i = 1}^{n}{P_{Z|X}(z_i|\tilde{x}_i)}} \, , 
\label{Eq: Equivalent Form of Z_1^n}
\end{align} 
where the first equality defines the distribution $P_{Z_1^n}$ of $Z_1^n$ (and uses the memorylessness of $P_{Z|X}$), the second equality follows from \eqref{Eq: Convex Combinations}, the third equality follows from the distributive property, the fourth equality follows from swapping the order of summations, and the fifth equality follows from \eqref{Eq: Existence of Distribution}. Furthermore, notice that
\begin{align}
& I(X_1^n;Y_1^n) \nonumber \\
& \overset{\eqmakebox[J][c]{}}{=} \sum_{x_1^n \in \X^n}{P_{X_1^n}(x_1^n) D(P_{Y_1^n|X_1^n}(\cdot|x_1^n)||P_{Y_1^n})} \nonumber \\
& \overset{\eqmakebox[J][c]{\footnotesize (a)}}{=} \sum_{x_1^n \in \X^n}{P_{X_1^n}(x_1^n) D(P_{Z_1^n|X_1^n}(\cdot|x_1^n) \cdot \Pi||P_{Y_1^n})} \nonumber \\
& \overset{\eqmakebox[J][c]{\footnotesize (b)}}{=} \sum_{x_1^n \in \X^n} P_{X_1^n}(x_1^n)  \, \cdot \nonumber \\
& \qquad \quad D\!\left(\sum_{\tilde{x}_1^n \in (\X^{\prime})^n}{\prod_{i = 1}^{n}{Q_{x_i}(\tilde{x}_i)} \big(P_{Z_1^n|X_1^n}(\cdot|\tilde{x}_1^n)} \cdot \Pi\big) \middle|\middle|P_{Y_1^n}\!\!\right) \nonumber \\
& \overset{\eqmakebox[J][c]{\footnotesize (c)}}{=} \sum_{x_1^n \in \X^n} P_{X_1^n}(x_1^n) \, \cdot \nonumber \\
& \qquad \quad D\!\left(\sum_{\tilde{x}_1^n \in (\X^{\prime})^n}{\prod_{i = 1}^{n}{Q_{x_i}(\tilde{x}_i)} P_{Y_1^n|X_1^n}(\cdot|\tilde{x}_1^n)} \middle|\middle|P_{Y_1^n}\!\!\right) \nonumber \\
& \overset{\eqmakebox[J][c]{\footnotesize (d)}}{\leq} \!\!\sum_{x_1^n \in \X^n} \!\!\! P_{X_1^n}(x_1^n) \!\!\! \sum_{\tilde{x}_1^n \in (\X^{\prime})^n} \prod_{i = 1}^{n}{Q_{x_i}(\tilde{x}_i)} D(P_{Y_1^n|X_1^n}(\cdot|\tilde{x}_1^n) || P_{Y_1^n}) \nonumber \\
& \overset{\eqmakebox[J][c]{\footnotesize (e)}}{=} \!\!\sum_{\tilde{x}_1^n \in (\X^{\prime})^n} \!\! D(P_{Y_1^n|X_1^n}(\cdot|\tilde{x}_1^n) || P_{Y_1^n}) \!\! \sum_{x_1^n \in \X^n}  \!\!\! P_{X_1^n}(x_1^n) \prod_{i = 1}^{n}{Q_{x_i}(\tilde{x}_i)} \nonumber \\
& \overset{\eqmakebox[J][c]{\footnotesize (f)}}{=} \sum_{\tilde{x}_1^n \in (\X^{\prime})^n}{P_{\tilde{X}_1^n}(\tilde{x}_1^n) D(P_{Y_1^n|X_1^n}(\cdot|\tilde{x}_1^n) || P_{Y_1^n})} \nonumber \\
& \overset{\eqmakebox[J][c]{\footnotesize (g)}}{=} I(\tilde{X}_1^n;Y_1^n) \, ,
\label{Eq: Reduced Alphabet MI}
\end{align}
where (a) and (c) hold because the conditional distribution $P_{Y_1^n|X_1^n}(\cdot|x_1^n) = P_{Z_1^n|X_1^n}(\cdot|x_1^n) \cdot \Pi \in \Simplex_{\Y^n}$ is the output of pushing the conditional distribution $P_{Z_1^n|X_1^n}(\cdot|x_1^n) \in \Simplex_{\Y^n}$ through the random permutation channel $\Pi = P_{Y_1^n|Z_1^n}$ (defined in \eqref{Eq: The Random Perm Channel}) for every $x_1^n \in \X^n$, (b) follows from \eqref{Eq: Equivalent Form of Z_1^n} after substituting Kronecker delta distributions $P_{X_1^n}$ into \eqref{Eq: Existence of Distribution} (and uses the memorylessness of $P_{Z|X}$), (d) follows from the convexity of KL divergence, (e) follows from swapping the order of summations, (f) follows from \eqref{Eq: Existence of Distribution}, and (g) holds because \eqref{Eq: Equivalent Form of Z_1^n} conveys that $P_{Y_1^n} \in \Simplex_{\Y^n}$, which is the marginal distribution of $Y_1^n$ in the original Markov chain $X_1^n \rightarrow Z_1^n \rightarrow Y_1^n$, is also the marginal distribution of $Y_1^n$ in the Markov chain $\tilde{X}_1^n \rightarrow Z_1^n \rightarrow Y_1^n$.\footnote{Note that we abuse notation here and use the same random variable labels $Z_1^n$ and $Y_1^n$ for the Markov chains $X_1^n \rightarrow Z_1^n \rightarrow Y_1^n$ and $\tilde{X}_1^n \rightarrow Z_1^n \rightarrow Y_1^n$, because the two chains can be coupled so that $Z_1^n$ and $Y_1^n$ are shared random variables.} 

Second, we construct an equivalent model of the Markov chain $\tilde{X}_1^n \rightarrow Z_1^n \rightarrow Y_1^n$, which has reduced input alphabet $\X^{\prime}$, $\tilde{X}_1^n \sim P_{\tilde{X}_1^n}$, $P_{Z_1^n|\tilde{X}_1^n}$ given by the DMC $P_{Z|\tilde{X}}$, and $P_{Y_1^n|Z_1^n} = \Pi$ given by the random permutation channel in \eqref{Eq: The Random Perm Channel}. Employing Lemma \ref{Lemma: Equivalent Model} (also see Figure \ref{Figure: Permutation Channel 2}), we can swap the random permutation channel $\Pi$ and the DMC $P_{Z|\tilde{X}}$ to get a Markov chain $\tilde{X}_1^n \rightarrow V_1^n \rightarrow W_1^n$ such that the channel $P_{W_1^n|\tilde{X}_1^n}$ is equivalent to the channel $P_{Y_1^n|\tilde{X}_1^n}$. In this alternative Markov chain, $V_1^n \in (\X^{\prime})^n$ is an independent random permutation of $\tilde{X}_1^n \in (\X^{\prime})^n$, and $W_1^n \in \Y^n$ is the output of passing $V_1^n$ through a DMC $P_{W|V}$, which satisfies $P_{W|V} = P_{Z|\tilde{X}}$ (as shown in \eqref{Eq: Equivalence Condition}). Hence, we have
\begin{equation}
\label{Eq: Swapped MI}
I(\tilde{X}_1^n;Y_1^n) = I(\tilde{X}_1^n;W_1^n) \, , 
\end{equation}
since $\tilde{X}_1^n$ is common to both Markov chains $\tilde{X}_1^n \rightarrow Z_1^n \rightarrow Y_1^n$ and $\tilde{X}_1^n \rightarrow V_1^n \rightarrow W_1^n$. 

Third, we construct a $q$-ary symmetric channel that dominates the DMC $P_{W|V} = P_{Z|\tilde{X}}$ in the degradation sense. To this end, define the parameter
$$ \delta = \frac{\nu}{1 - \nu + \frac{\nu}{q-1}} > 0 $$
in terms of the minimum entry, $\nu$, of $P_{Z|X}$, viz.,
$$ \nu = \min_{x \in \X, \, y\in \Y}{P_{Z|X}(y|x)} = \min_{x \in \X^{\prime}, \, y\in \Y}{P_{Z|\tilde{X}}(y|x)} > 0 \, , $$
where the second equality follows from \eqref{Eq: Convex Combinations}. (Note that $\delta > 0$ because $\nu > 0$, and $\nu > 0$ since $P_{Z|X}$ is strictly positive.) Then, applying Proposition \ref{Prop: Degradation by Symmetric Channels}, we get that $P_{W|V} = P_{Z|\tilde{X}}$ is a degraded version of $\qSC(\delta)$ (see Definition \ref{Def: Symmetric Channel}). Let the input random variable of the $\qSC(\delta)$ be $V \in \X^{\prime}$, and the output random random variable be $\tilde{W} \in \X^{\prime}$, so that we can write $P_{\tilde{W}|V} = S_{\delta}$. Now consider the Markov chain $\tilde{X}_1^n \rightarrow V_1^n \rightarrow \tilde{W}_1^n$, where $V_1^n$ is a random permutation of $\tilde{X}_1^n$ as before, and the channel $P_{\tilde{W}_1^n|V_1^n}$ is given by the DMC $P_{\tilde{W}|V} = S_{\delta}$. Since the degradation preorder tensorizes, the channel $P_{W|V}^{\otimes n}$ is a degraded version of the channel $P_{\tilde{W}|V}^{\otimes n} = S_{\delta}^{\otimes n}$, where $A^{\otimes n}$ denotes the $n$-fold Kronecker product (or tensor product) of a row stochastic matrix $A$, which corresponds to $n$ uses of the memoryless channel $A$.\footnote{The tensorization property of the degradation preorder is well-known in information theory. For a proof, notice that given any three row stochastic matrices $A,B,C$ (with consistent dimensions so that the ensuing products are legal), if $A = B C$, then $A \otimes A = (B \otimes B)(C \otimes C)$ using the \textit{mixed-product property}, where $\otimes$ denotes the Kronecker product.} Thus, we have a Markov chain $\tilde{X}_1^n \rightarrow V_1^n \rightarrow \tilde{W}_1^n \rightarrow W_1^n$ using Definition \ref{Def: Degradation Preorder} (where we neglect the difference between physical and stochastic degradation since it is inconsequential in this context). By the data processing inequality, this implies that
\begin{equation}
\label{Eq: Upper Bound MI}
I(\tilde{X}_1^n;W_1^n) \leq I(\tilde{X}_1^n;\tilde{W}_1^n) \, .
\end{equation}

Fourth, we again swap the random permutation and DMC blocks in the Markov chain $\tilde{X}_1^n \rightarrow V_1^n \rightarrow \tilde{W}_1^n$ using Lemma \ref{Lemma: Equivalent Model}. As we argued earlier, this produces an equivalent Markov chain $\tilde{X}_1^n \rightarrow \tilde{Z}_1^n \rightarrow \tilde{Y}_1^n$ such that the channel $P_{\tilde{Y}_1^n|\tilde{X}_1^n}$ is equivalent to the channel $P_{\tilde{W}_1^n|\tilde{X}_1^n}$. Moreover, in this alternative Markov chain, $\tilde{X}_1^n \sim P_{\tilde{X}_1^n}$ as before, the product channel $P_{\tilde{Z}_1^n|\tilde{X}_1^n}$ is defined by the DMC $P_{\tilde{Z}|\tilde{X}} = S_{\delta}$ (i.e., a $\qSC(\delta)$) with input and output alphabet $\X^{\prime}$, and the channel $P_{\tilde{Y}_1^n|\tilde{Z}_1^n}$ is defined by a random permutation channel. Hence, we have
\begin{equation}
\label{Eq: Re-swapped MI}
I(\tilde{X}_1^n;\tilde{W}_1^n) = I(\tilde{X}_1^n;\tilde{Y}_1^n) \, .
\end{equation}

Finally, combining \eqref{Eq: Reduced Alphabet MI}, \eqref{Eq: Swapped MI}, \eqref{Eq: Upper Bound MI}, and \eqref{Eq: Re-swapped MI}, we get
$$ I(X_1^n;Y_1^n) \leq I(\tilde{X}_1^n;\tilde{Y}_1^n) \, , $$  
which implies that the right hand side of \eqref{Eq: Fano step 2} can be upper bounded as
$$ R \log(n) \leq 1 + P_{\mathsf{error}}^{(n)} R \log(n) + I(\tilde{X}_1^n;\tilde{Y}_1^n) \, . $$
Executing the Fisher-Neyman factorization argument from the outset of the proof of Theorem \ref{Thm: Converse Bound I}, we obtain that $\hat{P}_{\tilde{Y}_1^n}$ is a sufficient statistic of $\tilde{Y}_1^n$ for $\tilde{X}_1^n$. So, we have
\begin{equation}
\label{Eq: Fano step 3}
R \log(n) \leq 1 + P_{\mathsf{error}}^{(n)} R \log(n) + I(\tilde{X}_1^n;\hat{P}_{\tilde{Y}_1^n}) \, ,
\end{equation}
much like the bound in \eqref{Eq: Fano step}. At this stage, noting that the DMC $P_{\tilde{Z}|\tilde{X}} = S_{\delta}$ is strictly positive, we can upper bound $I(\tilde{X}_1^n;\hat{P}_{\tilde{Y}_1^n})$ by following the proof of Theorem \ref{Thm: Converse Bound I} mutatis mutandis. Indeed, starting from \eqref{Eq: Fano step 3} and proceeding with the proof of Theorem \ref{Thm: Converse Bound I} yields
$$ R \leq \frac{q-1}{2} = \frac{|\X^{\prime}|-1}{2} = \frac{\ext(P_{Z|X})-1}{2} \, . $$
This completes the proof.
\end{proof}

\section{Noisy Permutation Channel Capacity}
\label{Permutation Channel Capacity}

To complement our main result in Theorem \ref{Thm: Strictly Positive and Full Rank Channels}, we characterize and bound the noisy permutation channel capacities of several other simple classes of DMCs in this section. 

\subsection{Unit Rank Channels}

We start with what is perhaps the simplest setting\textemdash that of an ``independent channel.'' In this case, the noisy permutation channel capacity is obviously zero, and a standard Fano's inequality argument rigorously justifies this. 

\begin{proposition}[$\Cperm$ of Unit Rank Stochastic Matrices]
\label{Prop: Unit Rank Stochastic Matrices}
For a unit rank DMC $P_{Z|X} \in \R^{|\X| \times |\Y|}$ such that all rows of $P_{Z|X}$ are equal, we have
$$ \Cperm(P_{Z|X}) = 0 \, . $$
\end{proposition}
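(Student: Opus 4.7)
The plan is to use a Fano's inequality argument analogous to the one deployed in the proof of Theorem \ref{Thm: Converse Bound I}, exploiting the fact that a unit rank row stochastic matrix with equal rows induces a channel whose output is statistically independent of its input. More precisely, if all rows of $P_{Z|X}$ coincide with some common distribution $P_Z \in \Simplex_{\Y}$, then for every $x \in \X$ and $z \in \Y$ we have $P_{Z|X}(z|x) = P_Z(z)$, so the memoryless channel produces $Z_1^n \stackrel{\text{i.i.d.}}{\sim} P_Z$ regardless of the transmitted codeword $X_1^n$, and consequently $Y_1^n$ (obtained from $Z_1^n$ by an independent random permutation) is also i.i.d. from $P_Z$ and independent of the message $M$.

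Given this independence, I would note that $I(M; Y_1^n) = 0$ along the Markov chain $M \to X_1^n \to Z_1^n \to Y_1^n$. Consider any sequence of encoder-decoder pairs $\{(f_n,g_n)\}_{n \in \N}$ with message sets of size $|\M| = n^R$ such that $\lim_{n \to \infty} P_{\mathsf{error}}^{(n)} = 0$. Then, copying the chain of inequalities leading to \eqref{Eq: Fano step} (i.e., uniformity of $M$, the definition of mutual information, Fano's inequality, and the data processing inequality), one obtains
\begin{equation*}
R \log(n) = H(M) \leq 1 + P_{\mathsf{error}}^{(n)} R \log(n) + I(M; Y_1^n) = 1 + P_{\mathsf{error}}^{(n)} R \log(n) \, .
\end{equation*}
Rearranging gives $R \bigl(1 - P_{\mathsf{error}}^{(n)}\bigr) \log(n) \leq 1$, and letting $n \to \infty$ forces $R \leq 0$. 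Since rates are non-negative by definition, the supremum of achievable rates is $0$, so $\Cperm(P_{Z|X}) \leq 0$.

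For the matching lower bound $\Cperm(P_{Z|X}) \geq 0$, I would simply observe that the trivial rate $R = 0$ (a constant encoder with $|\M| = 1$) is achievable with zero probability of error. Alternatively, one can invoke Theorem \ref{Thm: Achievability Bound} with $\rank(P_{Z|X}) = 1$, which also yields $\Cperm(P_{Z|X}) \geq 0$. Combining both bounds completes the proof. There is no substantive obstacle here: the main ``content'' lies in recognizing that equal rows make the output statistically independent of $M$, at which point the standard Fano-based converse template from subsection \ref{Converse Bounds for Strictly Positive DMCs} immediately collapses the capacity to zero without any of the binomial-entropy or degradation machinery required elsewhere in the paper.
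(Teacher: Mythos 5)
Your proof is correct and takes essentially the same approach as the paper: the paper also observes that equal rows force the output to be independent of the input, invokes the Fano chain from \eqref{Eq: Fano step}, sets the mutual information term to zero, and lets $n \to \infty$; the only cosmetic difference is that you stop at $I(M;Y_1^n)=0$ while the paper works with $I(X_1^n;\hat{P}_{Y_1^n})=0$, which is one step further along the same chain.
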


\begin{proof}
We need only prove a converse bound to establish this. Since all rows of $P_{Z|X}$ are equal, the output $Z$ of the DMC $P_{Z|X}$ is independent of the input $X$. Recalling the setup in subsection \ref{Permutation Channel Model}, this implies that
\begin{equation}
\label{Eq: Zero MI}
I(X_1^n;\hat{P}_{Y_1^n}) = 0 \, . 
\end{equation}
Notice that \eqref{Eq: Fano step} (in the proof of Theorem \ref{Thm: Converse Bound I}) holds for any DMC. So, dividing both sides of \eqref{Eq: Fano step} by $\log(n)$ and applying \eqref{Eq: Zero MI} yields
$$ R \leq \frac{1}{\log(n)} + P_{\mathsf{error}}^{(n)} R \, , $$
where letting $n \rightarrow \infty$ produces $R \leq 0$. Therefore, we have $\Cperm(P_{Z|X}) = 0$.
\end{proof}

\subsection{Permutation Transition Matrices}
\label{Permutation Transition Matrices}

Next, we consider the straightforward dual setting of a ``perfect channel.'' In this case, the DMC is an identity channel without loss of generality, which means that the corresponding noisy permutation channel just permutes its input codewords randomly (see subsection \ref{Permutation Channel Model}). So, intuitively, the maximum number of decodable messages that can be reliably communicated is equal to the number of possible empirical distributions over the alphabet of the DMC (see \eqref{Eq: Order for Permutation Channel 2} below). Thus, the noisy permutation channel capacity is clearly characterized by the alphabet size of the DMC. The formal proof is again straightforward, but we include it here for completeness.

\begin{proposition}[$\Cperm$ of Permutation Stochastic Matrices]
\label{Prop: Permutation Stochastic Matrices}
For a DMC $P_{Z|X}$ such that $|\X| = |\Y| = k \geq 2$ and $P_{Z|X} \in \R^{k \times k}$ is a permutation matrix, we have
$$ \Cperm(P_{Z|X}) = k-1 \, . $$
\end{proposition}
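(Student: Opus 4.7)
The plan is to prove both directions (achievability and converse) by reducing to counting empirical distributions over $\X$. Without loss of generality, I would first relabel $\Y$ so that $P_{Z|X}$ becomes the identity matrix; this is legitimate since a bijective relabeling of output symbols does not affect the noisy permutation channel capacity (the decoder can undo the relabeling). Under this reduction, the channel's action on a codeword $X_1^n$ is simply to output a uniformly random permutation $Y_1^n$ of $X_1^n$, so in particular $\hat{P}_{Y_1^n} = \hat{P}_{X_1^n}$ almost surely.

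For achievability, I would enumerate the set of length-$n$ empirical distributions on $\X$, which has cardinality $\binom{n+k-1}{k-1} = \Theta(n^{k-1})$. I would then take $\M$ of this size and let $f_n$ map each message to a codeword with the corresponding empirical distribution; the decoder $g_n$ computes $\hat{P}_{Y_1^n}$ and returns the unique message with that type. Since $\hat{P}_{Y_1^n} = \hat{P}_{X_1^n}$ deterministically, this scheme has $P_{\mathsf{error}}^{(n)} = 0$ for every $n$, and taking $\log(\cdot)/\log(n)$ and letting $n \to \infty$ yields achievable rate $R = k-1$.

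For the converse, I would follow the Fisher-Neyman and Fano argument used in the proof of Theorem \ref{Thm: Converse Bound I}. Because $P_{Y_1^n|X_1^n}(y_1^n|x_1^n)$ depends on $y_1^n$ only through $\hat{P}_{y_1^n}$ (a random permutation preserves the empirical distribution), $\hat{P}_{Y_1^n}$ is a sufficient statistic for $M$, so \eqref{Eq: Fano step} applies:
\begin{equation*}
R\log(n) \leq 1 + P_{\mathsf{error}}^{(n)} R \log(n) + I(X_1^n;\hat{P}_{Y_1^n}) .
\end{equation*}
I would then bound $I(X_1^n;\hat{P}_{Y_1^n}) \leq H(\hat{P}_{Y_1^n}) \leq \log\binom{n+k-1}{k-1} \leq (k-1)\log(n+1)$ using \eqref{Eq: Number of Emp Dists}, divide through by $\log(n)$, and let $n \to \infty$ with $P_{\mathsf{error}}^{(n)} \to 0$ to get $R \leq k-1$.

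There is no serious obstacle here; the argument is essentially a clean instance of the intuition already articulated around \eqref{Eq: Number of Emp Dists}. The only point requiring a small amount of care is justifying the WLOG reduction to the identity channel at the start, which amounts to observing that a fixed permutation of the output alphabet commutes with the random permutation block (in the sense that it acts symbol-wise and is invertible), so it neither helps nor hurts the encoder or decoder.
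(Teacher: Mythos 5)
Your proposal is correct and follows essentially the same approach as the paper: an explicit type-based code for achievability (mapping messages to codewords with prescribed empirical distribution, achieving zero error) and the Fano/sufficient-statistic argument of \eqref{Eq: Fano step} combined with the bound $H(\hat{P}_{Y_1^n}) \leq (k-1)\log(n+1)$ for the converse. The only cosmetic difference is that you relabel the output alphabet up front to reduce to the identity channel, whereas the paper instead has the decoder apply $P_{Z|X}$ to $\hat{P}_{Y_1^n}$ to undo the permutation; these amount to the same thing.
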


\begin{proof} ~\newline
\indent
\underline{Achievability:} Under the setup of subsection \ref{Permutation Channel Model}, consider the obvious encoder-decoder pair:
\begin{enumerate}
\item The message set $\M = \big\{m = \big[m_1 \, \cdots \, m_k\big]^{\T} \in (\N\cup\!\{0\})^k : m_1 + \cdots + m_k = n\big\}$ with cardinality
\begin{equation}
\label{Eq: Order for Permutation Channel 2}
|\M| = \binom{n+k-1}{k-1} = \Theta\!\left(n^{k-1}\right) , 
\end{equation}
where the elements of $\M$ have been re-indexed for convenience.
\item The encoder $f_n : \M \rightarrow \X^n$ is given by
$$ \forall m \in \M, \enspace f_n(m) = (\underbrace{1,\dots,1}_{m_1 \, 1\text{'s}},\underbrace{2,\dots,2}_{m_2 \, 2\text{'s}},\dots,\underbrace{k,\dots,k}_{m_k \, k\text{'s}}) \, , $$
where $\X = \{1,\dots,k\}$ without loss of generality.
\item The decoder $g_n : \Y^n \rightarrow \M$ is given by
$$ \forall y_1^n \in \Y^n, \enspace g_n(y_1^n) = n \, P_{Z|X} \big[ \hat{P}_{y_1^n}(1) \, \cdots \, \hat{P}_{y_1^n}(k)\big]^{\T} , $$
where $\Y = \X$ without loss of generality (and $P_{Z|X} \in \R^{k \times k}$ is a permutation matrix).
\end{enumerate}
Clearly, this encoder-decoder pair achieves $P_{\mathsf{error}}^{(n)} = 0$. Hence, using \eqref{Eq: Order for Permutation Channel 2}, the rate
$$ R = \lim_{n \rightarrow \infty}{\frac{\displaystyle{\log(\binom{n+k-1}{k-1})}}{\log(n)}} = k-1 $$ 
is achievable, and $\Cperm(P_{Z|X}) \geq k-1$.

\underline{Converse:} Recall that \eqref{Eq: Fano step} (in the proof of Theorem \ref{Thm: Converse Bound I}) holds for any DMC. We bound the mutual information term in \eqref{Eq: Fano step} with
\begin{align*}
I(X_1^n;\hat{P}_{Y_1^n}) & = H(\hat{P}_{Y_1^n}) \\
& \leq (k-1)\log(n+1) \, ,
\end{align*}
where the equality holds because $H(\hat{P}_{Y_1^n}|X_1^n) = 0$, and the inequality uses the upper bound on the number of possible empirical distributions given in \eqref{Eq: Number of Emp Dists}. Then, as before, combining \eqref{Eq: Fano step} with the above bound on mutual information and dividing by $\log(n)$ yields
$$ R \leq  \frac{1}{\log(n)} + P_{\mathsf{error}}^{(n)} R + \frac{(k-1)\log(n+1)}{\log(n)}\, , $$
where letting $n \rightarrow \infty$ produces $R \leq k-1$. Therefore, we have $\Cperm(P_{Z|X}) \leq k-1$, which completes the proof.
\end{proof}

\subsection{Strictly Positive Channels}

Recall that Theorem \ref{Thm: Strictly Positive and Full Rank Channels} in section \ref{Main Results} presents the main contribution of this paper\textemdash a closed-form expression for the noisy permutation channel capacity of strictly positive DMCs with full rank. For general strictly positive DMCs, we complement Theorem \ref{Thm: Strictly Positive and Full Rank Channels} by proposing the following conjecture.

\begin{conjecture}[$\Cperm$ of Strictly Positive Channels]
\label{Conj: Strictly Positive DMC Conjecture}
For any strictly positive DMC $P_{Z|X}$, we have
\begin{align*}
\Cperm(P_{Z|X}) & = \liminf_{n \rightarrow \infty}{ \, \frac{1}{\log(n)} \, \sup_{P_{X_1^n} \in \Simplex_{\X^n}}{ \, I(\hat{P}_{X_1^n};\hat{P}_{Y_1^n})}} \\
& = \frac{\rank(P_{Z|X})-1}{2} \, ,
\end{align*}
where the supremum in the first equality is over all probability distributions in $\Simplex_{\X^n}$, or equivalently, over all probability distributions of $\hat{P}_{X_1^n}$. 
\end{conjecture}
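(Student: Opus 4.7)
The plan is to prove the chain of inequalities
$\frac{r-1}{2} \leq \Cperm(P_{Z|X}) \leq \liminf_{n \to \infty} \frac{1}{\log(n)} \sup_{P_{X_1^n}} I(\hat{P}_{X_1^n}; \hat{P}_{Y_1^n}) \leq \frac{r-1}{2}$,
where $r \triangleq \rank(P_{Z|X})$. The leftmost inequality is exactly Theorem \ref{Thm: Achievability Bound}, which does not require strict positivity, so only the middle and rightmost inequalities need to be established.

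For $\Cperm(P_{Z|X}) \leq \liminf_n \frac{1}{\log(n)} \sup_{P_{X_1^n}} I(\hat{P}_{X_1^n}; \hat{P}_{Y_1^n})$, the key preliminary observation is that $\hat{P}_{X_1^n}$ is already a sufficient statistic for $\hat{P}_{Y_1^n}$: conditional on $X_1^n = x_1^n$, the distribution of $\hat{P}_{Y_1^n}$ depends on $x_1^n$ only through its type $\hat{P}_{x_1^n}$, since $\hat{P}_{Y_1^n} = \hat{P}_{Z_1^n}$ almost surely and the multiset of conditional laws $\{P_{Z|X}(\cdot|x_i)\}_{i = 1}^n$ is determined by $\hat{P}_{x_1^n}$. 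Hence $X_1^n \to \hat{P}_{X_1^n} \to \hat{P}_{Y_1^n}$ is a Markov chain and $I(X_1^n; \hat{P}_{Y_1^n}) = I(\hat{P}_{X_1^n}; \hat{P}_{Y_1^n})$. Combining this with the Fisher-Neyman sufficiency of $\hat{P}_{Y_1^n}$ (already used at the outset of the proof of Theorem \ref{Thm: Converse Bound I}) and following the Fano's inequality argument there verbatim produces $R \log(n) \leq 1 + P_{\mathsf{error}}^{(n)} R \log(n) + \sup_{P_{X_1^n}} I(\hat{P}_{X_1^n}; \hat{P}_{Y_1^n})$; dividing by $\log(n)$ and letting $n \to \infty$ along any achievable rate $R$ yields the claim.

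The main obstacle is the rightmost inequality, $\liminf_n \frac{1}{\log(n)} \sup_{P_{X_1^n}} I(\hat{P}_{X_1^n}; \hat{P}_{Y_1^n}) \leq \frac{r-1}{2}$. The existing converses (Theorems \ref{Thm: Converse Bound I} and \ref{Thm: Converse Bound II}) bound this quantity only by $\frac{|\Y|-1}{2}\log(n)$ and $\frac{\ext(P_{Z|X})-1}{2}\log(n)$ respectively, both of which can strictly exceed $\frac{r-1}{2}\log(n)$ when $P_{Z|X}$ is strictly positive but rank-deficient. The proposed strategy is to refine the binomial-entropy/CLT argument in the proof of Theorem \ref{Thm: Converse Bound I} so that it sees only the rank. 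Concretely, for every type $\pi \in \Simplex_\X$ the mean vector $\E[\hat{P}_{Y_1^n} \mid \hat{P}_{X_1^n} = \pi] = P_{Z|X}^{\T} \pi$ lies in the $(r-1)$-dimensional affine subset $P_{Z|X}^{\T}(\Simplex_\X) \subseteq \Simplex_\Y$. Orthogonally decomposing $\R^{|\Y|} = \mathrm{Im}(P_{Z|X}^{\T}) \oplus \ker(P_{Z|X})$ and writing $\hat{P}_{Y_1^n} = S_n^{\parallel} + S_n^{\perp}$, an $r$-variate analogue of Lemma \ref{Lemma: Approximation of Binomial Entropy} (namely, the multivariate Gaussian approximation to $S_n^{\parallel}$ whose covariance has $(r-1)$ nonzero eigenvalues of order $1/n$) should cap the contribution of $S_n^{\parallel}$ to $I(\hat{P}_{X_1^n}; \hat{P}_{Y_1^n})$ by $\frac{r-1}{2}\log(n) + O(1)$ uniformly in $P_{X_1^n}$.

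The delicate remaining step, which I expect to be the chief technical difficulty, is to show that $S_n^{\perp}$ carries only $o(\log(n))$ additional information about $\hat{P}_{X_1^n}$. Although $\E[S_n^{\perp} \mid \pi] = 0$ for every $\pi$, the conditional covariance of $S_n^{\perp}$ does vary with $\pi$ (via the particular laws $\{P_{Z|X}(\cdot|x)\}$ appearing in the support of $\pi$), so second-moment effects could a priori leak $\Theta(\log(n))$ bits through the orthogonal channel. Ruling this out will likely require a careful conditional-entropy estimate showing that the multinomial fluctuations of $\hat{P}_{Y_1^n}$ in $\ker(P_{Z|X})$ are dominated in KL divergence by a $\pi$-independent Gaussian reference noise up to $O(1)$, exploiting the uniform lower bound on conditional probabilities afforded by strict positivity to control the smallest eigenvalue of the relevant covariance. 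If this null-space leakage bound can be secured, it combines with the $S_n^{\parallel}$ bound and the Markov/sufficiency reduction above to close the conjecture.
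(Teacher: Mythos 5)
The statement you are proving is Conjecture \ref{Conj: Strictly Positive DMC Conjecture}: the paper itself offers no proof, only heuristic motivation (the first equality is ``inspired by'' the Fano argument \eqref{Eq: Fano step} combined with Lemma \ref{Lemma: Equivalent Model}, and the second is ``suggested by'' multivariate CLT intuition). Your proposal does not close it either. The parts you do carry out are the easy ones: the lower bound $\Cperm(P_{Z|X}) \geq \frac{r-1}{2}$ is Theorem \ref{Thm: Achievability Bound} verbatim, and your Markov-chain/sufficiency reduction $I(X_1^n;\hat{P}_{Y_1^n}) = I(\hat{P}_{X_1^n};\hat{P}_{Y_1^n})$ together with the Fano argument of Theorem \ref{Thm: Converse Bound I} indeed gives $\Cperm(P_{Z|X}) \leq \liminf_{n}\frac{1}{\log(n)}\sup_{P_{X_1^n}} I(\hat{P}_{X_1^n};\hat{P}_{Y_1^n})$ (this matches the paper's own remark). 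But the entire content of the conjecture is the remaining inequality $\sup_{P_{X_1^n}} I(\hat{P}_{X_1^n};\hat{P}_{Y_1^n}) \leq \frac{r-1}{2}\log(n) + o(\log(n))$ for strictly positive but rank-deficient $P_{Z|X}$, where the known converses only give $\frac{|\Y|-1}{2}$ and $\frac{\ext(P_{Z|X})-1}{2}$, and this you do not prove: you describe a decomposition of $\R^{|\Y|}$ into $\mathrm{Im}(P_{Z|X}^{\T})$ and $\ker(P_{Z|X})$ and state what ``should'' hold, explicitly conceding that the null-space leakage bound remains to be ``secured.''

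Moreover, both halves of that strategy are themselves nontrivial and unestablished. The claimed cap of $\frac{r-1}{2}\log(n) + O(1)$ on the contribution of the parallel component requires a multivariate analogue of Lemma \ref{Lemma: Approximation of Binomial Entropy} that is uniform over all input distributions $P_{X_1^n}$ (including ones concentrated on boundary types), which is not a routine extension of the scalar binomial estimate used in Theorem \ref{Thm: Converse Bound I}; and the kernel-direction issue you flag is exactly the obstruction that makes this a conjecture: the conditional law of $\hat{P}_{Y_1^n}$ in $\ker(P_{Z|X})$ has a covariance (and higher moments) depending on $\hat{P}_{X_1^n}$, and from $n$ samples such second-moment parameters are in principle resolvable to accuracy $\Theta(1/\sqrt{n})$, so ruling out a $\Theta(\log n)$ contribution requires a genuine new argument, not an appeal to the means being equal. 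As it stands, your write-up is a reasonable research plan whose hard core coincides with the open problem, not a proof.
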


While Definition \ref{Def: Permutation Channel Capacity} provides an operational definition of $\Cperm(P_{Z|X})$, the first equality in Conjecture \ref{Conj: Strictly Positive DMC Conjecture} can be construed as the corresponding notion of ``information capacity'' (analogous to the definition of multi-letter information capacity in, e.g., \cite[Definition 18.6]{PolyanskiyWu2017Notes}), and the second equality in Conjecture \ref{Conj: Strictly Positive DMC Conjecture} is a closed-form expression for the noisy permutation channel capacity. As Conjecture \ref{Conj: Strictly Positive DMC Conjecture} reveals, we believe that our achievability bound in Theorem \ref{Thm: Achievability Bound} is most likely tight. To briefly elaborate on this further, the first equality is inspired by the modified Fano's inequality argument in \eqref{Eq: Fano step}, where we also use Lemma \ref{Lemma: Equivalent Model} to obtain $I(\hat{P}_{X_1^n};\hat{P}_{Y_1^n})$ instead of $I(X_1^n;\hat{P}_{Y_1^n})$, and the second equality is suggested by the first equality via intuition from the multivariate CLT.

Next, as a concrete and canonical illustration of Propositions \ref{Prop: Unit Rank Stochastic Matrices} and \ref{Prop: Permutation Stochastic Matrices} and Theorem \ref{Thm: Strictly Positive and Full Rank Channels}, we present the noisy permutation channel capacity of binary symmetric channels below (see Definition \ref{Def: Symmetric Channel} for a definition of $\qSC$s). This result was first proved in \cite[Theorem 3]{Makur2018}. (Note that in the context of the work in \cite{KovacevicVukobratovic2013}, \cite{KovacevicVukobratovic2015}, and \cite{KovacevicTan2018a}, this BSC setting corresponds to noisy permutation channels with substitution errors.)

\begin{proposition}[$\Cperm$ of BSCs {\cite[Theorem 3]{Makur2018}}]
\label{Prop: Permutation Channel Capacity of BSC}
$$ \Cperm(\BSC(\delta)) =
\begin{cases}
1, & \text{for } \delta \in \{0,1\} \\ 
\frac{1}{2}, & \text{for } \delta \in \left(0,\frac{1}{2}\right)\cup\left(\frac{1}{2},1\right) \\
0, & \text{for } \delta = \frac{1}{2}
\end{cases} . $$
\end{proposition}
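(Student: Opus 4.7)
My plan is to prove Proposition \ref{Prop: Permutation Channel Capacity of BSC} by a straightforward case analysis in $\delta$, invoking results already established in the paper for each regime. The BSC conveniently falls into one of three previously characterized categories depending on the value of $\delta$, so no new machinery is needed; the proof amounts to checking that the hypotheses of the appropriate earlier result are satisfied in each case.

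First I would dispatch the boundary cases $\delta \in \{0,1\}$. In these cases, the transition matrix $S_\delta \in \R^{2\times 2}$ is a permutation matrix (the identity when $\delta = 0$ and the bit-flip matrix when $\delta = 1$), so Proposition \ref{Prop: Permutation Stochastic Matrices} applies with $k = 2$ and immediately yields $\Cperm(\BSC(\delta)) = k - 1 = 1$. Next I would handle $\delta = \frac{1}{2}$: here both rows of $S_{1/2}$ equal $\big(\tfrac{1}{2},\tfrac{1}{2}\big)$, so $P_{Z|X}$ has rank $1$ and Proposition \ref{Prop: Unit Rank Stochastic Matrices} gives $\Cperm(\BSC(\tfrac{1}{2})) = 0$.

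The main (and most interesting) case is $\delta \in \big(0,\tfrac{1}{2}\big) \cup \big(\tfrac{1}{2},1\big)$. Here every entry of $S_\delta$ lies in $(0,1)$, so $\BSC(\delta)$ is strictly positive in the sense of subsection \ref{Additional Notation}. Moreover, $\det(S_\delta) = (1-\delta)^2 - \delta^2 = 1 - 2\delta \neq 0$, so $\rank(S_\delta) = 2$. Since $\ext(S_\delta) \leq |\X| = 2$ and $\rank(S_\delta) = 2$, we also have $\ext(S_\delta) = 2 = \min\{\ext(S_\delta),|\Y|\}$, which means $S_\delta$ is ``full rank'' in the sense defined in subsection \ref{Additional Notation}. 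Theorem \ref{Thm: Strictly Positive and Full Rank Channels} then applies with $r = 2$ and delivers $\Cperm(\BSC(\delta)) = \frac{r-1}{2} = \frac{1}{2}$.

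There is no real obstacle here: the work was done in proving Theorems \ref{Thm: Achievability Bound}, \ref{Thm: Converse Bound I}, \ref{Thm: Converse Bound II}, and Propositions \ref{Prop: Unit Rank Stochastic Matrices} and \ref{Prop: Permutation Stochastic Matrices}. The only bookkeeping step worth emphasizing is that the BSC truly satisfies the ``full rank'' hypothesis of Theorem \ref{Thm: Strictly Positive and Full Rank Channels} whenever $\delta \neq \tfrac{1}{2}$, which reduces to the elementary calculation $\det(S_\delta) = 1 - 2\delta$.
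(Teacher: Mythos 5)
Your proposal is correct and follows exactly the paper's own proof: the cases $\delta=\tfrac{1}{2}$, $\delta\in\{0,1\}$, and $\delta\in\big(0,\tfrac{1}{2}\big)\cup\big(\tfrac{1}{2},1\big)$ are handled by Proposition \ref{Prop: Unit Rank Stochastic Matrices}, Proposition \ref{Prop: Permutation Stochastic Matrices}, and Theorem \ref{Thm: Strictly Positive and Full Rank Channels}, respectively. Your explicit verification that $\det(S_\delta)=1-2\delta\neq 0$ gives strict positivity and the ``full rank'' hypothesis is a useful bookkeeping detail the paper leaves implicit, but the argument is the same.
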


\begin{proof}
The $\delta = \frac{1}{2}$ case follows from Proposition \ref{Prop: Unit Rank Stochastic Matrices}, the $\delta \in \{0,1\}$ case follows from Proposition \ref{Prop: Permutation Stochastic Matrices}, and the remaining case follows from Theorem \ref{Thm: Strictly Positive and Full Rank Channels}.
\end{proof}

We remark that Proposition \ref{Prop: Permutation Channel Capacity of BSC} illustrates a few somewhat surprising facts about noisy permutation channel capacity. While traditional channel capacity is convex as a function of the channel (with fixed dimensions), noisy permutation channel capacity is clearly non-convex and discontinuous as a function the channel. Moreover, for the most part, the noisy permutation channel capacity of a BSC does not depend on $\delta$. Looking at the proof of Proposition \ref{Prop: Achievability Bound for DMCs with Rank 2} in subsection \ref{Achievability Bounds for DMCs}, this is because the scaling with $n$ of the $\ell^2$-distance between two encoded messages does not change after passing through the memoryless BSC (see, e.g., \eqref{Eq: Scaling of Message Diff} and footnote \ref{Footnote: Scaling of Message Diff}). However, \eqref{Eq: Overall probability of error} and \eqref{Eq: Overall probability of error 2} suggest that $\delta$ does affect the rate of decay of $P_{\mathsf{error}}^{(n)}$. Finally, we note that in a manner similar to Proposition \ref{Prop: Permutation Channel Capacity of BSC}, we can also determine the noisy permutation channel capacity of any $\qSC(\delta)$ for $\delta \in [0,1)$ using Propositions \ref{Prop: Unit Rank Stochastic Matrices} and \ref{Prop: Permutation Stochastic Matrices} and Theorem \ref{Thm: Strictly Positive and Full Rank Channels}.

\subsection{Erasure Channels and Doeblin Minorization}
\label{Erasure Channels}

In this subsection, we consider the important class of $q$-ary erasure channels. Indeed, in the context of communication networks, networks where packets can be dropped are typically modeled as noisy permutation channels with possible deletions, or equivalently, erasures, cf. \cite{KovacevicVukobratovic2013}, \cite{KovacevicVukobratovic2015}, \cite[Remark 1]{KovacevicTan2018a}. Since the transition kernels of $q$-ary erasure channels contain zero entries, our converse results in Theorems \ref{Thm: Converse Bound I} and \ref{Thm: Converse Bound II} do not hold. So, we will present some bounds on the their noisy permutation channel capacities. First, let us recall the definition of $q$-ary erasure channels.

\begin{definition}[$q$-ary Erasure Channel]
\label{Def: Erasure Channel}
Under the formalism presented in subsection \ref{Permutation Channel Model}, we define a \emph{$q$-ary erasure channel} $P_{Z|X}$ with erasure probability $\eta \in [0,1]$, input alphabet $\X$ with $|\X| = q \in \N\backslash\!\{1\}$, and output alphabet $\Y = \X\cup\!\{\mathsf{E}\}$, where $\mathsf{E}$ denotes the erasure symbol, using the conditional distributions
$$ \forall z \in \Y, \forall x \in \X, \enspace P_{Z|X}(z|x) = 
\begin{cases}
1-\eta, & \text{for } z = x \\ 
\eta, & \text{for } z = \mathsf{E} \\
0, & \text{otherwise}
\end{cases} \, .
$$
Moreover, we represent such a channel $P_{Z|X}$ as $\qEC(\eta)$ for convenience. 
\end{definition}

We note that in the special case where $q = 2$, $\X = \{0,1\}$, and $\eta$ is the probability that the input bit is erased, we refer to the $2\text{-}\mathsf{SC}(\eta)$ as a \textit{binary erasure channel} (BEC), denoted $\BEC(\eta)$. 

Next, in order to present our bounds on the noisy permutation channel capacity of erasure channels, we introduce a classical concept from the Markov process literature. As we will see, one approach to proving our achievability bound entails using a symmetric channel that is degraded by the erasure channel under consideration. While we have introduced degradation in subsection \ref{Auxiliary Lemmata}, the specific setting of degradation by erasure channels has been studied extensively in the Markov process literature under the guise of ``Doeblin minorization.'' We next introduce the concept of Doeblin minorization in an information theoretic light (within the formalism of subsection \ref{Permutation Channel Model}), cf. \cite[Section 3]{BhattacharyaWaymire2001}.

\begin{definition}[Doeblin Minorization]
\label{Def: Doeblin Minorization}
A row stochastic matrix $P_{Z|X} \in \R^{|\X| \times |\Y|}$ satisfies the \emph{Doeblin minorization condition} if there exists a probability distribution $Q_Z \in \Simplex_{\Y}$ and a constant $\eta \in (0,1)$ such that
$$ \forall z \in \Y, \forall x \in \X, \enspace P_{Z|X}(z|x) \geq \eta \, Q_Z(z) \, , $$
and we say that $P_{Z|X}$ satisfies $\mathsf{Doeblin}(Q_Z,\eta)$. Furthermore, we say that $P_{Z|X}$ satisfies $\mathsf{Doeblin}(Q_Z,0)$ when $P_{Z|X}$ does not satisfy the Doeblin minorization condition (since the above condition is trivially true when $\eta = 0$). 
\end{definition}

Definition \ref{Def: Doeblin Minorization} of Doeblin minorization is less general than its definition in a finite state space Markov chain context, where one often studies ``local minorization'' of multi-step Markov transition kernels, cf. \cite[Section 4]{BhattacharyaWaymire2001}. On the other hand, our definition applies to more general (rectangular) transition kernels. While the Doeblin minorization condition was originally developed to study the ergodicity of Markov processes,\footnote{As a historical remark, it is worth mentioning that as stated in \cite[Section 3]{BhattacharyaWaymire2001}, ``two of the most powerful ideas in the modern theory of Markov processes were introduced [by Doeblin in \cite{Doeblin1937} and \cite{Doeblin1938}]; namely minorization and coupling, respectively.''} as we alluded to earlier, it turns out to be equivalent to degradation by an erasure channel. The next lemma depicts this known, but seemingly overlooked, connection.

\begin{lemma}[Doeblin Minorization and Degradation {\cite{BhattacharyaWaymire2001,GohariGunluKramer2019}}]
\label{Lemma: Doeblin Minorization and Degradation}
Consider any DMC $P_{Z|X}$ with input alphabet $\X$ and output alphabet $\Y$ with $|\X| = q$. Then, the following are true:
\begin{enumerate}
\item \emph{(Equivalence \cite[Theorem 3.1]{BhattacharyaWaymire2001})} For any constant $\eta \in (0,1)$, $P_{Z|X}$ satisfies $\mathsf{Doeblin}(Q_Z,\eta)$ for some distribution $Q_Z \in \Simplex_{\Y}$ if and only if $P_{Z|X}$ is a degraded version of the $q$-ary erasure channel $\qEC(\eta)$.
\item \emph{(Extremality \cite[Lemma 4]{GohariGunluKramer2019})} The extremal erasure probability $\eta_{*} = \eta_{*}(P_{Z|X})$ such that $P_{Z|X}$ is a degraded version of $\qEC(\eta_{*})$ is given by
\begin{align*}
\eta_{*} & \triangleq \max\!\left\{\eta \in [0,1] : \parbox[]{8.4em}{$P_{Z|X}$ is a degraded\\version of $\qEC(\eta)$}\right\} \\
& = \sup\!\left\{\eta \in [0,1) : \parbox[]{12.7em}{$P_{Z|X}$ satisfies $\mathsf{Doeblin}(Q_Z,\eta)$\\for some distribution $Q_Z \in \Simplex_{\Y}$}\right\} \\
& = \sum_{z \in \Y}{\min_{x \in \X}{P_{Z|X}(z|x)}} \, ,
\end{align*}
where the second equality follows from part 1 and the quantity in the final equality is known as \emph{Doeblin's coefficient of ergodicity}, cf. \cite[Definition 5.1]{Cohenetal1993}.
\end{enumerate}
\end{lemma}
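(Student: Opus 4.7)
The plan is to prove the two parts separately, since part 2 will follow almost immediately from the equivalence established in part 1 by a short optimization argument.

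For part 1, I will prove both directions directly from Definitions~\ref{Def: Doeblin Minorization} and~\ref{Def: Degradation Preorder}. For the implication that degradation by $\qEC(\eta)$ implies Doeblin, suppose $P_{Z|X} = P_{\tilde{Z}|X} P_{Z|\tilde{Z}}$ where $P_{\tilde{Z}|X}$ is the $\qEC(\eta)$ with output alphabet $\X\cup\{\mathsf{E}\}$. Expanding the matrix product over the intermediate alphabet yields
\[ P_{Z|X}(z|x) = (1-\eta)\,P_{Z|\tilde{Z}}(z|x) + \eta\,P_{Z|\tilde{Z}}(z|\mathsf{E}) \, , \]
so setting $Q_Z \triangleq P_{Z|\tilde{Z}}(\cdot|\mathsf{E}) \in \Simplex_{\Y}$ immediately yields the minorization $P_{Z|X}(z|x) \geq \eta\, Q_Z(z)$. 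For the converse direction, given $\mathsf{Doeblin}(Q_Z,\eta)$ with $\eta \in (0,1)$, I will explicitly construct the cascading channel $P_{Z|\tilde{Z}} \in \R^{(q+1)\times |\Y|}$ by setting its row indexed by the erasure symbol $\mathsf{E}$ to $Q_Z$ and its row indexed by $x \in \X$ to the renormalized residual $\bigl(P_{Z|X}(\cdot|x) - \eta Q_Z\bigr)/(1-\eta)$. The Doeblin condition forces entry-wise non-negativity of these rows, while each row sums to one by a direct calculation, so $P_{Z|\tilde{Z}}$ is a bona fide DMC; one then verifies $\qEC(\eta) \cdot P_{Z|\tilde{Z}} = P_{Z|X}$ entry-wise by substitution.

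Part 2 then follows by optimizing over the admissible parameters. From part 1, $P_{Z|X}$ is a degraded version of $\qEC(\eta)$ for $\eta \in (0,1)$ if and only if there exists $Q_Z \in \Simplex_{\Y}$ with $\eta\, Q_Z(z) \leq \min_{x \in \X} P_{Z|X}(z|x)$ for every $z \in \Y$; the boundary cases $\eta \in \{0,1\}$ can be checked by hand, with $\qEC(0)$ being the identity channel and $\qEC(1)$ being a degenerate constant channel. Summing the above inequality over $z$ and using $\sum_z Q_Z(z) = 1$ yields the upper bound $\eta \leq \sum_{z \in \Y} \min_{x \in \X} P_{Z|X}(z|x)$. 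This bound is attained by taking $\eta^{*} = \sum_{z \in \Y} \min_{x \in \X} P_{Z|X}(z|x)$ together with $Q_Z(z) = \min_{x \in \X} P_{Z|X}(z|x)/\eta^{*}$ (assuming $\eta^{*} > 0$; the case $\eta^{*} = 0$ is trivial). This explicit witness also shows that the supremum in the middle expression of the claim is attained, which in turn upgrades the ``$\sup$'' to a ``$\max$'' in the definition of $\eta_{*}$.

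The argument is essentially direct once the correct cascade kernel is identified, so I do not anticipate a serious obstacle; the only care required will be in verifying that the row-normalized residuals defining the cascade are simultaneously non-negative (thanks to Doeblin minorization) and row stochastic (thanks to the $1/(1-\eta)$ normalization), and in gluing the endpoint cases $\eta \in \{0,1\}$ cleanly onto the open-interval argument so that the three expressions in the extremality statement all agree.
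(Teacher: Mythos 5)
Your part 1 is exactly the paper's own argument: the same cascade kernel with the $\mathsf{E}$-row set to $Q_Z$ and the $x$-rows set to the residuals $\bigl(P_{Z|X}(\cdot|x)-\eta Q_Z\bigr)/(1-\eta)$ for the Doeblin-to-degradation direction, and the same expansion $P_{Z|X}(z|x)=(1-\eta)P_{Z|X^{\prime}}(z|x)+\eta P_{Z|X^{\prime}}(z|\mathsf{E})\geq \eta P_{Z|X^{\prime}}(z|\mathsf{E})$ for the converse. Where you genuinely differ is part 2: the paper does not prove it at all and simply defers to \cite[Lemma 4]{GohariGunluKramer2019}, whereas you give a short self-contained argument\textemdash summing the minorization inequality $\eta\, Q_Z(z)\leq \min_{x}P_{Z|X}(z|x)$ over $z\in\Y$ to bound every admissible $\eta$ by $\sum_{z}\min_{x}P_{Z|X}(z|x)$, and exhibiting the normalized row-minimum distribution as the extremal $Q_Z$. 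That argument is correct and buys a proof the paper only cites. One caveat on your final remark: the claim that your witness ``upgrades the sup to a max'' in the middle expression fails in the edge case $\eta_{*}=1$ (all rows of $P_{Z|X}$ identical), since Definition \ref{Def: Doeblin Minorization} and the middle expression restrict $\eta$ to $[0,1)$, so the supremum equals $1$ without being attained there, while the maximum in the first expression is attained at $\eta=1$ only via a direct degradation construction for $\qEC(1)$ (not via part 1, which is stated for $\eta\in(0,1)$). Your plan to handle the endpoints $\eta\in\{0,1\}$ and $\eta_{*}\in\{0,1\}$ by hand must cover exactly this, but that bookkeeping is routine.
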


Although Lemma \ref{Lemma: Doeblin Minorization and Degradation} is known in the literature, we provide a proof of part 1 in appendix \ref{Proof of Lemma Doeblin Minorization and Degradation} for completeness. Moreover, we note that the equivalent description of Doeblin minorization as degradation by an erasure channel can also be viewed as a specialization of the so called \textit{regeneration} or \textit{Nummelin splitting} technique in the theory of Harris chains \cite{AthreyaNey1978,Nummelin1978}. 

In the ensuing theorem, we derive Theorem \ref{Thm: Comparison Bound via Degradation}, which uses the notion of degradation to prove a comparison bound for noisy permutation channel capacities, as well as a related bound pertaining to Doeblin minorization, which specializes Theorem \ref{Thm: Comparison Bound via Degradation} for erasure channels (as revealed by our discussion heretofore). As outlined in subsection \ref{Degradation and Permutation Channel Capacity}, this result concurs with the intuition that degraded channels are ``more noisy,'' and therefore, have smaller noisy permutation channel capacity. 

\begin{theorem}[Comparison Bounds via Degradation]
\label{Thm: Comparison Bounds via Degradation}
Consider any two DMCs $P_{Z_1|X} \in \R^{|\X| \times |\mathcal{Z}_1|}$ and $P_{Z_2|X} \in \R^{|\X| \times |\mathcal{Z}_2|}$, with common input alphabet $\X$ and output alphabets $\mathcal{Z}_1$ and $\mathcal{Z}_2$, respectively. Then, the following are true:
\begin{enumerate}
\item If $P_{Z_2|X}$ is a degraded version of $P_{Z_1|X}$, then we have
$$ \Cperm(P_{Z_2|X}) \leq \Cperm(P_{Z_1|X}) \, . $$
\item If $P_{Z_2|X}$ satisfies $\mathsf{Doeblin}(Q_Z,\eta)$ for some distribution $Q_Z \in \Simplex_{\Y}$ and some constant $\eta \in (0,1)$, then we have
$$ \Cperm(P_{Z_2|X}) \leq \Cperm(\qEC(\eta)) \, , $$
where we let $q = |\X|$.
\end{enumerate}
\end{theorem}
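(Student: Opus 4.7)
The plan is to establish part 1 via a standard simulation argument that reduces coding over $P_{Z_1|X}$ to coding over $P_{Z_2|X}$, and then to derive part 2 as an immediate corollary of part 1 combined with Lemma \ref{Lemma: Doeblin Minorization and Degradation}.

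For part 1, first invoke Definition \ref{Def: Degradation Preorder} to write $P_{Z_2|X} = P_{Z_1|X} P_{Z_2|Z_1}$ for some intermediate DMC $P_{Z_2|Z_1}$. Suppose $R$ is achievable for the noisy permutation channel with DMC $P_{Z_2|X}$, so that there exists a sequence of encoder-decoder pairs $(f_n,g_n)$ on message sets of size $n^R$ with vanishing average probability of error. I will construct a sequence $(f_n,g_n^\prime)$ of (possibly randomized) encoder-decoder pairs on the same message sets that achieves the same probability of error for the noisy permutation channel with DMC $P_{Z_1|X}$. The encoder $f_n$ is left unchanged; the randomized decoder $g_n^\prime$ operates as follows: upon receiving $Y_1^n \in \mathcal{Z}_1^n$ from the noisy permutation channel for $P_{Z_1|X}$, it generates $U_1^n \in \mathcal{Z}_2^n$ by drawing each $U_i$ independently from $P_{Z_2|Z_1}(\cdot|Y_i)$, and outputs $g_n(U_1^n)$.

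The key step is to verify that the pair $(M, U_1^n)$ produced by $(f_n,g_n^\prime)$ over $P_{Z_1|X}$ has the same joint law as the pair $(M, V_1^n)$ produced by $(f_n,g_n)$ over $P_{Z_2|X}$. To this end, consider the cascade $X_1^n \rightarrow Z_1^n \rightarrow Y_1^n \rightarrow U_1^n$, in which the first stage is memoryless $P_{Z_1|X}$, the second is the random permutation channel $\Pi$ of \eqref{Eq: The Random Perm Channel}, and the third is memoryless $P_{Z_2|Z_1}$. Invoking Lemma \ref{Lemma: Equivalent Model}, I can commute $\Pi$ with the memoryless DMC $P_{Z_2|Z_1}$, reshaping the cascade so that the two memoryless DMC stages become consecutive and the random permutation is moved to the end. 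The composition of the two consecutive memoryless DMCs yields a single memoryless DMC with transition kernel $P_{Z_1|X} P_{Z_2|Z_1} = P_{Z_2|X}$, which is exactly the noisy permutation channel for $P_{Z_2|X}$. Hence $P_{U_1^n|X_1^n} = P_{V_1^n|X_1^n}$, and since the encoder and the law of $M$ are shared, $(M,U_1^n)$ and $(M,V_1^n)$ have identical joint distributions. The average probabilities of error of $(f_n,g_n^\prime)$ over $P_{Z_1|X}$ and $(f_n,g_n)$ over $P_{Z_2|X}$ therefore coincide, so $R$ is achievable for $P_{Z_1|X}$, yielding $\Cperm(P_{Z_2|X}) \leq \Cperm(P_{Z_1|X})$.

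For part 2, since $P_{Z_2|X}$ satisfies $\mathsf{Doeblin}(Q_Z,\eta)$ with $\eta \in (0,1)$, part 1 of Lemma \ref{Lemma: Doeblin Minorization and Degradation} implies that $P_{Z_2|X}$ is a degraded version of $\qEC(\eta)$ with $q = |\X|$. Applying part 1 of the present theorem then yields $\Cperm(P_{Z_2|X}) \leq \Cperm(\qEC(\eta))$, as required. The only mild obstacle I anticipate is the careful invocation of Lemma \ref{Lemma: Equivalent Model}, which as stated establishes equality of the conditional channels $P_{Y_1^n|X_1^n}$ and $P_{W_1^n|X_1^n}$; lifting this to equality of joint laws of $(M,U_1^n)$ and $(M,V_1^n)$ is immediate once the encoder's randomness and the message distribution are held common across the two setups, but this should be written out to keep the simulation argument watertight.
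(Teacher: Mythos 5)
Your proposal is correct and follows essentially the same route as the paper: keep the encoder, simulate the degradation kernel $P_{Z_2|Z_1}$ at the decoder, use Lemma \ref{Lemma: Equivalent Model} to commute the random permutation with a memoryless stage so that the composed channel is the noisy permutation channel for $P_{Z_2|X} = P_{Z_1|X}P_{Z_2|Z_1}$, and obtain part 2 as a corollary of part 1 and Lemma \ref{Lemma: Doeblin Minorization and Degradation}. The only (immaterial) difference is that you commute $\Pi$ past $P_{Z_2|Z_1}$ to push the permutation to the end, whereas the paper commutes it to the front via an auxiliary chain $M \rightarrow X_1^n \rightarrow V_1^n \rightarrow W_1^n \rightarrow U_1^n$; both are valid applications of the same lemma.
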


\begin{proof} ~\newline
\indent
\underline{Part 1:} Recalling the formalism introduced in subsection \ref{Permutation Channel Model}, fix any (small) $\epsilon > 0$ such that $R \triangleq \Cperm(P_{Z_2|X}) - \epsilon \geq 0$ is an achievable rate for the DMC $P_{Z_2|X}$. Then, for the noisy permutation channel model with DMC $P_{Z_2|X}$, consider the Markov chain $M \rightarrow f_n(M) = X_1^n \rightarrow (Z_2)_1^n \rightarrow (Y_2)_1^n \rightarrow g_n((Y_2)_1^n)$, defined by a message set $\M$ with cardinality $|\M| = n^R$, a sequence of possibly randomized encoders $\{f_n : \M \rightarrow \X^n\}_{n \in \N}$, and a sequence of associated possibly randomized decoders $\{g_n : \mathcal{Z}_2^n \rightarrow \M\cup\!\{\mathsf{e}\}\}_{n \in \N}$, where $(Y_2)_1^n \in \mathcal{Z}_2^n$ denotes a random permutation of the output codeword $(Z_2)_1^n$ of the DMC $P_{Z_2|X}$. Let us define
$$ P_{\mathsf{error}}^{(n)}(P_{Z_2|X},f_n,g_n) \triangleq \P\!\left(M \neq g_n((Y_2)_1^n)\right) $$
as the average probability of error for the noisy permutation channel model corresponding to $P_{Z_2|X}$, $f_n$, and $g_n$. Since $R$ is an achievable rate, we further assume that $f_n$ and $g_n$ are chosen such that $\lim_{n \rightarrow \infty}{P_{\mathsf{error}}^{(n)}(P_{Z_2|X},f_n,g_n)} = 0$. By our assumption in the theorem statement, we know using Definition \ref{Def: Degradation Preorder} that there exists some DMC $P_{Z_2|Z_1} \in \R^{|\mathcal{Z}_1| \times |\mathcal{Z}_2|}$ (with input alphabet $\mathcal{Z}_1$ and output alphabet $\mathcal{Z}_2$) such that $P_{Z_2|X} = P_{Z_1|X} P_{Z_2|Z_1}$. We will use this degradation relation to construct a ``good'' encoder-decoder pair for the DMC $P_{Z_1|X}$.

To this end, for the noisy permutation channel model with DMC $P_{Z_1|X}$, consider the Markov chain $M \rightarrow f_n(M) = X_1^n \rightarrow (Z_1)_1^n \rightarrow (Y_1)_1^n$, where we use the same message set (with cardinality $n^R$) and encoder sequence as before, and $(Y_1)_1^n \in \mathcal{Z}_1^n$ is a random permutation of the output codeword $(Z_1)_1^n$ of the DMC $P_{Z_1|X}$. (In fact, the random variables $M$ and $X_1^n$ are coupled to be equal for the two models.) Now, for every $n \in \N$, construct the decoder $\tilde{g}_n : \mathcal{Z}_1^n \rightarrow \M\cup\!\{\mathsf{e}\}$ so that 
\begin{equation}
\label{Eq: Degradation Decoder}
\forall y_1^n \in \mathcal{Z}_1^n, \enspace \tilde{g}_n(y_1^n) \triangleq g_n(Z_1^n) \, ,
\end{equation}
where $Z_i \sim P_{Z_2|Z_1}(\cdot|y_i)$ for every $i \in \{1,\dots,n\}$, and $Z_1^n$ are mutually independent. This produces the Markov chain $M \rightarrow X_1^n \rightarrow (Z_1)_1^n \rightarrow (Y_1)_1^n \rightarrow \tilde{g}_n((Y_1)_1^n)$. We note that the decoder in \eqref{Eq: Degradation Decoder} essentially ``simulates'' the auxiliary DMC $P_{Z_2|Z_1}$ so that its output is statistically equivalent to the output of the decoder $g_n$ with DMC $P_{Z_2|X}$.

We next prove the intuitively straightforward relation
\begin{equation}
\label{Eq: Equality in Perror}
P_{\mathsf{error}}^{(n)}(P_{Z_2|X},f_n,g_n) = P_{\mathsf{error}}^{(n)}(P_{Z_1|X},f_n,\tilde{g}_n) \, . 
\end{equation}
To establish this, consider yet another Markov chain, $M \rightarrow f_n(M) = X_1^n \rightarrow V_1^n \rightarrow W_1^n \rightarrow U_1^n$, where the message set and $f_n$ are the same as before, $V_1^n \in \X^n$ is a random permutation of $X_1^n$, $W_1^n \in \mathcal{Z}_1^n$ is the output of the DMC $P_{Z_1|X}$ with input $V_1^n$, and $U_1^n \in \mathcal{Z}_2^n$ is the output of the DMC $P_{Z_2|Z_1}$ with input $W_1^n$. Using Lemma \ref{Lemma: Equivalent Model}, notice that the conditional distribution $P_{(Y_1)_1^n|X_1^n}$ is equivalent to the conditional distribution $P_{W_1^n|X_1^n}$. Furthermore, Lemma \ref{Lemma: Equivalent Model} and the degradation relation $P_{Z_2|X} = P_{Z_1|X} P_{Z_2|Z_1}$ imply that the conditional distribution $P_{(Y_2)_1^n|X_1^n}$ is equivalent to the conditional distribution $P_{U_1^n|X_1^n}$. Thus, the joint distribution of $(M,\tilde{g}_n((Y_1)_1^n))$ is equal to the joint distribution of $(M,g_n((Y_2)_1^n))$, because the conditional distributions of $\tilde{g}_n(W_1^n)$ and $g_n(U_1^n)$ given $M$ are equivalent, where we may perceive $U_1^n$ as the intermediate random variables used by $\tilde{g}$ in \eqref{Eq: Degradation Decoder} so that $\tilde{g}_n(W_1^n) = g_n(U_1^n)$. This produces the relation \eqref{Eq: Equality in Perror}.

Lastly, we conclude this proof by realizing that \eqref{Eq: Equality in Perror} reveals that $R = \Cperm(P_{Z_2|X}) - \epsilon$ is an achievable rate for the DMC $P_{Z_1|X}$. Therefore, we have
$$ \Cperm(P_{Z_2|X}) - \epsilon \leq \Cperm(P_{Z_1|X}) \, , $$
and we can let $\epsilon \rightarrow 0$ to obtain the desired inequality. 

\underline{Part 2:} This follows immediately from part 1 of this theorem and Lemma \ref{Lemma: Doeblin Minorization and Degradation}.
\end{proof}

We are now in a position to present bounds on the noisy permutation channel capacity of $q$-ary erasure channels. Although the achievability bound in the ensuing proposition can be obtained as a direct consequence of Theorem \ref{Thm: Achievability Bound}, we will elucidate an alternative coding scheme that establishes this bound using Doeblin minorization. Similarly, although the converse bound in the ensuing proposition is just the trivial bound given in \eqref{Eq: General Combined Converse Bound}, we will provide an alternative proof for it. 

\begin{proposition}[Bounds on $\Cperm$ of $\qEC$]
\label{Prop: Permutation Channel Capacity of q-EC}
For a $q$-ary erasure channel $\qEC(\eta)$ with $\eta \in (0,1)$, we have
$$ \frac{q-1}{2} \leq \Cperm(\qEC(\eta)) \leq q - 1 \, . $$
Furthermore, the extremal noisy permutation channel capacities are $\Cperm(\qEC(0)) = q-1$ and $\Cperm(\qEC(1)) = 0$.
\end{proposition}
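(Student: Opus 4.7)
The plan is to handle the two extremal cases and the two nontrivial bounds separately, leveraging the degradation-based machinery developed earlier in the paper rather than re-running the entropy computations from Theorems~\ref{Thm: Converse Bound I} and~\ref{Thm: Converse Bound II} (which do not directly apply since $\qEC(\eta)$ has zero entries). For $\eta = 0$, the channel $\qEC(0)$ never emits the erasure symbol, and after removing the (identically zero) column indexed by $\mathsf{E}$ the channel coincides with the $q \times q$ identity matrix, which is a permutation matrix; Proposition~\ref{Prop: Permutation Stochastic Matrices} then gives $\Cperm(\qEC(0)) = q-1$. For $\eta = 1$, every row of $\qEC(1)$ equals $e_{\mathsf{E}}$, so the matrix has unit rank and Proposition~\ref{Prop: Unit Rank Stochastic Matrices} gives $\Cperm(\qEC(1)) = 0$.

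For the converse $\Cperm(\qEC(\eta)) \leq q - 1$ when $\eta \in (0,1)$, the quickest route is to observe that $\qEC(\eta)$ is a degraded version of $\qEC(0)$: simulating a fresh independent erasure with probability $\eta$ after the identity channel produces $\qEC(\eta)$. Theorem~\ref{Thm: Comparison Bounds via Degradation} (part 1), combined with the already established $\Cperm(\qEC(0)) = q-1$, yields the desired inequality. Alternatively, since the $q$ rows $\{(1-\eta) e_x + \eta e_{\mathsf{E}} : x \in \X\}$ of $\qEC(\eta)$ are affinely independent (indeed linearly independent for $\eta < 1$), we have $\ext(\qEC(\eta)) = q$, and the general DMC bound \eqref{Eq: General Combined Converse Bound} gives $\Cperm(\qEC(\eta)) \leq \min\{q,\,q+1\} - 1 = q - 1$.

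For the achievability $\Cperm(\qEC(\eta)) \geq (q-1)/2$, rather than directly invoking Theorem~\ref{Thm: Achievability Bound} (which would yield the same bound since $\rank(\qEC(\eta)) = q$), I would use Doeblin minorization to reduce to the symmetric-channel capacity. Set $\delta \triangleq \eta(q-1)/q$. Because $\eta \in (0,1)$, this $\delta$ lies strictly between $0$ and $(q-1)/q$, so $\qSC(\delta)$ is strictly positive and full rank with $\rank(\qSC(\delta)) = \ext(\qSC(\delta)) = q$; Theorem~\ref{Thm: Strictly Positive and Full Rank Channels} therefore yields $\Cperm(\qSC(\delta)) = (q-1)/2$. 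A direct calculation of Doeblin's coefficient of ergodicity via Lemma~\ref{Lemma: Doeblin Minorization and Degradation} (part 2) gives
\begin{equation*}
\eta_{*}(\qSC(\delta)) = \sum_{z \in \X}{\min_{x \in \X}{\left[S_{\delta}\right]_{x,z}}} = q \cdot \frac{\delta}{q-1} = \eta \, ,
\end{equation*}
so Lemma~\ref{Lemma: Doeblin Minorization and Degradation} (part 1) places $\qSC(\delta)$ as a degraded version of $\qEC(\eta)$. Theorem~\ref{Thm: Comparison Bounds via Degradation} (part 1) then delivers $\Cperm(\qEC(\eta)) \geq \Cperm(\qSC(\delta)) = (q-1)/2$.

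I do not foresee a substantial obstacle here: every step reduces to a result already established in the paper. The only conceptual ingredient is the calibration $\delta = \eta(q-1)/q$, which tunes the Doeblin coefficient of $\qSC(\delta)$ to match $\eta$ exactly; the remaining verifications (affine independence of the rows of $\qEC(\eta)$, and the strict positivity and full rank status of $\qSC(\delta)$ for $\delta \in (0,(q-1)/q)$) are routine. What this plan does \emph{not} address is the gap between the achievability $(q-1)/2$ and the converse $q-1$ for $\eta \in (0,1)$; tightening either side would require new ideas beyond the scope of this proposition.
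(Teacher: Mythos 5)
Your proposal is correct, and the extremal cases and the achievability are essentially the paper's own argument: the paper also calibrates $\delta = \eta(q-1)/q$, shows $\qSC(\delta)$ is a degraded version of $\qEC(\eta)$, and invokes Theorem \ref{Thm: Strictly Positive and Full Rank Channels} together with the degradation comparison theorem. The only cosmetic difference on that side is how the degradation is certified: the paper exhibits the explicit Fortuin--Kasteleyn decomposition $S_{\eta(q-1)/q} = (1-\eta)I + \eta\,\tfrac{1}{q}\1\1^{\T}$, which directly displays $\mathsf{Doeblin}(\1^{\T}\!/q,\eta)$, whereas you compute Doeblin's coefficient $\eta_*(\qSC(\delta)) = q\delta/(q-1) = \eta$ via part 2 of Lemma \ref{Lemma: Doeblin Minorization and Degradation}; both are valid and equivalent in effect. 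The genuine divergence is the converse for $\eta \in (0,1)$: the paper deliberately avoids the $\ext$-based bound \eqref{Eq: General Combined Converse Bound} (whose proof is only sketched there) and instead runs a direct Fano argument, bounding $I(X_1^n;\hat{P}_{Y_1^n})$ by conditioning on $\hat{P}_{Y_1^n}(\mathsf{E})$ and using that the erasure count is independent of the codeword; this route has the side benefit that the intermediate bound \eqref{Eq: BEC Pre-Upper bound on MI} is reused to motivate Conjecture \ref{Conj: BEC Conjecture}. Your route instead notes that $\qEC(\eta)$ is a degraded version of $\qEC(0)$ (append an independent erasure with probability $\eta$ to the identity channel) and applies part 1 of Theorem \ref{Thm: Comparison Bounds via Degradation} with $\Cperm(\qEC(0)) = q-1$; this is shorter, fully self-contained (no circularity, since the comparison theorem and Proposition \ref{Prop: Permutation Stochastic Matrices} are proved independently of this proposition), but yields no reusable intermediate estimate. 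One small point you handled correctly and should keep explicit: for $\eta = 0$ the transition matrix is $q \times (q+1)$ with an all-zero erasure column, so Proposition \ref{Prop: Permutation Stochastic Matrices} applies only after dropping that never-occurring output symbol (or, equivalently, after noting $\hat{P}_{Y_1^n}(\mathsf{E}) = 0$ almost surely), a detail the paper glosses over.
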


\begin{proof} ~\newline
\indent 
\underline{Achievability for $\eta \in (0,1)$:} 
To derive a lower bound on $\Cperm(\qEC(\eta))$, we will employ a useful representation of $\qSC$'s using $\qEC$'s. Observe that the row stochastic transition probability matrix of a $\qSC(\eta(q-1)/q)$ can be decomposed as
\begin{equation}
\label{Eq: Fortuin-Kasteleyn}
S_{\eta(q-1)/q} = \left(1-\eta\right) I + \eta \left(\frac{1}{q} \1 \1^{\T}\right) ,
\end{equation}
where $S_{\eta(q-1)/q}$ is the $q$-ary symmetric channel matrix defined in \eqref{Eq: q-SC Matrix}, $I \in \R^{q\times q}$ is the identity matrix representing a channel that exactly copies its input, $\1 = [1 \, \cdots \, 1]^{\T} \in \R^q$ denotes the column vector with all elements equal to unity, and $\frac{1}{q} \1 \1^{\T}$ represents a channel whose output is an independent uniform random variable. Hence, a $\qSC(\eta(q-1)/q)$ can be equivalently construed as a channel that either copies its input random variable with probability $1-\eta$, or generates a completely independent output random variable that is uniformly distributed on the input alphabet $\X$ (where $|\X| = q$) with probability $\eta$. A consequence of this interpretation, or equivalently, the decomposition \eqref{Eq: Fortuin-Kasteleyn}, is that a $\qSC(\eta(q-1)/q)$ satisfies $\mathsf{Doeblin}(\1^{\T}\!/q,\eta)$, where $\1^{\T}\!/q$ is a row vector representing the uniform distribution. Using part 1 of Lemma \ref{Lemma: Doeblin Minorization and Degradation}, this means that a $\qSC(\eta(q-1)/q)$ is a degraded version of a $\qEC(\eta)$; in particular, a $\qSC(\eta(q-1)/q)$ is statistically equivalent to a $\qEC(\eta)$ followed by a channel that outputs an independent uniformly distributed random variable for the input erasure symbol $\mathsf{E}$, and copies all other input symbols. Moreover, part 2 of Theorem \ref{Thm: Comparison Bounds via Degradation} conveys that
$$ \Cperm\!\left(\qSC\!\left(\frac{\eta(q-1)}{q}\right)\right) \leq \Cperm(\qEC(\eta)) \, . $$
Since $\frac{\eta(q-1)}{q} \in \big(0,\frac{q-1}{q}\big)$, it is straightforward to verify that the $q$-ary symmetric channel matrix $S_{\eta(q-1)/q}$ is strictly positive and non-singular. So, we have
$$ \Cperm(\qEC(\eta)) \geq \Cperm\!\left(\qSC\!\left(\frac{\eta(q-1)}{q}\right)\right) = \frac{q-1}{2} $$
using Theorem \ref{Thm: Strictly Positive and Full Rank Channels}, which proves the desired result.

We remark that according to the proof of part 1 of Theorem \ref{Thm: Comparison Bounds via Degradation}, an appropriately altered coding scheme from the achievability proof of Theorem \ref{Thm: Achievability Bound} in subsection \ref{Achievability Bounds for DMCs}, which has:
\begin{enumerate}
\item A randomized encoder described by \eqref{Eq: Randomized Encoder 1},
\item A decoder that first generates independent uniform random output letters to replace every erasure symbol in the output codeword, and then applies the decoder \eqref{Eq: Decoder 1}, which is characterized by \eqref{Eq: Element-wise Decoder} specialized to a $\qSC(\eta(q-1)/q)$,
\end{enumerate}
achieves the lower bound on $\Cperm(\qEC(\eta))$. Alternatively, if we directly use Theorem \ref{Thm: Achievability Bound} and the fact that $\rank(\qEC(\eta)) = q$ to obtain the lower bound on $\Cperm(\qEC(\eta))$ (as mentioned earlier), then this corresponds to using the same encoder \eqref{Eq: Randomized Encoder 1}, but an alternative decoder \eqref{Eq: Decoder 1}, which is characterized by \eqref{Eq: Element-wise Decoder} specialized to a $\qEC(\eta)$.

\underline{Converse for $\eta \in (0,1)$:} 
As mentioned earlier, the upper bound in the proposition statement is immediate from \eqref{Eq: General Combined Converse Bound} and the fact that $\ext(\qEC(\eta)) = q$. However, as outlined after \eqref{Eq: General Combined Converse Bound} in subsection \ref{Converse Bounds}, the inequality in terms of $\ext(\cdot)$ in \eqref{Eq: General Combined Converse Bound} is proved using a degradation argument akin to the proof of Theorem \ref{Thm: Converse Bound II}. Here, we provide a simpler alternative proof of the (intuitively obvious) upper bound in the proposition statement by exploiting specific properties of $q$-ary erasure channels.

Recall that \eqref{Eq: Fano step} (from the proof of Theorem \ref{Thm: Converse Bound I}) holds for a $\qEC(\eta)$, and we can bound the mutual information term in \eqref{Eq: Fano step} via
\begin{align}
I(X_1^n;\hat{P}_{Y_1^n}) & \overset{\eqmakebox[K][c]{\footnotesize (a)}}{=} I(X_1^n;\hat{P}_{Y_1^n}(\mathsf{E}),\hat{P}_{Y_1^n}(1),\dots,\hat{P}_{Y_1^n}(q-1)) \nonumber \\
& \overset{\eqmakebox[K][c]{\footnotesize (b)}}{=} I(X_1^n;\hat{P}_{Y_1^n}(1),\dots,\hat{P}_{Y_1^n}(q-1)|\hat{P}_{Y_1^n}(\mathsf{E})) \nonumber \\
& \quad \, + I(X_1^n;\hat{P}_{Y_1^n}(\mathsf{E})) \nonumber \\
& \overset{\eqmakebox[K][c]{\footnotesize (c)}}{=} I(X_1^n;\hat{P}_{Y_1^n}(1),\dots,\hat{P}_{Y_1^n}(q-1)|\hat{P}_{Y_1^n}(\mathsf{E})) \nonumber \\
& \overset{\eqmakebox[K][c]{}}{=} H(\hat{P}_{Y_1^n}(1),\dots,\hat{P}_{Y_1^n}(q-1)|\hat{P}_{Y_1^n}(\mathsf{E})) \nonumber \\
& \quad \, - H(\hat{P}_{Y_1^n}(1),\dots,\hat{P}_{Y_1^n}(q-1)|X_1^n,\hat{P}_{Y_1^n}(\mathsf{E})) \nonumber \\
& \overset{\eqmakebox[K][c]{\footnotesize (d)}}{\leq} (q-1)\log(n+1) \nonumber \\
\label{Eq: BEC Pre-Upper bound on MI} 
& \quad \, - H(\hat{P}_{Y_1^n}(1),\dots,\hat{P}_{Y_1^n}(q-1)|X_1^n,\hat{P}_{Y_1^n}(\mathsf{E})) \\
& \overset{\eqmakebox[K][c]{\footnotesize (e)}}{\leq} (q-1)\log(n+1) \, ,
\label{Eq: BEC Upper bound on MI}
\end{align}
where (a) holds because $\hat{P}_{Y_1^n}$ sums to unity and we let $\X = \{1,\dots,q\}$ without loss of generality, (b) follows from the chain rule, (c) uses the fact that $I(X_1^n;\hat{P}_{Y_1^n}(\mathsf{E})) = 0$ since the codeword $X_1^n$ is independent of the number of erasures $n\hat{P}_{Y_1^n}(\mathsf{E})$, (d) holds because $n\hat{P}_{Y_1^n}(i) \in \{0,\dots,n\}$ for every $i \in \{1,\dots,q-1\}$, and (e) follows from the non-negativity of Shannon entropy. Therefore, as before, substituting \eqref{Eq: BEC Upper bound on MI} into \eqref{Eq: Fano step}, dividing by $\log(n)$, and letting $n \rightarrow \infty$, we get that any achievable rate $R \geq 0$ satisfies $R \leq q-1$. This proves that $\Cperm(\qEC(\eta)) \leq q-1$. 

\underline{Case $\eta = 0$:} In this case, the $\qEC(0)$ is just the deterministic identity channel. Hence, $\Cperm(\qEC(0)) = q-1$ using Proposition \ref{Prop: Permutation Stochastic Matrices}.

\underline{Case $\eta = 1$:} In this case, the $\qEC(1)$ erases all its input symbols so that we obtain an ``independent channel.'' Hence, $\Cperm(\qEC(1)) = 0$ using Proposition \ref{Prop: Unit Rank Stochastic Matrices}.
\end{proof}

We finally make several pertinent remarks. Firstly, in the special case of $q = 2$ and $\eta \in (0,1)$, the elegant interpretation of a $\BSC\big(\frac{\eta}{2}\big)$ as a $\BEC(\eta)$ which additionally replaces any output erasure symbol $\mathsf{E}$ with an independent $\Ber\big(\frac{1}{2}\big)$ bit, or equivalently, the decomposition \eqref{Eq: Fortuin-Kasteleyn}, is a notion that originates from \textit{Fortuin-Kasteleyn random cluster representations of Ising models} in the study of percolation, cf. \cite{Grimmett1997}. Moreover, this notion has been exploited in various other discrete probability contexts such as reliable computation using noisy circuits \cite[p.570]{Feder1989}, broadcasting on trees \cite[p.412]{Evansetal2000}, and broadcasting on directed acyclic graphs \cite[Appendix C]{MakurMosselPolyanskiy2020} (also see \cite[p.1634]{MakurMosselPolyanskiy2019}).

Secondly, in the special case of $q = 2$ and $\eta \in (0,1)$, we propose the following conjecture.

\begin{conjecture}[$\Cperm$ of BECs]
\label{Conj: BEC Conjecture}
For any erasure probability $\eta \in (0,1)$, the noisy permutation channel capacity of the $\BEC(\eta)$ is given by
$$ \Cperm(\BEC(\eta)) = \frac{1}{2} \, . $$
\end{conjecture}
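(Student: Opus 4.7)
The achievability bound $\Cperm(\BEC(\eta))\geq 1/2$ is already contained in Proposition \ref{Prop: Permutation Channel Capacity of q-EC}, so my plan concentrates on the matching converse $\Cperm(\BEC(\eta))\leq 1/2$, which improves the upper bound $q-1=1$ for $q=2$ in Proposition \ref{Prop: Permutation Channel Capacity of q-EC} by a factor of $2$. I would follow the blueprint of the proof of Theorem \ref{Thm: Converse Bound I}: the Fisher-Neyman reduction leading to \eqref{Eq: Fano step} uses only the random permutation block and hence is valid for any DMC, so it suffices to show $I(X_1^n;\hat P_{Y_1^n})\leq\tfrac{1}{2}\log n+O(1)$.

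The key structural simplification for the BEC is that although the output alphabet has size $3$, the coordinate $N_E=n\hat P_{Y_1^n}(\mathsf{E})$ satisfies $N_E\sim\bin(n,\eta)$ independently of $X_1^n$ (each position is erased independently of its value). Consequently, with $N_1=n\hat P_{Y_1^n}(1)$ and $K=\sum_i\I\{X_i=1\}$, the chain of identities
\[
I(X_1^n;\hat P_{Y_1^n})=I(X_1^n;(N_1,N_E))=I(X_1^n;N_1\mid N_E)=I(K;N_1\mid N_E)
\]
reduces the problem to a single conditionally-scalar observation. A direct Bayes computation, exploiting that the $n-m$ non-erased positions form a uniformly random subset, gives $N_1\mid K=k,\,N_E=m\sim\mathrm{Hyper}(n,k,n-m)$. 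On the typical event $\{|N_E-\eta n|\leq\epsilon n\}$---whose complement has exponentially small probability by Hoeffding's inequality (Lemma \ref{Lemma: Hoeffding's Inequality}) and thus contributes $o(1)$ to the mutual information---I bound $H(N_1\mid N_E=m)\leq\log(n-m+1)=\log n+O(1)$ via the support size, and I lower-bound $H(\mathrm{Hyper}(n,k,n-m))$ using a local CLT (the hypergeometric analogue of Lemma \ref{Lemma: Approximation of Binomial Entropy}, obtainable by representing the hypergeometric as two independent binomials conditioned on their sum, or by an Edgeworth expansion) as $\tfrac{1}{2}\log(2\pi e\,\sigma^2(k,m))+o(1)$, where $\sigma^2(k,m)=(n-m)\tfrac{k(n-k)}{n^2}\tfrac{m}{n-1}$ is $\Theta(n)$ whenever $k,n-k$ are of order $n$. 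The difference is $\tfrac{1}{2}\log n+O(1)$, as required.

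The main obstacle is to make this bound uniform over all encoder-induced laws $P_K$, since an adversarial encoder can concentrate $K$ near the boundary of $\{0,\ldots,n\}$ and thereby shrink the lower bound on the hypergeometric entropy. Fortunately the two estimates scale compatibly: if $\mathrm{supp}(P_K)\subseteq\{0,\ldots,n^\alpha\}$ for some $\alpha\in(0,1]$, then $N_1\leq K\leq n^\alpha$ forces $H(N_1\mid N_E=m)\leq\alpha\log n+O(1)$, while the hypergeometric variance for typical $k\asymp n^\alpha$ is $\Theta(n^\alpha)$, giving $H(\mathrm{Hyper})\geq\tfrac{\alpha}{2}\log n+O(1)$; the difference $\tfrac{\alpha}{2}\log n\leq\tfrac{1}{2}\log n$ thus remains within budget. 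A dyadic partition of the $K$-axis into $O(\log n)$ scales, combined with the trivial bound $I(K;N_1\mid N_E)\leq H(K)$ to absorb the conditioning loss when $P_K$ spreads across several scales, extends the estimate to arbitrary $P_K$ and closes the converse. Making the hypergeometric local CLT precise uniformly in these parameter regimes---or equivalently, replacing it by the direct Gaussian-channel comparison $N_1\approx(1-\eta)K+Z$ with $\mathrm{Var}(Z\mid K)=\Theta(n)$ and invoking the capacity formula for the additive-noise channel---is the principal technical task.
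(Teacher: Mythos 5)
This statement is labeled a \emph{conjecture} in the paper precisely because the authors do not have a proof; what they offer after Proposition~\ref{Prop: Permutation Channel Capacity of q-EC} is exactly the same line of attack you describe. The paper observes that the converse would follow from a lower bound of the form $H(\hat P_{Y_1^n}(1)\mid X_1^n,\hat P_{Y_1^n}(\mathsf{E}))\geq \frac{1}{2}\log n + o(\log n)$, and explicitly states that proving this ``corresponds to analyzing the Shannon entropy of hypergeometric distributions in non-trivial regimes of their parameters.'' Your reduction $I(X_1^n;\hat P_{Y_1^n})=I(K;N_1\mid N_E)$ with $N_1\mid K,N_E\sim\mathrm{Hyper}(n,K,n-N_E)$ is the correct and intended decomposition, and your identification of the hypergeometric local CLT as ``the principal technical task'' agrees with where the paper locates the difficulty.

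However, as you concede yourself, the proposal does not close the gap. The needed ingredient is a lower bound $H(\mathrm{Hyper}(n,k,n-m))\geq\frac{1}{2}\log\!\big(2\pi e\,\sigma^2(k,m)\big)-O(1)$ that is \emph{uniform} over all $(k,m)$ with bounded-away-from-zero variance, i.e.\ a hypergeometric analogue of Lemma~\ref{Lemma: Approximation of Binomial Entropy}, and no such estimate is proved or cited. Without it the chain $H(N_1\mid N_E=m)-\E_K[H(\mathrm{Hyper})]\leq\frac{1}{2}\log n+O(1)$ is only heuristic. The dyadic bucketing of $P_K$ is a sensible device for the uniformity-over-encoders issue, but as written it has two loose ends: (i) you only discuss $K$ small, whereas the symmetric regime $K\approx n$ must be handled identically; and (ii) ``the trivial bound $I(K;N_1\mid N_E)\leq H(K)$ to absorb the conditioning loss'' is not sufficient, since $H(K)$ can be as large as $\log(n+1)$ --- a more careful chain-rule argument, e.g.\ $I(K;N_1\mid N_E)\leq H(J)+I(K;N_1\mid J,N_E)$ where $J$ is the dyadic block index of $K$ so that $H(J)=O(\log\log n)$, would be needed, and even then the per-block bound still rests on the unproved hypergeometric entropy estimate. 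So the proposal correctly reproduces the paper's intended strategy and adds a plausible scaffolding for the encoder-uniformity issue, but it is a plan rather than a proof, and the conjecture remains open.
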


Specifically, we believe that the achievability bound presented in Proposition \ref{Prop: Permutation Channel Capacity of q-EC} is tight. Consequently, unlike traditional channel capacity, we believe that the noisy permutation channel capacities of BSCs and BECs are equal in the non-trivial regimes of their parameters. Indeed, the converse bound in Proposition \ref{Prop: Permutation Channel Capacity of q-EC} for the case $q = 2$, $\Cperm(\BEC(\eta)) \leq 1$, is intuitively trivial as there are only $n+1$ distinct empirical distributions of codewords in $\{0,1\}^n$. So, we anticipate that this bound can be significantly tightened. One approach towards tightening the converse bound would be to consider the mutual information bound in \eqref{Eq: BEC Pre-Upper bound on MI}, and much like the proof of Theorem \ref{Thm: Converse Bound I} in subsection \ref{Converse Bounds for Strictly Positive DMCs}, derive a lower bound on $H(\hat{P}_{Y_1^n}(1)|X_1^n,\hat{P}_{Y_1^n}(\mathsf{E}))$ of the form
\begin{equation}
\label{Eq: BEC Conditional entropy lower bound}
H(\hat{P}_{Y_1^n}(1)|X_1^n,\hat{P}_{Y_1^n}(\mathsf{E})) \geq \frac{1}{2}\log(n) + o(\log(n)) \, .
\end{equation}
Clearly, combining \eqref{Eq: Fano step}, \eqref{Eq: BEC Pre-Upper bound on MI}, and \eqref{Eq: BEC Conditional entropy lower bound} would yield the desired bound $\Cperm(\BEC(\eta)) \leq \frac{1}{2}$. As explained in \cite[Equation (26)]{Makur2018}, finding a bound of the form \eqref{Eq: BEC Conditional entropy lower bound} corresponds to analyzing the Shannon entropy of hypergeometric distributions in non-trivial regimes of their parameters.  

Thirdly, when $q \geq 3$, we also postulate that for any $\eta \in (0,1)$,
\begin{equation}
\label{Eq: Conjecture 3}
\Cperm(\qEC(\eta)) \leq \frac{q}{2} \, .
\end{equation}
While this upper bound does not exactly determine the noisy permutation channel capacity of $q$-ary erasure channels, it does have the following useful corollaries (if proven to be true):
\begin{enumerate}
\item In the limit of asymptotically large input alphabet size (i.e., as $q \rightarrow \infty$), the noisy permutation channel capacity of $q$-ary erasure channels is characterized by
\begin{equation}
\forall \eta \in (0,1), \enspace \lim_{q \rightarrow \infty}{\frac{\Cperm(\qEC(\eta))}{q}} = \frac{1}{2} \, .
\end{equation}
\item Applying part 2 of Theorem \ref{Thm: Comparison Bounds via Degradation}, if a DMC $P_{Z|X}$ satisfies the Doeblin minorization condition, then we obtain the converse bound
\begin{equation}
\Cperm(P_{Z|X}) \leq \frac{|\X|}{2} \, . 
\end{equation}
This bound is clearly weaker than that in Theorem \ref{Thm: Converse Bound II} for strictly positive DMCs. However, it also holds for certain DMCs that have zero entries, and therefore, extends Theorem \ref{Thm: Converse Bound II} for such DMCs. 
\end{enumerate}
We believe \eqref{Eq: Conjecture 3} could be true, because we can upper bound the mutual information term in \eqref{Eq: Fano step} so that
\begin{align}
I(X_1^n;\hat{P}_{Y_1^n}) & \overset{\eqmakebox[G][c]{}}{=} H(\hat{P}_{Y_1^n}) - H(\hat{P}_{Y_1^n}|X_1^n) \nonumber \\
& \overset{\eqmakebox[G][c]{\footnotesize (a)}}{\leq} q \log(n + 1) - H(\hat{P}_{Y_1^n}(1),\dots,\hat{P}_{Y_1^n}(q)|X_1^n) \nonumber \\
& \overset{\eqmakebox[G][c]{\footnotesize (b)}}{=} q \log(n + 1) - H(N_\mathsf{E}(1),\dots,N_\mathsf{E}(q)|X_1^n) \nonumber \\
& \overset{\eqmakebox[G][c]{\footnotesize (c)}}{=} q \log(n + 1) - \sum_{i = 1}^{q}{H(N_\mathsf{E}(i)|X_1^n)} \nonumber \\
& \overset{\eqmakebox[G][c]{\footnotesize (d)}}{=} q \log(n + 1) - \sum_{i = 1}^{q}{\E\!\left[H(\bin(n \hat{P}_{X_1^n}(i),\eta))\right]} \, , 
\label{Eq: Another MI Bound}
\end{align}
where (a) follows from the bound in \eqref{Eq: Number of Emp Dists}, the fact that $\hat{P}_{Y_1^n}$ sums to unity, and by letting $\X = \{1,\dots,q\}$ without loss of generality, (b) follows from defining the number of erasures that occur on the input symbol $i \in \X$ as the random variable $N_\mathsf{E}(i) \triangleq n \hat{P}_{X_1^n}(i) - n \hat{P}_{Y_1^n}(i) \in \{0,\dots,n \hat{P}_{X_1^n}(i)\}$, (c) holds because $\{N_\mathsf{E}(i) : i \in \X\}$ are conditionally independent given $X_1^n$, (d) holds because each $N_\mathsf{E}(i)$ is a binomial random variable with $n \hat{P}_{X_1^n}(i)$ trials and success probability $\eta$ conditioned on $X_1^n$, and each term $\E\big[H(\bin(n \hat{P}_{X_1^n}(i),\eta))\big]$ represents the expectation of a binomial entropy with respect to the distribution of $X_1^n$. If it can be shown that any encoder, which achieves vanishing probability of error for rates ``close to'' the noisy permutation channel capacity, must satisfy $\hat{P}_{X_1^n}(i) \geq \alpha$ for all $i \in \X$ for some constant $\alpha \in (0,1)$ with high probability, then \eqref{Eq: Fano step}, \eqref{Eq: Another MI Bound}, and Lemma \ref{Lemma: Approximation of Binomial Entropy} would yield \eqref{Eq: Conjecture 3}. However, proving such a lower bound on $\hat{P}_{X_1^n}$ appears to be challenging (if at all possible), since we essentially have to develop a ``probabilistic pigeonhole principle'' to argue that ``good'' encoders need to utilize all the symbols in $\X$ significantly.

\section{Conclusion}
\label{Conclusion}

In closing, we first briefly reiterate our main contributions. Propelled by existing literature in coding theory, communication networks, and molecular and biological communications, we formulated the information theoretic notion of noisy permutation channel capacity for the problem of reliably transmitting information through a noisy permutation channel, i.e., a DMC followed by an independent random permutation transformation. We then derived achievability and converse bounds on noisy permutation channel capacities in Theorems \ref{Thm: Achievability Bound}, \ref{Thm: Converse Bound I}, and \ref{Thm: Converse Bound II} (as well as in \eqref{Eq: General Combined Converse Bound}). These results gave rise to an exact characterization of the noisy permutation channel capacity of strictly positive and full rank DMCs in Theorem \ref{Thm: Strictly Positive and Full Rank Channels}. Furthermore, in our effort to prove these results and acquire a deeper understanding of noisy permutation channel capacity, we elucidated a simple construction of symmetric channels that dominate given DMCs in the degradation sense in Proposition \ref{Prop: Degradation by Symmetric Channels}, and established an intuitive monotonicity relation between noisy permutation channel capacity and degradation in Theorem \ref{Thm: Comparison Bounds via Degradation}.

We next propose some directions for future research. Evidently, addressing any of the open problems explicated in Conjectures \ref{Conj: Strictly Positive DMC Conjecture}, \ref{Conj: BEC Conjecture}, and \eqref{Eq: Conjecture 3} is an excellent starting point to furthering this line of work. After these conjectures are resolved, our ultimate objective is to establish the noisy permutation channel capacity of general DMCs (whose row stochastic matrices can have zero entries). We remark that determining the noisy permutation channel capacities of DMCs with zero entries appears to be more intractable than strictly positive DMCs, because zero entries introduce a combinatorial flavor to the problem.\footnote{This combinatorial aspect of the problem is similar to (but not exactly the same as) the zero error capacity problem, cf. \cite{Shannon1956}. It is well-known that calculating the zero error capacity of channels is very challenging, and the best known approaches use semidefinite programming relaxations such as the Lov\'{a}sz $\vartheta$ function, cf. \cite{Lovasz1979}.} So, completely settling the noisy permutation channel capacity question for general DMCs is likely to be quite challenging. Finally, there are several other open questions that parallel aspects of classical information theoretic development such as:
\begin{enumerate}
\item Finding tight bounds on the average probability of error (akin to classical \textit{error exponent analysis}), cf. \cite[Chapter 5]{Gallager1968}.
\item Developing \textit{strong converse} results, cf. \cite[Section 22.1]{PolyanskiyWu2017Notes}, \cite[Theorem 5.8.5]{Gallager1968}.
\item Establishing \textit{exact asymptotics} for the maximum achievable value of $|\M|$ (akin to ``finite blocklength analysis''), cf. \cite{PolyanskiyPoorVerdu2010}, \cite[Chapter II.4]{Tan2014}, and the references therein.
\item Extending the noisy permutation channel model by replacing DMCs with other kinds of memoryless channels or networks, e.g., additive white Gaussian noise (AWGN) channels or multiple-access channels (MACs), and by using more general or ``realistic'' algebraic operations that are applied to the output codewords, e.g., random permutations that belong to subgroups of the symmetric group. (For example, when modeling out-of-order delivery of packets in a communication network, all permutations of the packets are not equally likely; indeed, the first two transmitted packets are quite likely to arrive swapped at the receiver, but the first and last transmitted packets are very unlikely to change their relative ordering.) 
\end{enumerate} 
Altogether, our main results and these future directions illustrate that the study of noisy permutation channel capacity begets a fairly rich, relevant, and seemingly solvable class of new problems.

\appendices

\section{Proof of Proposition \ref{Prop: Degradation by Symmetric Channels}}
\label{Proof of Proposition Degradation by Symmetric Channels}

\begin{proof}
To prove this result, we seek to find $\qSC(\delta)$'s with $\delta \in \big[0,\frac{q-1}{q}\big]$ such that $P_{Z|X}$ is a degraded version of $\qSC(\delta)$. Indeed, it is straightforward to see that the upper bound on $\delta$ in the proposition statement satisfies
\begin{equation}
\label{Eq: Upper Bound Constraint}
\frac{\nu}{1 - \nu + \frac{\nu}{q-1}} \leq \frac{q-1}{q} \, ,  
\end{equation}
because \eqref{Eq: Upper Bound Constraint} is equivalent to
$$ \frac{\nu (q-1)}{(q-1) - \nu (q-2)} \leq \frac{q-1}{q} \quad \Leftrightarrow \quad \nu \leq \frac{1}{2} $$
for any $q \in \N\backslash\!\{1\}$, and the latter bound is always true since $|\Y| \geq 2$. Furthermore, we have equality in \eqref{Eq: Upper Bound Constraint} if and only if $\nu = \frac{1}{2}$, which happens precisely when $|\Y| = 2$ and all rows of $P_{Z|X}$ are equal to $\big(\frac{1}{2},\frac{1}{2}\big)$. In this case, $P_{Z|X} = S_{\delta} P_{Z|X}$ for all $\delta \in [0,1]$, and the sufficient condition for degradation in the proposition statement holds trivially. So, we will assume without loss of generality that $\nu < \frac{1}{2}$ in the rest of the proof.

To construct $\qSC(\delta)$'s with $\delta \in \big[0,\frac{q-1}{q}\big)$ such that $P_{Z|X}$ is a degraded version of $\qSC(\delta)$, we must ensure that $P_{Z|X} = S_{\delta} Q$ for some row stochastic matrix $Q \in \R^{q \times |\Y|}$. Equivalently, $S_{\delta}^{-1} P_{Z|X}$ must be a row stochastic matrix. A direct calculation yields $S_{\delta}^{-1} = S_{\tau}$ with (cf. \cite[Proposition 4]{MakurPolyanskiy2018})
\begin{equation}
\label{Eq: Inverse Parameter}
\tau = \frac{-\delta}{1 - \delta - \frac{\delta}{q-1}} \, , 
\end{equation}
i.e., $S_{\delta}^{-1} = S_{\tau}$ has the structure shown in \eqref{Eq: q-SC Matrix} (but with $\tau$ replacing $\delta$). Since the rows of $S_{\tau}$ sum to unity, the rows of $S_{\delta}^{-1} P_{Z|X} = S_{\tau} P_{Z|X}$ also sum to unity. Thus, it suffices to verify that the minimum entry of $S_{\tau} P_{Z|X}$ is non-negative. 

For $\delta \in \big[0,\frac{q-1}{q}\big)$, we have $\tau \leq 0$, which means that the principal diagonal elements of $S_{\tau}$ are at least unity, and the off-diagonal elements of $S_{\tau}$ are non-positive. Hence, the minimum entry of $S_{\tau} P_{Z|X}$ is lower bounded by
\begin{align*}
\min_{\substack{i \in \{1,\dots,q\}\\j\in \{1,\dots,|\Y|\}}}{\left[ S_{\tau} P_{Z|X} \right]_{i,j}} & \geq (1-\tau)\nu + \tau (1-\nu) \\
& = \frac{\nu - \delta\!\left(1 - \nu + \frac{\nu}{q-1}\right)}{1 - \delta - \frac{\delta}{q-1}} \, ,
\end{align*}
where the inequality uses the fact that the maximum entry of $P_{Z|X}$ is upper bounded by $1-\nu$ (because $P_{Z|X}$ is a stochastic matrix), and the equality follows from substituting \eqref{Eq: Inverse Parameter}. So, a sufficient condition that ensures that the minimum entry of $S_{\tau} P_{Z|X}$ is non-negative is
$$ \frac{\nu - \delta\!\left(1 - \nu + \frac{\nu}{q-1}\right)}{1 - \delta - \frac{\delta}{q-1}} \geq 0 \quad \Leftrightarrow \quad \delta \leq \frac{\nu}{1 - \nu + \frac{\nu}{q-1}} \, . $$
This completes the proof.
\end{proof}

\section{Proof of Lemma \ref{Lemma: Second Moment Method}}
\label{Proof of Lemma Second Moment Method}

\begin{proof}
For the binary hypothesis problem in \eqref{Eq: BHT Problem}, define the ``translated empirical distribution of $X_1^n$'' random vector
\begin{equation}
\label{Eq: SS}
T_n \triangleq \hat{P}_{X_1^n} - C_n \in \mathcal{T}_{n} \, , 
\end{equation}
where the constant vector $C_n = (c_1,\dots,c_{|\X|}) \in \R^{|\X|}$ (which can depend on $n$) will be chosen later, and $\mathcal{T}_n = \big\{\big(\frac{k_1}{n} - c_1,\dots,\frac{k_{|\X|}}{n} - c_{|\X|}\big) : k_1,\dots,k_{|\X|} \in \N\cup\!\{0\}, \, k_1 + \cdots + k_{|\X|} = n\big\}$. Moreover, for ease of exposition, let $T_n^-$ and $T_n^+$ denote versions of the random variable $T_n$ with probability distributions $P_{T_n}^-$ and $P_{T_n}^+$ induced by $P_X^{\otimes n}$ and $Q_X^{\otimes n}$, respectively, such that
\begin{align*}
P_{T_n}^-(t) & \triangleq \P\!\left(T_n^- = t\right) = \P\!\left(\hat{P}_{X_1^n} = t + C_n \middle|H = 0\right) , \\
P_{T_n}^+(t) & \triangleq \P\!\left(T_n^+ = t\right) = \P\!\left(\hat{P}_{X_1^n} = t + C_n \middle|H = 1\right) ,
\end{align*}
for all $t \in \mathcal{T}_{n}$. Then, we clearly have
$$ P_{T_n} = \frac{1}{2} P_{T_n}^{-} + \frac{1}{2} P_{T_n}^{+} \, . $$
It is straightforward to verify that $T_n$ is a sufficient statistic of $X_1^n$ for performing inference about $H$. So, the ML decoder of $H$ based on $X_1^n$, $\hat{H}_{\mathsf{ML}}^n(X_1^n)$, is a function of $T_n$ without loss of generality (see \eqref{Eq: ML decoder}), and we denote it as $\hat{H}_{\mathsf{ML}}^n: \mathcal{T}_n \rightarrow \{0,1\}$, $\hat{H}_{\mathsf{ML}}^n(T_n)$ with abuse of notation. Thus, we have $P_{\mathsf{ML}}^{(n)} = \P(\hat{H}_{\mathsf{ML}}^n(T_n) \neq H)$, and \eqref{Eq: Le Cam relation} implies that
\begin{equation}
\label{Eq: TV Equality}
\left\|P_X^{\otimes n}  - Q_X^{\otimes n} \right\|_{\mathsf{TV}} = \left\|P_{T_n}^+ - P_{T_n}^-\right\|_{\mathsf{TV}} . 
\end{equation}
It therefore suffices to lower bound the right hand side.

Similar to the proof of \cite[Lemma 4.2(iii)]{Evansetal2000}, observe that
\begin{align}
& \left\|\E\!\left[T_n^+\right] - \E\!\left[T_n^-\right]\right\|_2^2 \nonumber \\
& \overset{\eqmakebox[A][c]{}}{=} \left\|\sum_{t \in \mathcal{T}_n}{t \left(P_{T_n}^+(t) - P_{T_n}^-(t)\right)}\right\|_2^2 \nonumber \\
& \overset{\eqmakebox[A][c]{\footnotesize (a)}}{=} \sum_{i = 1}^{|\X|}{\left(\sum_{t \in \mathcal{T}_n}{\left(\frac{P_{T_n}^+(t) - P_{T_n}^-(t)}{\sqrt{P_{T_n}(t)}}\right) t_i \sqrt{P_{T_n}(t)}}\right)^{\!2}} \nonumber \\
& \overset{\eqmakebox[A][c]{\footnotesize (b)}}{\leq} 4 \underbrace{\left(\frac{1}{4} \sum_{t \in \mathcal{T}_n}{\frac{\left(P_{T_n}^+(t) - P_{T_n}^-(t)\right)^{\!2}}{P_{T_n}(t)}}\right)}_{\triangleq \, \mathsf{LC}(P_{T_n}^+||P_{T_n}^-)} \!\left(\sum_{i = 1}^{|\X|}{\sum_{t \in \mathcal{T}_n}{t_i^2 P_{T_n}(t)}}\right) \nonumber \\
& \overset{\eqmakebox[A][c]{}}{=} 4 \, \mathsf{LC}\big(P_{T_n}^+ \big|\big|P_{T_n}^-\big) \,  \E\!\left[\left\|T_n\right\|_2^2\right] 
\label{Eq: HCR Bound}
\\ 
& \overset{\eqmakebox[A][c]{\footnotesize (c)}}{\leq} 4 \, \E\!\left[\left\|T_n\right\|_2^2\right] \left\|P_{T_n}^+ - P_{T_n}^-\right\|_{\mathsf{TV}} ,
\label{Eq: CS Bound}
\end{align}
where we let $t = (t_1,\dots,t_{|\X|})$ in (a), (b) follows from the Cauchy-Schwarz-Bunyakovsky inequality, $\mathsf{LC}(\cdot||\cdot)$ denotes the \textit{Vincze-Le Cam distance} or \textit{triangular discrimination} between two probability distributions \cite{Vincze1981,LeCam1986}, and (c) upper bounds Vincze-Le Cam distance using TV distance (via the observation that $\big|P_{T_n}^+(t) - P_{T_n}^-(t)\big| \leq P_{T_n}^+(t) + P_{T_n}^-(t)$ for all $t \in \mathcal{T}_n$). We note that \eqref{Eq: HCR Bound} is precisely a vector version of \cite[Lemma 4.2(iii)]{Evansetal2000}. Hence, combining \eqref{Eq: TV Equality} and \eqref{Eq: CS Bound}, we get
\begin{equation}
\label{Eq: General SMM Bound}
\left\|P_X^{\otimes n}  - Q_X^{\otimes n} \right\|_{\mathsf{TV}} \geq \frac{\left\|\E\!\left[T_n^+\right] - \E\!\left[T_n^-\right]\right\|_{2}^2}{4 \, \E\!\left[\left\|T_n\right\|_{2}^2 \right]} \, . 
\end{equation}

We now select the vector $C_n$. Since the numerator of the bound in \eqref{Eq: General SMM Bound} is invariant to the value of $C_n$, the best bound of the form \eqref{Eq: General SMM Bound} is obtained by minimizing the second moment $\E\big[\big\|T_n\big\|_2^2\big]$. Thus, $C_n$ is given by
\begin{equation}
\label{Eq: Constant Shift}
C_n = \E\!\left[\hat{P}_{X_1^n}\right] = \frac{1}{2} P_X + \frac{1}{2} Q_X \, ,
\end{equation}
using the binary hypothesis testing model \eqref{Eq: BHT Problem}, where we employ the well-known fact that mean-squared error is minimized by the mean (see, e.g., \cite[Section 1.7, Example 7.17]{LehmannCasella1998}). With this choice of $C_n$, notice that
\begin{align*}
\E\!\left[T_n^-\right] & = \frac{1}{2} P_X - \frac{1}{2} Q_X \, , \\
\E\!\left[T_n^+\right] & = \frac{1}{2} Q_X - \frac{1}{2} P_X \, , \\
\E\!\left[\left\|T_n\right\|_2^2\right] & = \sum_{x \in \X}{\VAR\big(\hat{P}_{X_1^n}(x)\big)} \, .
\end{align*}
Using these expressions, we can simplify the second moment method bound in \eqref{Eq: General SMM Bound} and obtain the bound in the lemma statement.
\end{proof} 

We remark that with the choice of $C_n$ in \eqref{Eq: Constant Shift}, \eqref{Eq: HCR Bound} can be perceived as a vector version of the HCR bound in statistics \cite{Hammersley1950,ChapmanRobbins1951}, where the Vincze-Le Cam distance replaces the usual $\chi^2$-divergence.

\section{Proof of Lemma \ref{Lemma: Testing between Converging Hypotheses}}
\label{Proof of Lemma Testing between Converging Hypotheses}

\begin{proof}
To upper bound the ML decoding probability of error $P_{\mathsf{ML}}^{(n)}$, we combine \eqref{Eq: Le Cam relation} and Lemma \ref{Lemma: Second Moment Method} to get
\begin{equation}
\label{Eq: Bound on Probability of Error}
P_{\mathsf{ML}}^{(n)} \leq \frac{1}{2}\left(1 - \frac{\left\|P_X - Q_X\right\|_{2}^2}{\displaystyle{4 \sum_{x \in \X}{\VAR\big(\hat{P}_{X_1^n}(x)\big)}}}\right) . 
\end{equation}
We now compute the right hand side of this bound explicitly. Observe using \eqref{Eq: Constant Shift} that for every $x \in \X$,
\begin{align*}
\VAR\big(\hat{P}_{X_1^n}(x)\big) & = \frac{1}{n^2} \, \E\!\left[ \left(\sum_{i = 1}^{n}{\I\{X_i = x\}}\right)^{\! 2}\right] \\
& \quad \, - \left(\frac{P_X(x) + Q_X(x)}{2}\right)^{\! 2} \\
& = \frac{P_X(x) + Q_X(x)}{2 n} - \left(\frac{P_X(x) + Q_X(x)}{2}\right)^{\! 2} \\
& \quad \, + \frac{1}{n^2} \sum_{\substack{1 \leq i,j \leq n\\i \neq j}}{\E\!\left[\I\{X_i = x\} \I\{X_j = x\}\right]} \\
& = \frac{P_X(x) + Q_X(x)}{2 n} - \left(\frac{P_X(x) + Q_X(x)}{2}\right)^{\! 2} \\
& \quad \, + \left(\frac{n-1}{2 n}\right) \!\left(P_X(x)^2 + Q_X(x)^2\right) \\
& = \frac{P_X(x) \left(1 - P_X(x)\right)}{2 n} \\
& \quad \, + \frac{Q_X(x) \left(1 - Q_X(x)\right)}{2 n} \\
& \quad \, + \frac{\left(P_X(x) - Q_X(x)\right)^2}{4} \\
& \leq \frac{1}{4n} + \frac{\left(P_X(x) - Q_X(x)\right)^2}{4} \, ,
\end{align*}
where the equalities follow from straightforward algebraic manipulations, and the final inequality holds because $t (1 - t) \leq \frac{1}{4}$ for all $t \in [0,1]$. Plugging this inequality into \eqref{Eq: Bound on Probability of Error} yields
\begin{align*}
P_{\mathsf{ML}}^{(n)} & \leq \frac{1}{2}\left(1 - \frac{\left\|P_X - Q_X\right\|_{2}^2}{\frac{|\X|}{n} + \left\|P_X - Q_X \right\|_2^2}\right) \\
& = \frac{|\X|}{2 |\X| + 2 n \left\|P_X - Q_X \right\|_2^2} \\
& \leq \frac{|\X|}{2 |\X| + 2 n^{2\epsilon_n}} \, ,
\end{align*}
where the final inequality follows from applying \eqref{Eq: 2-norm condition}. This completes the proof. 
\end{proof}

\section{Proof of Lemma \ref{Lemma: Doeblin Minorization and Degradation}}
\label{Proof of Lemma Doeblin Minorization and Degradation}

\begin{proof} ~\newline
\indent
\underline{Part 1:} For the convenience of readers unfamiliar with the notion of \textit{iterated random maps}, we translate the proofs of \cite[Theorem 3.1, Proposition 4.1]{BhattacharyaWaymire2001} into information theoretic language. (We also refer readers to \cite[Remark III.2]{Raginsky2016}, which shows the forward direction.) 

Suppose $P_{Z|X}$ satisfies $\mathsf{Doeblin}(Q_Z,\eta)$. Then, construct the channel $P_{Z|X^{\prime}}$ with input alphabet $\X \cup\{\mathsf{E}\}$ and output alphabet $\Y$ such that
$$ P_{Z|X^{\prime}}(z|x) = 
\begin{cases}
\displaystyle{\frac{P_{Z|X}(z|x) - \eta Q_Z(z)}{1 - \eta}} , & \text{for } x \in \X \\
Q_Z(z) , & \text{for } x = \mathsf{E}
\end{cases}
$$
for all $z \in \Y$ and $x \in \X \cup\{\mathsf{E}\}$, where $P_{Z|X}(z|x) - \eta Q_Z(z) \geq 0$ due to Definition \ref{Def: Doeblin Minorization}, and $\sum_{z \in \Y}{P_{Z|X}(z|x) - \eta Q_Z(z)} = 1 - \eta$. It follows via a direct calculation that $P_{Z|X} = \qEC(\eta) \cdot P_{Z|X^{\prime}}$ (i.e., $P_{Z|X}$ is the product of the stochastic matrices $\qEC(\eta)$ and $P_{Z|X^{\prime}}$), which means that $P_{Z|X}$ is a degraded version of $\qEC(\eta)$.

To prove the reverse direction, suppose $P_{Z|X}$ is a degraded version of $\qEC(\eta)$. Then, using Definition \ref{Def: Degradation Preorder}, there exists a channel $P_{Z|X^{\prime}}$ with input alphabet $\X \cup\!\{\mathsf{E}\}$ and output alphabet $\Y$ such that $P_{Z|X} = \qEC(\eta) \cdot P_{Z|X^{\prime}}$. Hence, it is straightforward to show that for every $x \in \X$ and $y \in \Y$,
\begin{align*}
P_{Z|X}(z|x) & = (1-\eta) P_{Z|X^{\prime}}(z|x) + \eta P_{Z|X^{\prime}}(z|\mathsf{E}) \\
& \geq \eta P_{Z|X^{\prime}}(z|\mathsf{E}) \, ,
\end{align*}
where the inequality holds because $(1-\eta) P_{Z|X^{\prime}}(z|x) \geq 0$. Thus, employing Definition \ref{Def: Doeblin Minorization}, this implies that $P_{Z|X}$ satisfies $\mathsf{Doeblin}(P_{Z|X^{\prime}}(\cdot|\mathsf{E}),\eta)$. This completes the proof of part 1. 
 
\underline{Part 2:} We refer readers to \cite[Lemma 4]{GohariGunluKramer2019} for a proof of this part. (It is worth juxtaposing $\eta_{*}(P_{Z|X})$ with \cite[Equations (58) and (102)]{MakurZheng2020}, which state that contraction coefficients of operator convex $f$-divergences characterize the extremal erasure probability $\eta$ such that $P_{Z|X}$ is dominated by a $\qEC(\eta)$ in the ``less noisy'' sense; see \cite{MakurZheng2020} for details.)  
\end{proof}

% References section
\bibliographystyle{IEEEtran}
\bibliography{finalrefs}

\end{document}